\newif\ifanonymous
\anonymousfalse

\documentclass[11pt,letterpaper]{article}

\usepackage{url}
\usepackage{fullpage}

\usepackage[subtle,mathspacing=normal]{savetrees}
\usepackage{tikz}
\usetikzlibrary{decorations.pathreplacing}
\usepackage{graphicx}

\usepackage{amsmath}
\usepackage{algorithm}
\usepackage[noend]{algpseudocode}
\usepackage{amssymb}
\usepackage{amsthm} %
\usepackage{color}
\usepackage{array}
\usepackage{xy}
\usepackage{setspace}
\usepackage{varwidth}%
\usepackage{multicol}
\usepackage{algorithm}
\usepackage{hyperref}
\providecommand{\email}[1]{\href{mailto:#1}{#1}} %
\usepackage{thm-restate}
\usepackage[disable]{todonotes}

\newcommand{\poly}{\operatorname{poly}}

 \newcommand{\E}{\mathbb{E}}

\newtheorem{thm}{Theorem}[section]

\theoremstyle{remark}

\newtheorem{theorem}{Theorem}[section]

\newtheorem{lemma}[thm]{Lemma}

\newtheorem{corollary}[thm]{Corollary}

\theoremstyle{remark}

\newcommand{\defn}[1]{\textbf{\emph{#1}}}
\renewcommand{\paragraph}[1]{\vspace{.2 cm} \noindent \textbf{#1}}

\renewcommand{\epsilon}{\varepsilon}
\renewcommand{\emptyset}{\varnothing}

\newcommand{\clip}{\operatorname{clip}}

\begin{document}
\title{Optimal Worst-Case Open Addressing: Braided Cuckoo Hashing}
\hypersetup{pdftitle={Optimal Worst-Case Open Addressing: Braided Cuckoo Hashing}}
\ifanonymous
\author{Anonymous}
\hypersetup{pdfauthor={}}
\else
\author{Andrew Krapivin\thanks{Carnegie Mellon University (\email{andrew@krapivin.net}).}
\and William Kuszmaul\thanks{Carnegie Mellon University (\email{kuszmaul@cmu.edu}).}}
\hypersetup{pdfauthor={Andrew Krapivin, William Kuszmaul}}
\fi
\date{}
\maketitle

\begin{abstract}
    In this paper, we introduce braided cuckoo hashing, an open-addressed hash table that supports load factor up to $1 - \epsilon$ (for some parameter $\epsilon \in (0, 1)$), while supporting queries, insertions, and deletions in \emph{worst-case} time $O(\log \epsilon^{-1})$, with high probability. This result represents an exponential improvement on insertion/deletion time over previous constructions, and disproves a conjecture of Bender, Kuszmaul, and Zhou that the product of worst-case query time and expected insertion/deletion time must be $\Omega(\epsilon^{-1})$.
\end{abstract}

\section{Introduction}
First popularized in the early 2000s \cite{PR01,FPSS05,DW07}, \defn{cuckoo hashing} is a method of implementing an open-addressed hash table with \emph{worst-case} time bounds for queries. In its simplest form, cuckoo hashing maintains the invariant that each element $x$ is in one of two random slots $h_1(x) \in [1, m/2]$ and $h_2(x) \in (m/2, m]$ in an array of size $m$. To query whether an element $x$ is present, one simply checks the slots $h_1(x)$ and $h_2(x)$. To insert an element $x$, one performs a \defn{kickout chain}, placing $x$ in position $h_1(x)$, evicting any element $x'$ in position $h_1(x)$ to its other viable slot $h_2(x')$, evicting any element $x''$ in position $h_2(x')$ to its other viable slot $h_1(x'')$, and so on, until the chain terminates at a free slot. Finally, to delete an element $x$, one simply removes it from the table.

It is not clear, \emph{a priori}, whether insertions in a cuckoo hash table should even succeed. (The eviction chain might enter an infinite loop.) However, in a celebrated result of Pagh and Rodler \cite{PR01}, it was shown that, so long as the hash table is less than a $1/2 - \Omega(1)$ fraction full, each insertion succeeds with probability $1 - O(1/n^2)$, and the expected insertion time is $O(1)$.

To support higher load factors, researchers have considered extensions of cuckoo hashing in which each query checks $d > 2$ slots $h_1(x), \ldots, h_d(x)$ in the hash table. These slots can be selected in a variety of ways, for example by using $d$ independent hash functions $h_1, \ldots, h_d$ \cite{FPSS05}, by breaking slots into buckets and hashing to buckets whose sizes sum to $d$ \cite{DW07}, or by other more elaborate schemes \cite{FPSS05,deamortizedcuckoo,stashcuckoo,bender2023iceberg,walzer2023load}. To support load factor $1 - \epsilon$, the optimal value of $d$, and indeed the optimal worst-case query time in any open-addressed hash table, is known to be $\Theta(\log \epsilon^{-1})$. 

This raises a natural question: If we wish to support worst-case query time $\Theta(\log \epsilon^{-1})$, what is the best insertion/deletion time that we can hope for? Early work on this question \cite{FPSS05,DW07} led to bounds which were super-polynomial in $\epsilon^{-1}$ (and in some cases, a function of $n$ \cite{PolylogAlan,FPS13}), but conjectured that a bound of the form $O(\epsilon^{-1})$ should be possible \cite{FPSS05,DW07}. This was finally achieved in 2025 by Kuszmaul and Mitzenmacher \cite{KM25}, who showed how to achieve an expected insertion/deletion time of $O(\epsilon^{-1})$.

There is good reason to believe, intuitively, that a bound of $\tilde{O}(\epsilon^{-1})$ should be optimal. Consider, in particular, a long-running workload that alternates between insertions and deletions at load factor $1 - \epsilon$. To a first approximation, each deletion creates a ``hole'' in a uniformly random slot. Insertions then need to ``track down'' these holes in order to fill them. But each time an insertion probes $h_i(x)$ for some element $x$ and position $i$, it has probability at most $O(\epsilon)$ of finding such a hole. Even if the element can somehow check all of $h_1(x), h_2(x), \ldots, h_d(x)$ at once (for example, if they are arranged in a bucket), this still indicates a time bound of $\Omega(\epsilon^{-1} / d) = \Omega(\epsilon^{-1} / \log \epsilon^{-1})$. In general, Bender, Kuszmaul, and Zhou \cite{bender2024tight} conjectured that the product of worst-case query time and expected insertion/deletion time must be $\Omega(\epsilon^{-1})$ for any open-addressed hash table.

\paragraph{This paper: Optimal worst-case open addressing. }We present an open-addressed hash table, called \defn{braided cuckoo hashing}, that supports worst-case time $\Theta(\log \epsilon^{-1})$ for \emph{all of} queries, insertions, and deletions, with high probability. This represents an exponential improvement on insertion/deletion time over the bound of $\tilde{\Theta}(\epsilon^{-1})$ that was previously conjectured to be optimal.

\begin{restatable}[Main Theorem]{theorem}{maintheorem}
    \label{thm:main}
    Let $n^{-1/4} \le \epsilon \le 1/2$.
    There is an open-addressed hash table using $(1 + \epsilon)n$ slots and
    $O(\log n)$ machine words of metadata, that offers the following
    guarantee for any oblivious polynomial-length insertion/deletion sequence in which the number of live keys
    never exceeds $n$: With high probability in $n$, every operation succeeds and every
    insertion and deletion takes worst-case time $O(\log\epsilon^{-1})$; additionally, queries take worst-case time $O(\log\epsilon^{-1})$.
\end{restatable}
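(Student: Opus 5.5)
The construction will be a \emph{braided} multi-level cuckoo structure. The table is split into $L = \Theta(\log \epsilon^{-1})$ levels of geometrically decreasing size, plus a small backyard. Level $i$ is an array of buckets of constant size, and it holds a $(1 - 2^{-i})$-type fraction of the keys that ``overflow'' from level $i-1$. Each key $x$ has one candidate bucket per level, so a query inspects $O(1)$ slots per level, for $O(L) = O(\log \epsilon^{-1})$ in total. The $O(\log n)$ words of metadata store per-level counters and the hash seeds.

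The key design choice is the braiding. Any key may be moved between its candidate slots in \emph{adjacent} levels, so a hole is never searched for globally. Instead, each deletion at level $i$ ``pulls'' exactly one key down from level $i+1$ into the freed slot. The candidate key is found through a designated pointer structure: a short cuckoo kickout confined to a constant-size neighborhood. Symmetrically, an insertion enters at the top and pushes one key per level downward. Each operation therefore touches $O(1)$ slots per level, which yields worst-case time $O(\log \epsilon^{-1})$ deterministically, \emph{provided} every local step succeeds.

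The proof then proceeds as follows. First, fix the level sizes $m_i$ so that $\sum_i m_i \le (1+\epsilon)n$ and level $i$ has slack $\Theta(2^{-i})$ relative to its load. Second, show that the set of keys in each level is distributed like a uniformly random subset of its candidate buckets, given the history. I would argue this by a symmetry/coupling lemma: the rule that chooses which key moves depends only on positions, not identities, and the input sequence is oblivious. Third, apply standard balls-into-bins and $d$-ary cuckoo bounds (in the style of \cite{DW07,FPSS05}) to each level separately. Because level $i$ has constant-factor slack relative to the local density it needs, each local kickout terminates in $O(1)$ steps with failure probability $\exp(-\Omega(\text{bucket size}))$. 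Fourth, union bound over polynomially many operations and $L$ levels. The deepest levels have size only $\epsilon n \ge n^{3/4}$, which is why the lower bound $\epsilon \ge n^{-1/4}$ is needed for concentration. A final backyard of size $O(n^{1/2})$, handled by a standard deamortized cuckoo table \cite{deamortizedcuckoo}, absorbs rare residual failures w.h.p.

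I expect the main obstacle to be the second step: maintaining near-independence and randomness of the per-level configurations under adversarial, though oblivious, interleavings of insertions and deletions. Worst-case moves correlate levels, and a naive argument lets the history bias which keys sit where. My first attempt would be to define each level's contents as a deterministic \emph{history-independent} function of the current key set, for example a canonical greedy assignment by hash order. Then I would show that the local pull/push rules exactly maintain that canonical form with $O(1)$ changes per level. If this holds, the analysis reduces to a static one on a random key set, and the per-level concentration bounds apply directly.
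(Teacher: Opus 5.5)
There is a genuine gap. It is not in Step 2, which you flag as the main obstacle: history independence does resolve that, much as in the paper's priority-hash warmup. The gap is in Step 3, which is where the paper's real work lies.

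\textbf{The missing idea: static feasibility under local pullbacks.} For a deletion to refill a hole in bucket $q$ of level $i$ by pulling back a key from level $i+1$ in $O(1)$ time, you cannot use a ``designated pointer structure.'' The table is open-addressed and may keep only $O(\log n)$ words of metadata. So the keys eligible for $q$ that sit one level deeper must lie in $O(1)$ buckets; a key's next-level candidate can carry only $O(1)$ bits of entropy beyond its current one. Under such correlations, the standard $d$-ary and bucketized cuckoo bounds you cite do not apply, since they rely on essentially independent choices. The picture of each level as a uniformly random subset is also false. The load forwarded into a bucket is the sum of its predecessors' overflows, which are themselves functions of earlier overflows. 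Nothing in your argument stops these fluctuations from compounding level by level until some buckets overflow and others are starved.

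The paper's key idea is a specific braid that rules this out. Each bin has $3$ successors and $6$ predecessors. Each transition rewrites a fresh bit pair and deletes a fresh trailing bit, so the six predecessors have disjoint ancestor trees and hence independent overflows. Thinning each overflow by $1/3$ multiplies its inherited sub-Gaussian variance by $1/9$. The net effect is a factor $6\cdot\frac19=\frac23<1$ on inherited variance plus fresh $O(B)$, so every overflow stays sub-Gaussian around $B/4$ with variance $O(B)$ in every layer. Without an analogue of this Static Placement Theorem, you have no proof that your structure can store $(1-\epsilon)n$ keys at all, let alone that local kickouts terminate in $O(1)$ steps.

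\textbf{The probability and space accounting does not close either.} With constant-size buckets, each local step fails with constant probability, so no union bound over polynomially many operations is possible. Even with buckets of size $\Theta(\log\epsilon^{-1})$, you get roughly $n\poly(\epsilon)$ failures. For constant $\epsilon$ that is $\Theta(n)$, far more than a backyard of $O(n^{1/2})$ slots can hold. After a failure, your canonical form is also broken in that region, so a static analysis no longer covers later operations there. As written, the slack bound fails too: level $1$ has size $\Theta(n)$, so slack $\Theta(1)$ relative to its load already leaves $\Theta(n)$ empty slots. The paper instead keeps every non-final layer completely full, which is exactly what pullback chains buy, and puts all slack in the last layer.

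The paper handles these issues in several steps:
\begin{itemize}
\item it makes the basic structure succeed with high probability in its own size, using bins of size $\Theta(\log N)$, and instantiates it on mini-structures of size $N=\epsilon^{-5}$;
\item it diverts timed-out operations to a backyard of $\epsilon n$ slots;
\item it shows that only $O(n/N^3)$ mini-structures go bad, so the backyard receives only $O(n/N^2)$ keys;
\item it rebuilds periodically, deamortized via chunks, to repair bad mini-structures.
\end{itemize}

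\textbf{Your fix in Step 2 also has a cost problem.} With bins of size $B=\Theta(\log\epsilon^{-1})$, a canonical, strongly history-independent form costs $\Theta(B)$ per level to locate the prescribed victim. That gives only $O(\log^2\epsilon^{-1})$ per operation. The paper reaches $O(\log\epsilon^{-1})$ by switching to weak history independence (the Uniform Overflow Invariant) with rejection sampling. Each sampling trial succeeds with constant probability, and this again comes from the static bounds $F_q\le 3B/4$ and $H_{q',q}\ge\eta B$.
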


The simplest parameter regime to understand braided cuckoo hashing in is the case where $\epsilon = n^{-\Theta(1)}$. In this case, each element $x$ hashes to two bins $h_1(x)$ and $h_2(x)$ that are each of size $\Theta(\log n)$. But, unlike in traditional bucketized cuckoo hashing \cite{DW07}, these bins are not independent (or even close to independent). Instead, $h_2$ is chosen in such a way that, conditioned on $h_1(x)$, there are only $O(1)$ options for $h_2(x)$. 

Intuitively, such a strong relationship between $h_1$ and $h_2$ would seem to defeat the point of having two hash functions to begin with, potentially making it impossible to place $n$ elements into the hash table. However, by relating all of the bins in the hash table to each other in a specific way (see the braided structure in Figure \ref{fig:four-layer-construction}), we are able to preserve the ``power-of-two-choice'' type phenomenon that makes bucketized cuckoo hashing possible, even though the function $h_2$ that we use has barely any additional entropy over $h_1$. This is the main technical insight that makes the paper possible.

What does the relationship between $h_1$ and $h_2$ buy us? It enables deletions to perform what we call \defn{pullback chains}, which are essentially the opposite of eviction chains: instead of evicting an element from the current bin to another bin, we fill a free slot in the current bin by \emph{obtaining} an element from some other bin.
In normal bucketized cuckoo hashing, pullback chains are not possible, because the elements $y$ that are capable of residing in a given bin $h_1(x)$ could be almost anywhere in the hash table. But, in braided cuckoo hashing, the elements $y$ capable of residing in a given bin $h_1(x)$ are themselves in one of only $O(1)$ bins! This is what makes pullback chains possible. 

Using pullback chains, we can maintain the invariant that (if the hash table is almost full) almost all the free slots are in some specific small ``layer'' of the hash table (the bottom layer in Figure \ref{fig:four-layer-construction}). Insertions can then reach these free slots by performing an eviction chain that targets that layer, instead of having to look for them uniformly at random throughout the hash table. 

To make this high-level structure work, there are a number of additional technical ingredients that we end up needing. To prove (even in the static setting) that the placement of elements is possible, we must analyze a certain balls-and-bins process (Theorem \ref{thm:static-placement}). This is where the ``braided structure'' of the hash table ends up being important, as it allows us to leverage properties of sub-Gaussian random variables to show that the process is well-behaved. 

Additionally, to avoid certain types of problematic dependencies, we design our basic data structure (in the parameter regime $\epsilon = n^{-\Theta(1)}$) to be \emph{history independent}, meaning that the current state of the hash table depends only on the current set of elements, and not in any other way on the history of past insertions and deletions. History independence has historically been viewed as a \emph{privacy guarantee} \cite{antipers}, but in recent years has emerged as a useful technical tool for designing dynamic data structures \cite{kuszmaul2023strongly,bender2022online,bansal2022balanced,bender2026history,chechik2019fully,behnezhad2019fully,das2026history}. Our results can be viewed as a (slightly unexpected) extension of that trend. 

Finally, to extend our construction to larger values of $\epsilon$ (not $\epsilon = n^{-\Theta(1)}$), we must extend the construction so that, when certain failure events occur, elements still have ``backup'' options for where they can go. This is achieved using a backyard \cite{arbitman2010backyard}, which is a small secondary hash table that absorbs elements with nowhere else to go.

When we put all the pieces together, we are able to prove Theorem \ref{thm:main}. Additionally, we also prove a lower bound on expected insertion/deletion time. We show, via a simple argument about one-way communication protocols, that any open-addressed hash table with worst-case $O(\log \epsilon^{-1})$-time queries must incur $\Omega((\log \epsilon^{-1}) / \log \log \epsilon^{-1})$ expected-time insertions/deletions. This means that our upper bound on insertion/deletion time is tight up to an $O(\log \log \epsilon^{-1})$ factor.

\paragraph{Related work. }
The concept of an \emph{open-addressed hash table}, which is a hash table in which queries to an element $x$ search through a fixed sequence $h_1(x), h_2(x), \ldots \in [m]$ of positions to find the element being queried, dates back to the early 1950s with the introduction of linear probing \cite{peterson1957addressing,knuth1963notes,knuthvol3} and random-probing \cite{knuthvol3} hash tables. The first open-addressed hash table to achieve \emph{worst-case} time bounds was the cuckoo hash table \cite{PR01,lyon1985achieving,hagersten2001shared}, which Pagh and Rodler \cite{PR01} analyzed at any load factor $1/2 - \Omega(1)$. This prompted a long line of work \cite{FPSS05,PolylogAlan,FPS13,wearmin,FriezeJohansson,newinsertion,Walzer,walzer2023load,bell20241,KM25,bell2026fast} on variations of cuckoo hashing, including $d$-ary cuckoo hashing \cite{FPSS05} and  bucketized cuckoo hashing \cite{DW07} that seek to achieve higher load factors. The main goal of this line of work was to achieve the optimal query time $O(\log \epsilon^{-1})$ with as small an insertion/deletion time as possible. This culminated in a 2025 result of Kuszmaul and Mitzenmacher \cite{KM25}, who showed how to achieve an expected insertion/deletion time of $O(\epsilon^{-1})$ -- a bound which was conjectured to be optimal up to polylogarithmic factors by Bender, Kuszmaul, and Zhou \cite{bender2024tight}. The main result of this paper disproves their conjecture, bringing the insertion/deletion time down to $O(\log \epsilon^{-1})$.

In addition to $d$-ary cuckoo hashing (where the $d$ hash functions are independent) and bucketized cuckoo hashing (where each element hashes to independent bins), there have been a number of other variations of cuckoo hashing that researchers have studied. These include deamortized cuckoo hashing \cite{deamortizedcuckoo,stashcuckoo}, which uses a queueing/stashing idea to convert insertion time bounds from expected to worst-case (we use this result directly in Section \ref{sec:all-load-factors}); backyard cuckoo hashing and iceberg hashing \cite{arbitman2010backyard,bender2023iceberg}, which are succinct hash tables (built on top of open-addressed hash tables) that guarantee each element to be in one of $\poly(\epsilon^{-1})$ positions; and filter hashing \cite{FPSS05}, which is an \emph{insertion-only} variant of cuckoo hashing that achieves insertion and query time bounds of $O(\log^2 \epsilon^{-1})$. Filter hashing is related to braided cuckoo hashing in that both maintain a layered table in which the free slots are kept in the smallest layer -- but the approach in filter hashing works only for insertion workloads (no deletions), since deletions would create free slots in essentially random positions. 

Related to cuckoo hashing is the so-called \emph{power-of-two choices} for load balancing \cite{azar1994balanced,berenbrink2000balanced}, which says that, if one places $m$ balls into $n$ bins, and each ball chooses the emptier of two random bins, then the fullest bin will have only $m/n + O(\log \log n)$ balls. Cuckoo hashing is sometimes also described as a ``power-of-two-choice'' type phenomenon, because it accomplishes something with two hash functions that is impossible to achieve with one.

Open-addressed hash tables (including cuckoo hashing) are typically analyzed under the assumption that the hash functions involved are fully random. There is also a large body of work \cite{Siegel1989FastHighPerformanceHash,PaghPagh,Siegel2004ExtremelyRandomConstantTime,dietzfelbinger2007design,OstlinPagh,dahlgaard2015hashing,thorup2013simple,bercea2023locally} on how to simulate random hash functions efficiently; as well as well-established techniques for applying these hash functions to essentially any hash-table construction (see, e.g., the splitting trick in \cite{dietzfelbinger2009applications}).

Although this paper focuses on worst-case open-addressing, it is also interesting to study hash tables that depart from the open-addressing model, or that are open-addressed but offer non-worst-case guarantees. There have been long lines of work on open addressing with small expected-time operations \cite{brent1973reducing,munro1986techniques,farach2024optimal,farach2026greedy,bender2024tight,azarmehr2026power,kuszmaul2026analysis}, on \emph{non-oblivious} open addressing, where queries may follow an adaptive memory pattern \cite{fiat1992nonoblivious,fiat1993implicit,bender2025optimal}, on succinct and compact dictionaries \cite{raman2003succinct,demaine2006dictionariis,arbitman2010backyard,bender2023iceberg,bender2022optimal,li2023tight,li2024dynamic,blelloch2026optimal}, on stable dictionaries \cite{yao1985uniform,demaine2006dictionariis,bender2023iceberg,bender2025tiny,pandey2023iceberght,tang2026succinct,farach2024optimal,farach2026greedy,krapivin2026linear,zamir2026locality}, on dictionaries with extremely high-probability guarantees \cite{goodrich2012cache,bender2023iceberg,kuszmaul2022hash}, on open addressing in external memory \cite{jensen2008optimality,bender2021linear,braverman2024tight,krapivin2026linear,zamir2026locality}, and on variations of open addressing that are widely used in practice \cite{bell1970linear,guibas1976analysis,lueker1988more,mitzenmacher2012peeling,maurer1968programming,radke1970use,hopgood1972quadratic,kuszmaul2024towards,hu2026revisited,guo2026simple,zamir2026locality,hu2026quadratic}.

\section{Preliminaries}

Throughout the paper, we will be interested in constructing an \emph{open-addressed hash table}, which is a hash table in which queries to a key $x$ visit a predetermined sequence of positions $h_1(x), h_2(x), \ldots, h_d(x) \in [m]$ within some array. We will use $n$ to denote the maximum number of keys ever present at a time, and $m$ to denote the total number of \defn{slots} (also called \defn{cells}) in the hash table. Our constructions will achieve $m = (1 + O(\epsilon))n$ for some load factor parameter $\epsilon \ge 0$.

Our constructions will offer high-probability guarantees. 
Formally, we say that an algorithm depending on sufficiently large constants $c_1, c_2, \ldots$ 
supports a guarantee \defn{with high probability in $n$} if, for any constant $\gamma > 0$, 
there exist sufficiently large values for $c_1, c_2, \ldots$ (still constants) such that
the guarantee holds with probability at least $1 - O(1 / n^{\gamma})$.

Throughout the paper, we will assume fully random hash functions. One could also straightforwardly implement our constructions with explicit hash functions by, say, combining the $n^{\Omega(1)}$-independent hash functions of \cite{Siegel2004ExtremelyRandomConstantTime} with the splitting trick in \cite{dietzfelbinger2009applications}.

Throughout the paper, we will make use of standard Chernoff bounds. Additionally, several proofs will make use of McDiarmid's inequality \cite{mcdiarmid1989method}, a specialization of Azuma's inequality for martingales with bounded differences. As a reference, we state McDiarmid's inequality here.
\begin{lemma}[McDiarmid's Inequality]
\label{lem:mcdiarmid}
Let $X_1, \ldots, X_n$ be independent random variables from some set $S$, and let $f: S^n \to \mathbb{R}$ be a function that is $1$-Lipschitz (changing any input to $f$ changes the output by at most $1$). Then,
\[
    \Pr[|f(X_1, \ldots, X_n) - \E[f(X_1, \ldots, X_n)]| > k \sqrt{n}] \le e^{-\Omega(k^2)}. 
\]
\end{lemma}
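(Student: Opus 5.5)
We would prove McDiarmid's inequality (Lemma~\ref{lem:mcdiarmid}) by the standard Doob-martingale argument combined with Azuma--Hoeffding. Write $Y = f(X_1, \ldots, X_n)$ and set $Z_i = \E[Y \mid X_1, \ldots, X_i]$ for $i = 0, 1, \ldots, n$. Then $Z_0 = \E[Y]$, $Z_n = Y$, and $(Z_i)$ is a martingale with respect to the filtration generated by the $X_i$. The whole proof then reduces to two steps: bounding the martingale increments $D_i = Z_i - Z_{i-1}$, and applying a concentration bound for martingales with bounded differences.

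\textbf{Step 1 (bounded differences).} We use independence here. Fix $x_1, \ldots, x_{i-1}$ and let $X_i'$ be an independent copy of $X_i$. Then
\[
Z_{i-1} = \E_{X_i', X_{i+1}, \ldots, X_n}\bigl[f(x_1, \ldots, x_{i-1}, X_i', X_{i+1}, \ldots, X_n)\bigr],
\]
and $Z_i$ is given by the same expression with $X_i'$ replaced by the realized value $x_i$. Because the later coordinates have the same distribution in both expressions (this is where independence is used), the 1-Lipschitz property gives $|D_i| \le 1$. A slightly sharper version also works: conditioned on $X_1, \ldots, X_{i-1}$, the increment $D_i$ lies in an interval $[a_i, a_i + 1]$ whose left endpoint is determined by the past.

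\textbf{Step 2 (exponential moments).} Given $X_1, \ldots, X_{i-1}$, the increment $D_i$ has conditional mean zero and lies in an interval of length at most $1$. Hoeffding's lemma therefore gives
\[
\E\bigl[e^{\lambda D_i} \mid X_1, \ldots, X_{i-1}\bigr] \le e^{\lambda^2/8}.
\]
Iterating with the tower property yields $\E\bigl[e^{\lambda (Y - \E Y)}\bigr] \le e^{n\lambda^2/8}$. By Markov's inequality,
\[
\Pr\bigl[Y - \E Y > k\sqrt{n}\bigr] \le e^{-\lambda k \sqrt{n} + n\lambda^2/8}.
\]
Choosing $\lambda = 4k/\sqrt{n}$ gives the bound $e^{-2k^2}$. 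Applying the same argument to $-f$, which is also 1-Lipschitz, and taking a union bound over the two tails gives a total of $2e^{-2k^2}$. This is $e^{-\Omega(k^2)}$ for $k \ge 1$.

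\textbf{Small $k$.} For $k < 1$ the stated inequality is meaningful only up to the constant hidden in the $\Omega$. There we simply note that the probability is at most $1 \le e^{-c k^2}$ fails only by a constant factor. To make this rigorous, we can either absorb it into the asymptotic notation by choosing the hidden constant appropriately (for example, reading the bound as $\min(1, 2e^{-2k^2})$), or restrict attention to $k$ bounded below.

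\textbf{Main obstacle.} The only nontrivial step is the bounded-difference claim in Step~1, and specifically making sure the conditioning is handled correctly. The increment must be bounded pointwise, not merely on average, and the argument relies on the independence of the $X_j$ so that conditioning on $X_1, \ldots, X_i$ does not alter the law of the remaining coordinates. The rest is routine.
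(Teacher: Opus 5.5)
Your proof is correct and is the standard argument the paper has in mind: the paper proves nothing here and simply cites McDiarmid, describing the lemma as the specialization of Azuma's inequality to a martingale with bounded differences, which is what your Doob-martingale construction (Step 1) and Hoeffding's lemma on length-one intervals (Step 2) carry out to give $2e^{-2k^2}$. Your caveat about small $k$ is also appropriate, since the statement can only be read asymptotically: for $n=1$ and $f(X_1)=X_1$ with $X_1$ uniform on $\{0,1\}$, the probability equals $1$ for every $k<1/2$.
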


\section{Paper Outline}

Most of the rest of the paper is devoted to the proof of Theorem \ref{thm:main}. The high-level structure is as follows.

We begin in Sections \ref{sec:construction}, \ref{sec:static-placement}, and \ref{sec:dynamic-placement}, by focusing on a specific parameter regime that ends up being important for the general construction. Specifically, in these sections, we make two simplifying assumptions: 
\begin{itemize}
\item \textbf{The High-Load Assumption: } The parameter $\epsilon$ is very close to $0$, satisfying $\epsilon = \Theta(n^{-0.25})$. This means that the target time per operation is $\Theta(\log \epsilon^{-1}) = \Theta(\log n)$.
\item \textbf{The Divisibility Assumption:} The maximum number of keys $n$ satisfies a certain specific divisibility condition outlined in Subsection \ref{sec:divisibility-assumption}.
\end{itemize}

With these assumptions in place, we show how to construct a solution that supports $\Theta(\log n)$-time insertions, deletions, and queries. This construction is itself broken into three sections: Section \ref{sec:construction}, which describes the array layout and how to determine which bins a given key hashes to; Section \ref{sec:static-placement}, which considers the \emph{static} setting, in which up to $(1  - \epsilon)m$ keys are given to us in advance; and then Section \ref{sec:dynamic-placement}, which extends the construction to support efficient insertions and deletions. Together, these three sections prove Theorem \ref{thm:main} for the special case where the assumptions above hold.

Then, in Sections \ref{sec:general-capacities}, \ref{sec:all-load-factors}, and \ref{sec:deamortization}, we extend the construction to prove the full version of Theorem \ref{thm:main}. We begin, in Section \ref{sec:general-capacities}, by removing the Divisibility Assumption (this is relatively straightforward). Then, in Section \ref{sec:all-load-factors}, we remove the High-Load Assumption, resulting in a solution that handles arbitrary values of $\epsilon \ge \Omega(n^{-1/4})$ with amortized $O(\log \epsilon^{-1})$ time insertions/deletions and worst-case $O(\log \epsilon^{-1})$ time queries. Finally, in Section \ref{sec:deamortization}, we show how to deamortize the data structure so that insertions and deletions take worst-case $O(\log \epsilon^{-1})$ time (with high probability). This completes the proof of Theorem \ref{thm:main}. 

The most technically involved part of the proof is actually the static setting (Section \ref{sec:static-placement}), where we need to show that the data structure is \emph{even capable} of storing $(1 - \epsilon)m$ keys without irresolvable conflicts. To streamline the presentation, we defer a large part of this proof to Section \ref{sec:static-placement-proof}, which appears at the end of the paper. 

Finally, in addition to proving Theorem \ref{thm:main}, we also, in Section \ref{sec:lower}, prove a lower bound on expected insertion/deletion time: we show, with a simple communication lower-bound based argument, that any open-addressed hash table with worst-case $O(\log \epsilon^{-1})$-time queries must incur $\Omega((\log \epsilon^{-1}) / \log \log \epsilon^{-1})$ expected-time insertions/deletions. Whether or not this $\log \log \epsilon^{-1}$ factor should appear in the lower or upper bound remains open.

\section{The Basic Hash-Table Layout}\label{sec:construction}

In this section, we describe the basic layout of the hash table, including how to decompose the table into bins, and then how to determine which two bins a given key hashes to. 

The construction is parameterized by the maximum number $n$ of keys that will ever be present at a time. Given $n$, we will construct a hash table with $m = n + \epsilon n$ slots, where in this section (and in Sections \ref{sec:static-placement} and \ref{sec:dynamic-placement}), we will support only the specific load factor $\epsilon = \Theta(n^{-0.25})$ (we call this the \defn{High-Load Assumption}). In addition to the parameters $n$ and $m$, the construction will make use of a large positive constant $c$, and use ``bins'' of size $B := \lceil c \log n\rceil$. Finally, the construction will make use of two parameters $K$ and $\ell$ coming from the Divisibility Assumption defined in Subsection \ref{sec:divisibility-assumption} -- these parameters also serve concrete roles in the construction, with $\ell$ serving as the number of ``layers'' in the hash table, and $2^K$ serving as the number of bins in the smallest layer. In addition to these parameters, we will also introduce several other families of parameters during the construction that will be useful as notation throughout the analysis: the layers $L_1, \ldots, L_\ell$, the number $b_i$ of bins in each layer $L_i$, and the number $s_i$ of slots in each layer $L_i$, and the hash functions $h_1$ and $h_2$, and the probability $p_i$ that $h_1$ selects layer $L_i$. A table consisting of all the notations is given in Figure \ref{fig:notations}.

\begin{figure}[t]
    \centering
    \small
    \setlength{\tabcolsep}{4pt}
    \begin{tabular}{@{}l p{0.50\textwidth} p{0.36\textwidth}@{}}
        \hline
        \textbf{Parameter} & \textbf{Meaning} & \textbf{Formula (If Applicable)} \\
        \hline
        \(n\) &
        Maximum number of keys at a time &
        \\
        \(m\) &
        Total number of slots in the hash table &
        \(m = n + \epsilon n\) \\
        \(\epsilon\) &
        Load factor parameter & \\
        \(c\) &
        Large positive constant used to set the bin size &
        \\
        \(B\) &
        Number of slots in each bin &
        \(B = \lceil c \log n\rceil\) \\
        \(K\) &
        Integer parameter from the Divisibility Assumption; \phantom{foo} \(2^K\) is the number of bins in the final layer &
        \(K = \frac34\log_2 n - O(\log\log n)\) \\
        \(\ell\) &
        Number of layers in the hash table &
        \(\ell = \frac14\log_2 n + O(1)\) \\
        \(L_i\) &
        Layer \(i\) of the hash table, for \(i=1,\ldots,\ell\) &
        \\
        \(b_i\) &
        Number of bins in layer \(L_i\) &
        \(b_i = 2^{K+\ell-i}\) \\
        \(s_i\) &
        Number of slots in layer \(L_i\) &
        \(s_i = B b_i = B2^{K+\ell-i}\) \\
        \(h_1, h_2\) &
        First and second hash functions mapping a key to a bin &
        \\
        \(p_i\) &
        Probability that \(h_1\) selects layer \(L_i\) &
        $n \sum_{j<i}p_j = 
    \sum_{j<i}s_j+\frac12s_i$ \\
        \hline
    \end{tabular}
    \caption{Summary of notation for the basic hash-table layout.}
    \label{fig:notations}
\end{figure}

\subsection{The Divisibility Assumption}\label{sec:divisibility-assumption}

In addition to the High-Load Assumption, we will also make the following divisibility assumption on our parameters. 
Given the parameter $n$ (the maximum number of keys) and the parameter $B$ (the bin size, given by $\lceil c \log n\rceil$), 
we will assume there exist integers \(K\) and \(\ell\) satisfying
\[
    n = B2^K\left(2^\ell-\frac32\right),
\]
with
\[
    \ell = \frac14\log_2 n \pm O(1)
    \qquad\text{and}\qquad
    K = \frac34\log_2 n - \log B \pm O(1).
\]
Equivalently,
\[
    2n = B2^K(2^{\ell+1}-3).
\]
This assumption, which we call the \defn{Divisibility Assumption}, is used to make the layer sizes in our construction exact powers of
two.  The assumption will later be removed in Section \ref{sec:general-capacities}.

\subsection{Braided Cuckoo Hashing}

With both the High-Load Assumption and the Divisibility Assumption in mind, we now present the 
basic layout of a braided cuckoo hash table.

\paragraph{Layers and bins.}
The hash table is organized into layers \(L_1,L_2,\ldots,L_\ell\), where each layer
has a power-of-two number of bins that each contain exactly \(B\) slots. For \(i=1,\ldots,\ell\), define $$b_i =
2^{K+\ell-i}$$ to be the number of bins in layer \(L_i\), and define \(s_i = B
b_i\) to be the number of slots in layer \(L_i\). Thus \(s_i =
B2^{K+\ell-i}\).

As we check below, the total number of slots in the table is $m = n+\frac12s_\ell = n + \Theta(n^{3/4}),$
and thus $\epsilon = \Theta(n^{-0.25})$.

\begin{lemma}
    The total number of slots $m$ in the table satisfies $m = n+\frac12s_\ell = n + \Theta(n^{3/4})$.
\end{lemma}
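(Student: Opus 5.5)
The plan is a direct computation: sum the geometric series of layer sizes, compare the result with the Divisibility Assumption, and then estimate $s_\ell$ from the stated bounds on $K$ and $\ell$.

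\textbf{Step 1 (total slot count).} By construction,
\[
m=\sum_{i=1}^{\ell} s_i = B\sum_{i=1}^{\ell}2^{K+\ell-i} = B2^K\sum_{j=0}^{\ell-1}2^j = B2^K(2^\ell-1).
\]

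\textbf{Step 2 (compare with $n$).} The Divisibility Assumption gives $n = B2^K(2^\ell-\tfrac32)$. Subtracting,
\[
m-n = B2^K\bigl((2^\ell-1)-(2^\ell-\tfrac32)\bigr)=\tfrac12 B2^K.
\]
Since $s_\ell = B2^{K+\ell-\ell}=B2^K$, this is exactly $m = n+\frac12 s_\ell$.

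\textbf{Step 3 (size of $s_\ell$).} The assumption $K=\frac34\log_2 n-\log B\pm O(1)$ gives $2^K = \Theta(n^{3/4}/B)$. Here we read $\log B$ as base $2$, consistent with the other logarithms in the assumption; any other base changes $2^K$ only by a polynomial factor in $B$, which is not absorbed by $\Theta$. Hence $s_\ell = B2^K=\Theta(n^{3/4})$, and so $m=n+\Theta(n^{3/4})$.

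There is no real obstacle. The only care needed is in Step 2, checking that the offset $-\tfrac32$ in the assumption cancels the $-1$ from the geometric sum to leave exactly half of the last layer, and in Step 3, fixing the base of $\log B$. The identity $\epsilon=(m-n)/n=\Theta(n^{-1/4})$ then follows immediately.
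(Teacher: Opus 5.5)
Your proposal is correct and is essentially the paper's argument. The only cosmetic difference is that you sum all $\ell$ layers and subtract $n$, whereas the paper sums the non-final layers, adds $\tfrac12 s_\ell$ to recover exactly $n$, and then reads off $m = n + \tfrac12 s_\ell$. Both then get $s_\ell = B2^K = \Theta(n^{3/4})$ from the assumption on $K$.
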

\begin{proof}
The total number of slots before the final layer is
\[
    \sum_{i<\ell}s_i
    =
    B2^K\sum_{i<\ell}2^{\ell-i}
    =
    B2^K(2^\ell-2).
\]
The final layer has \(s_\ell = B2^K\) slots. Hence
\[
    \sum_{i<\ell}s_i+\frac12s_\ell
    =
    B2^K\left(2^\ell-\frac32\right)
    =
    n.
\]
By the assumptions on \(K\), the final layer has \(s_\ell = B2^K =
\Theta(n^{3/4})\), meaning that $$m = n+\Theta(n^{3/4}),$$
as desired.
\end{proof}

One should think of the final layer \(L_\ell\) serving as a slack layer. When there are $n$ elements in the hash table, they will (with high probability) be organized in a way so that all the previous layers are 100\% full, while the final layer is only 50\% full.

\paragraph{The first hash function. } Within the hash table, each element $x$ hashes to two bins $h_1(x)$ and $h_2(x)$.

The hash function \(h_1(x)\) first selects a layer, where each layer is selected with some probability $p_i$ to be defined in a moment, and then selects a uniformly random bin within that layer.

The final layer is never selected by \(h_1\), so $p_\ell = 0$. 
For every \(2\le i\le \ell\), the probabilities are chosen so that
\[
    n\sum_{j<i}p_j
    =
    \sum_{j<i}s_j+\frac12s_i.
\]
Equivalently,
\[
    \sum_{j<i}p_j
    =
    \frac1n
    \left(
        \sum_{j<i}s_j+\frac12s_i
    \right).
\]

In words, before layer \(L_i\), the expected number of elements whose first hash
lies in earlier layers is equal to the number of slots in those earlier layers,
plus half the number of slots in layer \(L_i\).

The equation for \(i=\ell\) gives
\[
n\sum_{j<\ell}p_j
    =
    \sum_{j<\ell}s_j+\frac12s_\ell
    =
    n,
\]
as in the proof of the lemma above,
and hence $\sum_{j<\ell}p_j=1$.
Thus \(p_1,\ldots,p_{\ell-1}\) form a probability distribution over the
non-final layers (while \(p_\ell=0\)).

\paragraph{The second hash function. } We now define \(h_2(x)\). Suppose that \(h_1(x)\) selects a bin in layer \(L_i\),
where \(i<\ell\). Since
\[
    b_i = 2^{K+\ell-i},
\]
we may identify the bins of \(L_i\) with binary strings of length
\[
    k_i = K+\ell-i.
\]
Write the bin selected by \(h_1(x)\) as
\[
    x_1x_2\cdots x_{k_i}\in\{0,1\}^{k_i}.
\]
The second hash \(h_2(x)\) is placed in layer \(L_{i+1}\). Since $b_{i+1}=\frac{b_i}{2}$, the bins of \(L_{i+1}\) may be identified with binary strings of length $k_i-1$.

To obtain the bin in $L_{i+1}$ selected by \(h_2(x)\), first delete the final bit
\(x_{k_i}\), leaving $x_1x_2\cdots x_{k_i-1}$.
Next, consider the pair of bits in positions $2i-1$ and $2i$, and replace the pair
\[
    (x_{2i-1},x_{2i})
\]
by one of the other three possible two-bit strings, chosen uniformly at random.
All other bits are left unchanged. The resulting string has length \(k_i-1\), and therefore identifies a bin in
layer \(L_{i+1}\). This bin is \(h_2(x)\).

Equivalently, conditioned on
\[
    h_1(x)=x_1x_2\cdots x_{k_i},
\]
the hash \(h_2(x)\) is uniformly distributed over the three bins in layer
\(L_{i+1}\) whose addresses agree with $x_1x_2\cdots x_{k_i-1}$ in every position except possibly positions $2i-1$ and $2i$, and whose
two-bit string in those positions differs from the original pair $(x_{2i-1},x_{2i})$.

Of course, in order to ensure $h_2$ is well-defined, we need to make sure that the bit 
positions $2i-1$ and $2i$ \emph{exist} (i.e., that $2i \le k_i-1$). This is verified in the following lemma.

\begin{lemma}
The second hash function \(h_2\) is well-defined.
\end{lemma}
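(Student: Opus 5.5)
We have to check that $2i \le k_i - 1 = K+\ell-i-1$ for every layer index $i$ with $i<\ell$, since only those layers are selected by $h_1$. This is equivalent to $3i \le K+\ell-1$. The left side increases with $i$, so it is enough to check the largest case, $i=\ell-1$. There the condition reads $3(\ell-1) \le K+\ell-1$, or equivalently $2\ell \le K+2$.

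Next we plug in the Divisibility Assumption, which gives $\ell = \frac14\log_2 n \pm O(1)$. The left side is then $2\ell = \frac12\log_2 n + O(1)$. The assumption also gives $K = \frac34\log_2 n - \log B \pm O(1)$, and $B=\lceil c\log n\rceil$ means $\log B = O(\log\log n)$. So the right side is $K+2 = \frac34\log_2 n - O(\log\log n)$. The difference is
\[
(K+2) - 2\ell = \tfrac14\log_2 n - O(\log\log n) - O(1),
\]
which is nonnegative once $n$ is larger than some constant (depending on $c$ and the hidden constants). Hence $2i \le k_i-1$ for all $i<\ell$.

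Given this, the rest is routine. After the last bit is deleted, positions $2i-1$ and $2i$ exist in the string of length $k_i-1$. Replacing that pair with another two-bit string leaves the length at $k_i-1 = K+\ell-(i+1)$. This length is exactly $\log_2 b_{i+1}$, so the result names a valid bin of $L_{i+1}$, and $L_{i+1}$ exists because $i+1\le\ell$.

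The only subtle point is the phrase ``for sufficiently large $n$''. For small $n$ the inequality may fail, so I would state the lemma under the standing convention that $n$ is at least a sufficiently large constant; small $n$ can be handled trivially by any table. The margin of about $\frac14\log_2 n$ is wide, so the $O(1)$ and $O(\log\log n)$ error terms pose no real difficulty.
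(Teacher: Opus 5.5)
Your proof is correct and takes the same approach as the paper: reduce to the extreme case $i=\ell-1$, where the condition becomes $2\ell-2\le K$, and then compare $\ell=\frac14\log_2 n+O(1)$ against $K=\frac34\log_2 n-O(\log\log n)$ for sufficiently large $n$. You also justify why $i=\ell-1$ is the worst case, via the equivalent form $3i\le K+\ell-1$, and check that the resulting string has length $\log_2 b_{i+1}$; the paper leaves both points implicit.
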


\begin{proof}
Suppose \(h_1(x)\) selects a bin in layer \(L_i\), where \(i<\ell\). The address
of this bin has length
$
    k_i = K+\ell-i.
$
After deleting the final bit, the remaining string has length \(k_i-1\). Thus
the bit positions \(2i-1\) and \(2i\) exist provided
\[
    2i\le k_i-1.
\]

The hardest case is \(i=\ell-1\). In that case,
$
    k_i-1
    =
    k_{\ell-1}-1
    =
    K.
$
So it suffices to show
\[
    2\ell-2\le K.
\]
By assumption,
$
    \ell=\frac14\log_2 n+O(1)
$
and
$
    K=\frac34\log_2 n-O(\log\log n).
$
Therefore, for all sufficiently large \(n\),
\[
    2\ell-2
    =
    \frac12\log_2 n+O(1)
    \le
    \frac34\log_2 n-O(\log\log n)
    =
    K.
\]
Hence \(2i\le k_i-1\) for every \(i<\ell\), and \(h_2\) is
well-defined.
\end{proof}

\paragraph{The full picture: braided cuckoo hashing.} Having described the bin layout and the two hash functions, we can now take a moment 
to consider what a braided cuckoo hash table actually looks like. Figure \ref{fig:four-layer-construction} shows a four-layer instance of the hash-table construction. Each dot represents a bin, and an edge from a bin $b_1$ in \(L_i\) to a bin $b_2$ in \(L_{i+1}\) indicates that the pair $(b_1, b_2)$ is a legal outcome for the hashes $(h_1(x), h_2(x))$ of an element $x$.

\begin{figure}[t]
    \centering
    \includegraphics[width=\textwidth]{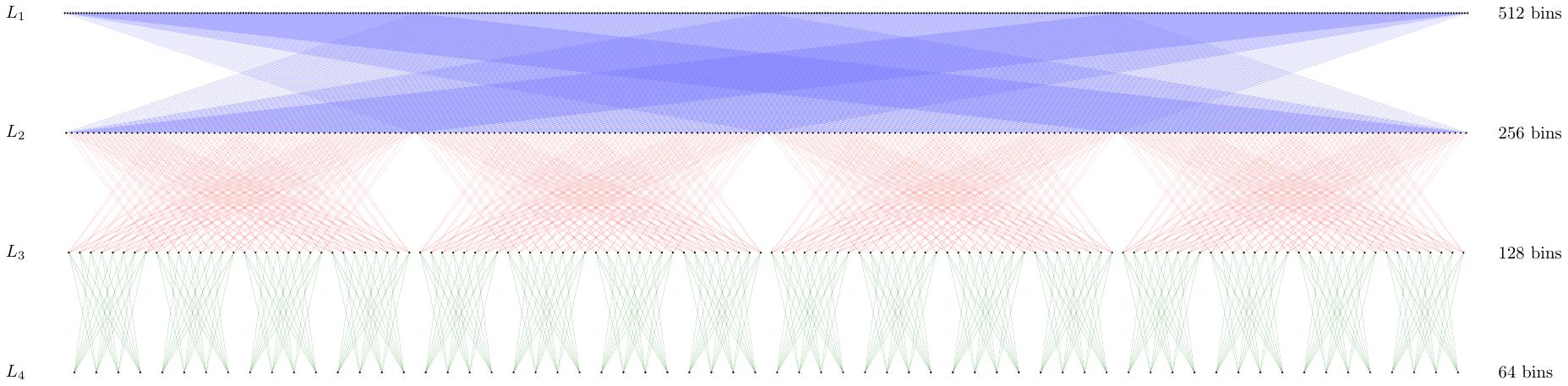}
    \caption{A four-layer instance of the hash-table construction with layers containing \(b_1 = 64\), \(b_2 = 32\), \(b_3 = 16\), and \(b_4 = 8\) bins. To represent a bin number as a binary string, we use the first bit as the most significant bit, and the last bit as the least significant bit.}
    \label{fig:four-layer-construction}
\end{figure}

Notice that the edges between layers interleave in a way that looks almost like braids -- this is where the name ``braided cuckoo hashing'' comes from.

It is worth remarking on how sparse the graph in Figure \ref{fig:four-layer-construction} is. Whereas, in traditional cuckoo-hashing constructions, each element chooses out of $\tilde{\Theta}(n^2)$ possible $(h_1(x), h_2(x))$ edges, braided cuckoo hashing has only $O(n / B)$ such pairs. In fact, each node in Figure \ref{fig:four-layer-construction} is incident to at most $9$ other nodes (6 from the previous layer, and 3 from the next layer). 

The fact that nodes have $O(1)$ degree will end up being critical for supporting insertions and deletions efficiently in Section \ref{sec:dynamic-placement}. At the same time, the sparseness of the graph makes it highly non-obvious whether it is \emph{even possible} to place $n$ elements together in the hash table. This will be the focus of the next section, and is where the specific ``braided structure'' of the table ends up being important.

\section{Static Placement}\label{sec:static-placement}

In this section we consider the static problem of placing $n$ elements into the braided cuckoo hash table. 
We will show that, despite the somewhat odd structure of the hash table, there is a simple way of placing
the elements into the hash table that succeeds with high probability. As in the previous section, we will 
make both the High-Load Assumption and the Divisibility Assumption throughout.

\paragraph{The static placement process. }
We now describe a simple placement process for a fixed set of \(n\)
elements. The process places elements layer by layer, using \(h_1\) as the
primary choice and \(h_2\) only when the primary bin is already full.

For an element \(x\), let
\[
    \lambda(x)\in\{1,\ldots,\ell-1\}
\]
denote the layer containing \(h_1(x)\). Recall that \(h_1\) never selects the
final layer \(L_\ell\). If \(\lambda(x)=i\), then \(h_2(x)\) is a bin in the
next layer \(L_{i+1}\).

Consider a static placement process that proceeds in phases \(i=1,2,\ldots,\ell-1\) as follows. During phase
\(i\), for each bin \(q\in L_i\), we consider the elements \(x\) with
\(h_1(x)=q\). The bin \(q\) may already contain elements forwarded from
\(L_{i-1}\); we place as many of the elements with \(h_1(x)=q\) into the
remaining slots of \(q\) as possible, and forward the rest to their
second-choice bins \(h_2(x)\in L_{i+1}\).

The choice of which elements are forwarded, out of those with $h_1(x)=q$, is made without looking at their $h_2$ hashes. So, conditioned on
being forwarded, each element's second hash is still independently and uniformly
distributed over the three allowed choices.

\paragraph{The Static Placement Theorem. } The main result of this section is the following theorem, which shows that the static placement process succeeds with high probability.

\begin{restatable}[Static placement theorem]{theorem}{staticplacement}
\label{thm:static-placement}
Suppose the constant $c$ used in the construction is sufficiently large. 
Then with high probability in $n$, the following events hold
simultaneously.

\begin{enumerate}
    \item For every layer \(L_i\) with \(2\le i\le \ell\), and every bin
    \(q\in L_i\), the number $F_q$ of elements forwarded into $q$ from layer $L_{i-1}$ satisfies
    \[
        \frac{B}{4}
        \le
        F_q
        \le
        \frac{3B}{4}.
    \]
    Notably, this means that no element is ever forwarded to a bin that is already full. 

    \item After all \(n\) elements have been placed, the only remaining free slots are in the final layer $L_\ell$.
\end{enumerate}
\end{restatable}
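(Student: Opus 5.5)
The plan is an induction over the phases $i = 1, \ldots, \ell-1$, with a high-probability invariant on the overflow counts. For a bin $q \in L_i$, let $A_q$ be the number of elements with $h_1(x) = q$, and let $O_q = \max(0, F_q + A_q - B)$ be the number of elements forwarded out of $q$; here $F_q = 0$ for $i = 1$. Each forwarded element picks its $h_2$ uniformly among three bins in $L_{i+1}$. For a bin $r \in L_{i+1}$, the preimage consists of six bins of $L_i$: the deleted last bit gives a factor of two, and the three alternative values of the pair in positions $2i-1, 2i$ give a factor of three. Hence $F_r$ is a sum over these six bins of $\mathrm{Bin}(O_q, 1/3)$ variables, and $\E[F_r \mid O] = \frac13\sum_{q \to r} O_q$.

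The calibration of the $p_i$ fixes the means. The expected number of elements hashed into layers $< i$ exceeds $\sum_{j<i} s_j$ by exactly $s_i/2$. So, provided the earlier layers fill completely, the total flow into $L_i$ is $s_i/2$ plus lower-order fluctuation, i.e.\ $B/2$ per bin on average. Filling layer $L_i$ then requires $\E[A_q] = B - \E[F_q]$ on average, and the overflow out of $L_i$ is $s_{i+1}/2 = s_i/4$, which is again $B/2$ per bin of $L_{i+1}$.

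The key steps, in order, are these:
\begin{enumerate}
\item Show that the aggregate overflow out of layers $< i$ is $s_i/2 \pm O(\sqrt{n \log n})$. This follows by Chernoff, since the total number of elements hashing to layers $<i$ is $\sum_{j<i} s_j + s_i/2 \pm O(\sqrt{n\log n})$, provided no earlier bin has spare capacity. Since $s_i \ge s_\ell = \Theta(n^{3/4}) \gg \sqrt{n \log n}$, the relative error is tiny.
\item Establish the per-bin statement $F_r \in [B/4, 3B/4]$ for every bin $r$. This is where the braiding matters. At phase $i$, the flow into $r$ depends on the overflows of its six parents. Those parents in turn depend on their parents, which vary only in bit pairs $(1,2), \ldots, (2i-3, 2i-2)$ and in trailing bits. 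Unrolling, $F_r$ is determined by a set of $6^{i-1}$-ish ancestor bins spread across the earlier layers, each contributing independent $h_1$ counts and $h_2$ choices.
\item Argue that the map from these independent inputs to $F_r$ is $1$-Lipschitz. Changing one element's hash changes each overflow count by at most one, and the effect propagates through at most one path of forwards. Then apply McDiarmid's inequality (Lemma \ref{lem:mcdiarmid}) over the elements relevant to $r$'s ancestry.
\item Pin down the expectation: show $\E[F_r] \approx B/2$ up to a small additive error. By symmetry under the bit-flip group acting on addresses, every bin in a layer has the same distribution, so $\E[F_r]$ equals the layer average, which step 1 gives as $B/2 \pm o(B)$.
\item Control the fluctuation, which is the crux. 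I would rather view $O_q - \E O_q$ as a sub-Gaussian variable and show variance contraction. Averaging over six independent-ish parents each weighted $1/3$ behaves like an average of two, so the variance is halved each layer, while fresh Binomial noise of variance $O(B)$ is added. The sub-Gaussian parameter therefore stays $O(B)$, i.e.\ deviations are $O(\sqrt{B \log n})$. The clipping at $0$ and $B$ is harmless on the event that all loads stay in range, so $F$ stays within $B/2 \pm O(\sqrt{cB\log n})$, which lies in $[B/4, 3B/4]$ once $B = c\log n$ with $c$ large.
\item Union bound over all $O(n)$ bins and all layers.
\end{enumerate}

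The main obstacle is steps 2--5: establishing independence of the six parents' overflows. Distinct parents of $r$ differ in bit pair $(2i-1, 2i)$ or the last bit. Their own ancestries modify only earlier bit pairs, which suggests the ancestry trees of distinct parents are disjoint. That disjointness is exactly what the braiding provides, and I would prove it by induction on addresses.

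Part 2 of the theorem then follows. Every bin of $L_1, \ldots, L_{\ell-1}$ receives at least $B/4$ forwarded elements plus its own $A_q$. The $A_q$ concentrate, by Chernoff, around $B - \E F_q \ge B/2 - o(B)$, well above the $B/4$ room left needed to finish filling, so every non-final bin overflows and is exactly full. All slack therefore resides in $L_\ell$, whose bins receive at most $3B/4 < B$ elements.
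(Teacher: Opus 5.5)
The mechanism in your step 5 is the paper's crux. The braiding makes the ancestry trees of a bin's six parents disjoint. Each parent's overflow is thinned by $\mathrm{Bin}(\cdot,1/3)$, and fresh noise of order $B$ is added, so the variance proxy satisfies $v_i\le \frac23 v_{i-1}+O(B)$. The factor is $6\cdot\frac19=\frac23$, not $\frac12$, but it is still a contraction, which is all that matters.

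Step 3, however, would fail and should be dropped. For $r\in L_i$, $F_r$ depends on the $\Theta(6^{i}B)$ keys hashed into its ancestry tree, each with worst-case influence $O(1)$. So McDiarmid only gives fluctuations of order $\sqrt{6^{i}B\log n}$, which exceeds $B$ once $i$ is more than a constant. The geometric damping by $1/3$ per layer is invisible to a bounded-differences inequality; only the layer-by-layer MGF induction captures it.

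As written, step 5 is still a heuristic, and what it is missing is exactly what the paper's proof is built around.
\begin{itemize}
\item \emph{Independence.} With exactly $n$ keys the counts $A_q$ are multinomial, so disjoint ancestries do not by themselves make the parents' overflows independent. The paper Poissonizes the number of keys and de-Poissonizes at the end, losing only a factor $O(\sqrt n)$ in failure probability.
\item \emph{Base case.} A Poisson or binomial count with mean $\mu=\Theta(B)$ is not $\mathrm{SG}(O(B))$. In the Poisson case its centered MGF is $\exp(\mu(e^{\lambda}-1-\lambda))$. So your base case is false as stated.
\item \emph{Thinning and the kink at $0$.} The thinning noise $\E[e^{\lambda^2G/8}]$ is $e^{O(\lambda^2B)}$ only if the overflow $G$ is $O(B)$ almost surely. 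The true overflow $\max\{0,F_q+A_q-B\}$ is not bounded by $B$ and has a kink at $0$. You cannot fix this by restricting to the event that loads stay in range, because that conditioning destroys the independence you are using.
\end{itemize}
The paper handles the last two points with a globally defined proxy process. Each $A_q$ is replaced by an integer surrogate $\widehat A_q\in\mathrm{SG}(\mu_q,O(\mu_q))$ that equals $A_q$ except with probability $e^{-\Omega(B)}$. Overflows are clipped to $[0,B]$. Clipping is monotone and $1$-Lipschitz, so by a convex-order argument it preserves the sub-Gaussian parameter, and it moves the mean by at most $Be^{-\gamma B}$ when the mean lies in $[B/5,4B/5]$. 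Only at the end is the proxy shown to coincide with the real process with high probability. Alternatively, tracking the MGF only for $|\lambda|\le\lambda_0$ would avoid both the surrogate and the upper clipping.

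Your conservation-plus-symmetry derivation of $\E[F_r]\approx B/2$ is fine. It is the telescoped form of the paper's recursion $m_i=2m_{i-1}\pm Be^{-\gamma B}$.

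Your part-2 arithmetic is off. For $i\ge 2$, $\E[A_q]=\frac34B$, not $B-\E[F_q]$, and $F_q\ge B/4$ alone does not force $q$ to fill. What you need is that $F_q+A_q-B$ concentrates around $B/4$ with sub-Gaussian parameter $O(B)$, which your step-5 bound does provide once made rigorous.
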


We will also need an extension of the Static Placement Theorem to the
underloaded regime, where the table (built for capacity parameter \(n\)) is
used to store \(n'\le n\) keys. For a bin \(q\in L_i\) with \(i\ge 2\), let
\(\mathsf{Pred}(q)\subseteq L_{i-1}\) denote the set of predecessor bins from which
elements may be forwarded to \(q\); the structure of \(h_2\) will force
\(|\mathsf{Pred}(q)|=6\). For \(q'\in\mathsf{Pred}(q)\), let \(H_{q',q}\) denote the number
of elements forwarded from \(q'\) to \(q\). The following theorem contains not just variations of the conclusions of the Static Placement Theorem, but also one more point that will be important: that every bin $q$ in layers $L_i$, $i > 1$, with probability at least $1-1/\poly(n)$, either has a free slot, or has $H_{q',q} = \Theta(B)$ elements forwarded to it from each of its predecessors \(q'\in\mathsf{Pred}(q)\). 

\begin{restatable}[Static placement with fewer than \(n\) keys]{theorem}{staticplacementunderloaded}
\label{thm:static-placement-underloaded}
Suppose the constant $c$ used in the construction is sufficiently large. Fix any \(n'\le n\), and run the static placement process on \(n'\) keys in the
table constructed for parameter \(n\). There is an absolute constant
\(\eta>0\) such that, with high probability in $n$, the following
events hold simultaneously.

\begin{enumerate}
    \item Every forwarded element is successfully placed in its second-choice
    bin, and for bins $q$ in layers $L_i$, $i > 1$, the number $F_q$ of elements forwarded into $q$ from layer $L_{i-1}$ satisfies $F_q \le \frac{3B}{4}$.

    \item Every bin in the final layer \(L_\ell\) has at least one free slot.

    \item For every layer \(L_i\) with \(2\le i\le \ell\), and every bin
    \(q\in L_i\), either \(q\) has a free slot after the placement process
    completes, or
    \[
        H_{q',q}\ge \eta B
        \qquad
        \text{for every } q'\in\mathsf{Pred}(q).
    \]
\end{enumerate}
\end{restatable}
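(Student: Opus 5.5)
The plan is to derive items 1 and 2 from Theorem~\ref{thm:static-placement} by a monotone coupling. Item 3 then needs a separate dichotomy argument that reuses the concentration machinery behind Theorem~\ref{thm:static-placement}.

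\textbf{Coupling (items 1 and 2).} Pad the $n'$ real keys with $n-n'$ dummy keys, all hashed by the same $h_1,h_2$, and run the process on both instances. Inside each bin $q$, decide which of the primary keys are kept by a fixed random priority order that is independent of $h_2$; this is allowed, since the process never looks at $h_2$ when choosing whom to forward. Induct on the layer index $i$ to show that, for every bin $q\in L_i$, the set of keys forwarded out of $q$ in the $n'$-instance is a subset of the set forwarded out in the $n$-instance. In $L_1$ this holds because the keys retained by $q$ are the highest-priority primary keys, and removing keys can only promote the survivors. In a layer $L_i$ with $i\ge 2$, the inductive hypothesis gives that the keys arriving at $q$ from $L_{i-1}$ in the small instance are a subset of those in the large one. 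So $q$ has at least as much room left for primary keys, and the same priority argument applies.

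Consequently $F_q$ in the $n'$-instance is at most $F_q$ in the $n$-instance, which is at most $3B/4$ with high probability by Theorem~\ref{thm:static-placement}. This gives item 1, because no forwarded key ever meets a full bin. Bins of $L_\ell$ receive no primary keys, so they hold at most $3B/4<B$ keys, which gives item 2.

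\textbf{Item 3 (dichotomy).} A primary-load computation from the definition of the $p_i$ helps here: $np_i = s_i/2+s_{i+1}/2 = \tfrac34 s_i$. So the primary load $A_q$ of a bin in $L_i$ has mean $\tfrac34\alpha B$ with $\alpha=n'/n\le 1$, and by Chernoff (with $c$ large) $A_q\le 13B/16$ for all $q$ simultaneously, with high probability. Hence a full bin $q$ must have $F_q\ge 3B/16$.

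Write $O_{q'}$ for the number of keys forwarded out of $q'$. Conditioned on the values $O_{q'}$ for $q'\in\mathsf{Pred}(q)$, each $H_{q',q}$ is $\mathrm{Bin}(O_{q'},1/3)$ and these are independent. Also $F_q=\sum_{q'}H_{q',q}$. It therefore suffices to show the following statement $(\star)$: with high probability, for every layer $L_{i-1}$ the overflows $O_{q'}$ of all bins $q'\in L_{i-1}$ lie within $\delta B$ of a common value $o_{i-1}$, where $\delta$ is a small constant.

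Given $(\star)$, the argument splits into two cases.
\begin{itemize}
\item If $o_{i-1}\le B/20$, then $F_q\le 2(o_{i-1}+\delta B)+O(\sqrt{B\log n})<3B/16$, so no bin of $L_i$ is full.
\item If $o_{i-1}> B/20$, then every predecessor has $O_{q'}\ge B/20-\delta B$, and Chernoff gives $H_{q',q}\ge B/80$ for all pairs. We then take $\eta=1/80$.
\end{itemize}
A union bound over all $O(n)$ bins and $6$ predecessors each completes the proof.

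\textbf{Main obstacle.} The hard step is $(\star)$, the uniformity of overflow across a layer in the underloaded regime. The values $O_{q'}$ are neither independent nor individually sharply concentrated a priori, and errors compound across the $\Theta(\log n)$ layers. I would attack $(\star)$ by reproducing the sub-Gaussian argument from the proof of Theorem~\ref{thm:static-placement} with the load parameter $\alpha$ in place of $1$. The braided rewiring of bit pairs $(2i-1,2i)$ means that the load entering a bin of $L_i$ mixes independent primary fluctuations from $6^{j}$ disjoint ancestor regions at depth $j$. The deviation of $O_{q'}$ from its layer mean should then be a weighted sum of independent sub-Gaussian terms with geometrically decaying total variance, and hence $O(\sqrt{B})$-sub-Gaussian uniformly in the layer. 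McDiarmid's inequality (Lemma~\ref{lem:mcdiarmid}) would handle the dependence introduced by the clipping $O_{q'}=\max(0,A_{q'}+F_{q'}-B)$, since clipping is $1$-Lipschitz. If carrying $\alpha$ through that proof is awkward, a fallback is to note that the claim only needs $(\star)$ at a single scale $\delta B$. That scale is far coarser than the $O(\sqrt{B\log n})$ fluctuations, so a cruder induction on layers that loses a constant factor per layer can be used, but only over the $O(1)$ layers that precede the point where the overflow first drops below $B/20$.
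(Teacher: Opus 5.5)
Your plan is essentially the paper's proof. Items 1 and 2 use the same deletion-monotone coupling with a padded $n$-key instance. For item 3, the paper does what your $(\star)$ calls for: it reruns the Poissonized proxy and the variance invariant with arrival means scaled by $\alpha=n'/n$. This puts every $H_{q',q}$ within $\xi B$ of a mean that, by symmetry, is common to the six edges into $q$, and fullness of $q$ then forces that mean up.

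The endgames differ only cosmetically. The paper uses $A_q\le(\frac34+\xi)B$ to get $\sum_{q'}H_{q',q}\ge(\frac14-\xi)B$ for a full $q$, hence a common edge mean of at least $3B/100$ and $H_{q',q}\ge B/50$. You instead concentrate the overflows $O_{q'}$ around a layer value and split on whether it exceeds $B/20$. Your case analysis is correct for $\delta$ small and $c$ large.

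Be careful about how $(\star)$ is actually obtained, because two of your suggestions would not work.

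\textbf{McDiarmid does not absorb the clipping.} $O_{q'}$ depends on the keys hashing into an ancestor tree with $6^j$ bins at depth $j$. A single key at any depth can, in the worst case, change $O_{q'}$ by $1$. So the bounded-differences scale is about $\sqrt{B\,6^j}$, which exceeds $B$ once the depth passes roughly $\log_6 B$.

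\textbf{What does work is the variance invariant itself.} $\clip_B$ is monotone and $1$-Lipschitz, hence a convex-order contraction. It therefore preserves the sub-Gaussian MGF bound with no hypothesis on the mean. Binomial thinning plus summing over six independent predecessors then gives the recursion $v\mapsto\frac23v+O(B)$ for every $\alpha\le 1$.
\begin{itemize}
\item Only this part carries over to the underloaded regime: the mean invariant's clipping estimate needs the mean in $[B/5,4B/5]$, which can fail here.
\item It is also all you need, since your dichotomy never uses the value of $o_{i-1}$; the common value comes from the layer symmetry (XOR masks on addresses).
\item Agreement of proxy and true process requires the upper clip at $B$ to stay inactive. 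In this regime that follows from item 1 together with $A_q\le 13B/16$, not from the full-load mean computation.
\end{itemize}

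\textbf{The fallback fails.} The number of layers before the overflow drops below $B/20$ is not $O(1)$. At $\alpha=1$ the overflow stays near $B/4$ through all $\ell-1=\Theta(\log n)$ non-final layers; this is the Static Placement Theorem. For $\alpha=1-\beta$ it takes about $\log_2\beta^{-1}$ layers. An induction that loses a constant factor per layer therefore blows up polynomially, and the $\frac23$ contraction is exactly what is needed to survive those layers.
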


\paragraph{Technical overview. } For brevity, we defer the proof of the Static Placement Theorem to Section \ref{sec:static-placement-proof}, and the proof of the static placement with fewer than $n$ keys theorem to Section \ref{sec:underloaded}. Here, instead, we give a high-level technical overview (focusing only on Theorem \ref{thm:static-placement}).

The main challenge is showing that no bin ever has more balls forwarded to it
than it can hold (or so few balls forwarded to it that it is unexpectedly underloaded). 
For each bin $q$, let its \emph{overflow} mean the number of balls
that $q$ forwards to the next layer (balls with $h_1(x) = q$ but that did not fit in $q$).
The main risk that we run is a potential positive-feedback loop, in which, as we look at bins
in later and later layers, the \emph{variance} of each bin's overflow increases. This would
eventually lead to some bins having an overflow that is either uncontrollably large 
(too large for the next-layer bins to handle) or too small (resulting in bins in the next layer being underloaded). 

Thus the core of the proof is to show that
the per-bin overflow variance is actually subject to a \emph{negative} feedback loop: variance
inherited from the previous layer is damped before it reaches the next layer,
and the new variance introduced from the current layer is only \(O(B)\). 
Once this variance control is established, it is not hard to strengthen the analysis to argue that, 
within each bin, the overflow is a sub-Gaussian random variable -- this implies that the overflow is 
tightly concentrated around its intended mean of $B/4$, which allows us to complete the proof. 

The proof proceeds by induction, where the inductive step considers (roughly) the following scenario. Suppose, heuristically, that every bin in layer
\(L_{i-1}\) overflows by roughly \(B/4\) elements, with variance at most \(CB\) for some large constant $C$; further suppose that these overflows are mutually independent (they are not, but we will come back to this later). Then, to perform the inductive step, we must consider some bin \(q\in L_i\) and argue that $q$'s overflow also has expectation roughly $B/4$ and variance at most $CB$. Since the calculation of the expectation is straightforward, we will focus here exclusively on the variance.

The definition of \(h_2\) is designed so that \(q\) has
exactly six possible predecessor bins in \(L_{i-1}\). From each predecessor bin,
an overflowed element chooses among three possible destinations in \(L_i\), one
of which is \(q\). Thus \(q\) receives, from each predecessor, roughly one third
of that predecessor's overflow.

Consider one predecessor bin, and
let \(X\) be its overflow, with mean roughly $B/4$ and variance at most $CB$. The number of these \(X\) overflowed elements that are sent to
the fixed bin \(q\) is obtained by independently keeping each of them with
probability \(1/3\). 

This sampling affects the variance in two ways. First, the fluctuation already present in \(X\) is scaled
down by a factor of \(3\), so its variance contribution is scaled down by
\(9\). Second, the random sampling itself introduces fresh variance.
Since \(X\) is typically \(O(B)\), this fresh variance is only \(O(B)\). Thus a
single predecessor contributes variance roughly
\[
    \frac{CB}{9}+O(B)
\]
to the load of \(q\).
Since \(q\) has six predecessors, the forwarded load into \(q\) has variance
roughly
\[
    6\left(\frac{CB}{9}+O(B)\right)
    =
    \frac23 CB +O(B).
\]

In addition to the elements forwarded from the previous layer, $q$ also receives elements that hash to it via $h_1$. The number of such elements has variance \(O(B)\). Therefore, the overflow of $q$, which is given by roughly
\[
    (\text{incoming forwarded load})
    +
    (\text{$h_1$ arrivals})
    -
    B
\]
has variance at most
\[
    \frac23CB+O(B).
\]
By choosing \(C\) sufficiently large, this is at most, say,
\[
    \frac56CB.
\]
Thus, critically, there is slack: going from one layer to the next, the variance is still bounded by \(CB\). This is the heart of the induction.

Finally, it is worth discussing why we can make the independence assumption used in this calculation (i.e., that overflows from the bins in the previous layer are mutually independent), as this is the main place where the specific ``braided structure'' of the table comes into play. Of
course, overflows from arbitrary bins in the same layer are not independent:
nearby bins may have interacted through earlier forwarding steps. The point of
the construction of \(h_2\) is that the six predecessor bins of a fixed bin
\(q\in L_i\) are not arbitrary nearby bins. They differ only in the bit-pair
that is used for the transition from \(L_{i-1}\) to \(L_i\), together with the
bit that is deleted in that transition. Transitions in earlier layers never toggled these bits. So, if we build for each predecessor bin $b$ a predecessor tree $T_b$ consisting of $b$'s predecessors in $L_{i - 2}$, their predecessors in $L_{i - 3}$, and so on, then we will find that the trees $T_b$ are disjoint across the six predecessor bins $b$. Consequently, for the purposes of analyzing the load forwarded
into \(q\), the six predecessor overflows may be treated as independent\footnote{Technically, they are still not independent due to the fact that the numbers of balls which hash to different parts of the table are not quite independent, but this point can easily be handled in the analysis using a Poissonization argument.}. The
obliviousness of the forwarding rule is also important here: conditioned on an
element being forwarded, its choice among the three possible destinations in the
next layer remains uniform and independent of the event that it was forwarded.

When we put the pieces together, we can prove within each layer that the overflow
 in each bin behaves roughly like a Gaussian random variable with mean $B/4$ and variance at most $CB$. 
 Because $B$ itself is a large constant multiple of $\log n$ (thanks to the High-Load Assumption), 
 the probability of a given bin behaving poorly (overflowing by too much or too little) is $1 / \poly(n)$.
 This is sufficiently small that we can union bound over all bins in order to prove the theorem. 

\section{Dynamic Placement}
\label{sec:dynamic-placement}

We now consider the dynamic setting: The goal is to support an arbitrary
oblivious sequence of
\[
    T=n^{O(1)}
\]
insertions and deletions, subject to the constraint that at most $n$ keys are present at any time.  
We want every operation to succeed with high probability in $n$ in time $O(\log n)$. Note that,
as in the previous sections, we continue to make both the High-Load Assumption and the Divisibility Assumption.

An interesting aspect of the algorithms in this section is the subtle role of
\emph{history independence} \cite{antipers}.  A data structure is \emph{strongly history
independent} if, at any given moment, the state of the data structure is
completely determined by the current set of keys, together with the fixed
randomness of any hash functions that the data structure uses.  It is
\emph{weakly history independent} if the probability distribution of the state
of the data structure is completely determined by the current set of keys,
again together with the fixed randomness of the hash functions.\footnote{For
example, a data structure that stores up to \(n\) keys in a prefix of an array of
size \(n\) would be strongly history independent if it stored the keys in some
canonical sorted order, and would be weakly history independent if it stored the
keys in a uniformly random order, supporting insertions and deletions by the
usual random-prefix insertion and deletion operations.}  History independence is
often studied as a security property \cite{antipers}: if the memory representation of the data
structure is leaked to an adversary, it does not reveal any unintended information
about the sequence of operations that produced the current set of keys.

In our setting, however, history independence plays a different role.  It gives
a way to keep the table in a continually well-controlled state, one that can still be
analyzed using the Static Placement Theorem.  More concretely, it prevents
\emph{reappearance dependencies}: problematic dependencies that can arise when an element is
inserted, deleted, and later reinserted, at which point its hash values have
already influenced the previous state of the table. In our context, we must be sensitive
to avoid any reappearance dependencies that might, for example, cause a bin $q'$ to evict elements
$x \mid h_1(x) = q'$ in a way that is biased towards evicting those with a specific second hash 
$h_2(x) = q$. This type of bias could cause the bin $q$ to become severely overloaded,
breaking the guarantees of the Static Placement Theorem. We will see that history independence
lets us dodge this issue, and more generally the issue of reappearance dependencies, entirely. 
To make this connection explicit, we will state in our theorems not only the running times of the
dynamic constructions below, but also their history-independence guarantees.

The rest of the section is split into two parts.  First, we present a simple warmup
construction that achieves $O(\log^2 n)$-time operations. The advantage of the warmup 
is that it is conceptually simple: a priority hash function canonically determines 
which keys overflow from each bin, guaranteeing that the resulting table is strongly 
history independent.  The drawback is that each step of an eviction chain or pullback 
chain costs \(\Theta(B)\) time, giving \(O(\ell B)=O(\log^2 n)\)-time operations. 
To eliminate this drawback, our second construction shows how to build a slightly more
subtle \emph{weakly} history-independent solution that, with the help of reservoir-sampling style techniques,
is able to support $O(\log n)$-time operations with high probability. 

We remark that, in addition to being pedagogically useful, the strongly history-independent warmup 
construction ends up being concretely useful later in the paper, allowing us to avoid potentially painful
 dependencies when rebuilding parts of the data structure in Section \ref{sec:deamortization}.

\paragraph{Conventions for discussing eviction chains.}
In both of our algorithms, insertions will perform what we call a \defn{forward eviction chain}. 
  A key first tries to enter its first-choice bin.
If that bin is full, the algorithm evicts one resident first-choice key from the
bin (possibly the key currently being inserted) and forwards the evicted key to its second-choice bin.  
This may cause another eviction in the next layer, and so on.

Deletions, on the other hand, will perform what we call a \defn{pullback chain}.
Removing a key leaves a hole in some bin \(q\).  If \(q\) has a key that had
overflowed out of \(q\), then one such key is pulled back into \(q\).  This
creates a hole in the pulled-back key's second-choice bin, and the process
continues.

Note that both types of chains are guaranteed to have length at most $\ell = O(\log n)$. 

\paragraph{Additional notation.} Finally, before we continue, it will also be helpful to introduce some 
notation that we will use throughout the rest of the section. Let $S$ be the current set of keys in the hash table.
For a bin \(q\), let
\[
    A_q(S)=\{x\in S:h_1(x)=q\}
\]
be the keys whose first-choice bin is \(q\).  Let \(I_q(S)\) be the keys
forwarded into \(q\) from the previous layer.  For \(q\notin L_\ell\), define
\[
    r_q(S)=\max\{0,\ |A_q(S)|+|I_q(S)|-B\}.
\]
The bin \(q\) must overflow exactly \(r_q(S)\) keys from \(A_q(S)\).  We write
\[
    V_q(S)\subseteq A_q(S),
    \qquad
    |V_q(S)|=r_q(S),
\]
for this overflowing set.  The bin \(q\) stores
\[
    I_q(S)\cup \bigl(A_q(S)\setminus V_q(S)\bigr),
\]
and each key in \(V_q(S)\) is stored in its second-choice bin.  Bins in the
final layer have no first-choice keys, since \(h_1\) never selects \(L_\ell\).

Finally, for a bin \(q\), we will let \(\mathsf{Succ}(q)\) denote the three possible
second-choice bins of keys whose first-choice bin is \(q\).

We remark that the sets \(A_q,I_q,V_q\) are used only in the analysis.  The data
structure stores no metadata.  Given a key in a slot, we can recompute its
\(h_1\)- and \(h_2\)-values to determine whether it is stored in its
first-choice or second-choice bin.

\subsection{An $O(\log^2 n)$-time Strongly History-Independent Warmup}

The first construction uses a priority hash function
\[
    \rho:U\to[0,1]
\]
to determine which keys get forwarded from a given bin $q$ to its successors in the next layer.
For a given bin $q$, we maintain the invariant that the $r_q(S)$ elements $V_q(S)$ that $q$ forwards 
to the next layer are precisely the $r_q(S)$ elements of \(A_q(S)\) with the smallest \(\rho\)-values.
This invariant completely determines which keys go to which bins. Then, within a bin, we 
store the keys in an arbitrary canonical order (say, lexicographically). The result is 
a strongly history-independent data structure: the state, at any given moment, is completely
determined by the current set $S$ of elements and the functions $h_1, h_2, \rho$.

\paragraph{Implementing insertions and deletions. }Having described the invariant that determines
the state of the data structure, we now discuss how to implement insertions and deletions 
in order to preserve the invariant.

The following helper procedures all touch only a constant number of bins, and
therefore cost \(O(B)\) time.

\[
\begin{array}{ll}
\textsc{Find}(x)
    & \text{return }q \in \{h_1(x), h_2(x)\} \text{ such that }x \text{ is stored in }q;\\[1mm]

\textsc{CanonicalInsert}(q,x)
    & \text{insert }x\text{ into }q\text{ and restore the canonical order};\\[1mm]

\textsc{CanonicalDelete}(q,x)
    & \text{delete }x\text{ from }q\text{ and restore the canonical order};\\[1mm]

\textsc{ResidentPrimaries}(q)
    & \text{return }\{x\text{ stored in }q:h_1(x)=q\};\\[1mm]

\textsc{OverflowedFrom}(q)
    & \text{return }\{x\text{ stored in a bin of }\mathsf{Succ}(q):h_1(x)=q\}.
\end{array}
\]

Using these helper functions as subroutines, insertions and deletions can be implemented as follows. 
 The two victim-selection procedures encode
the priority rule: an insertion evicts the lowest-priority eligible
first-choice key, while a deletion pulls back the highest-priority key that had
previously overflowed.  The insertion and deletion algorithms then simply apply
these procedures along the forward eviction chain or the pullback chain, respectively.

\begin{algorithm}[H]
\caption{\(\textsc{SHI-EvictionVictim}(q,y)\)}
\begin{algorithmic}[1]
\Require A full bin \(q\), and an incoming key \(y\) trying to enter \(q\).
\If{\(h_1(y)=q\)}
    \State \(C\gets \textsc{ResidentPrimaries}(q)\cup\{y\}\).
\Else
    \State \(C\gets \textsc{ResidentPrimaries}(q)\).
\EndIf
\If{\(C=\emptyset\)}
    \State The table has failed.
\EndIf
\State \Return the key \(z\in C\) minimizing \(\rho(z)\).
\end{algorithmic}
\end{algorithm}

\begin{algorithm}[H]
\caption{\(\textsc{SHI-PullbackVictim}(q)\)}
\begin{algorithmic}[1]
\State \(C\gets \textsc{OverflowedFrom}(q)\).
\If{\(C=\emptyset\)}
    \State \Return \(\bot\).
\Else
    \State \Return the key \(z\in C\) maximizing \(\rho(z)\).
\EndIf
\end{algorithmic}
\end{algorithm}

\begin{algorithm}[H]
\caption{\(\textsc{SHI-Insert}(x)\)}
\begin{algorithmic}[1]
\State \(y\gets x\).
\State \(q\gets h_1(y)\).
\While{true}
    \If{\(q\) has a free slot}
        \State \(\textsc{CanonicalInsert}(q,y)\).
        \State \Return
    \EndIf
    \State \(z\gets \textsc{SHI-EvictionVictim}(q,y)\).
    \If{\(z\neq y\)}
        \State \(\textsc{CanonicalDelete}(q,z)\).
        \State \(\textsc{CanonicalInsert}(q,y)\).
    \EndIf
    \State \(y\gets z\).
    \State \(q\gets h_2(y)\).
\EndWhile
\end{algorithmic}
\end{algorithm}

\begin{algorithm}[H]
\caption{\(\textsc{SHI-Delete}(x)\)}
\begin{algorithmic}[1]
\State \(q \gets \textsc{Find}(x)\).
\State \(\textsc{CanonicalDelete}(q,x)\).
\While{\(q\notin L_\ell\)}
    \State \(z\gets \textsc{SHI-PullbackVictim}(q)\).
    \If{\(z=\bot\)}
        \State \Return
    \EndIf
    \State \(q^+\gets h_2(z)\).
    \State \(\textsc{CanonicalDelete}(q^+,z)\).
    \State \(\textsc{CanonicalInsert}(q,z)\).
    \State \(q\gets q^+\).
\EndWhile
\end{algorithmic}
\end{algorithm}

\paragraph{Analysis. } Having described the insertion and deletion algorithms, we
can now analyze the algorithm. Critically, thanks to history independence,
the state of the data structure at any given moment can be reasoned about through the 
 Static Placement Theorem. 

\begin{theorem}[Strongly History-Independent Dynamic Placement]
\label{thm:dynamic-shi}
Suppose that the parameters $n$, $B$, $m$ satisfy the Divisibility Assumption from
Subsection \ref{sec:divisibility-assumption}. Consider an oblivious sequence of $T = n^{O(1)}$ 
insertions and deletions that never has more than $n$ live keys at a 
 time; and assume that the constant $c$ used by the algorithm is sufficiently large as a function of
  $\log_n T$.

Then the algorithm above uses $m = n + O(n^{0.75})$ slots, is strongly history-independent,
and supports the full insertion/deletion sequence without failure with high probability in $n$. 
Each insertion or deletion takes time $O(\log^2 n)$, while queries take time $O(\log n)$. 
\end{theorem}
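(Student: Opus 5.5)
The plan is to first set up an invariant and then reduce everything to the static theorems. The invariant is that after every operation the table equals the canonical state $\Sigma(S)$ for the current live set $S$. Here $\Sigma(S)$ is the output of the static placement process on $S$ in which each bin $q$ forwards exactly the $r_q(S)$ keys of $A_q(S)$ with smallest $\rho$-values, and stores keys in canonical order. Strong history independence is then immediate: $\Sigma(S)$ depends only on $S$ and on $h_1,h_2,\rho$. The space bound $m=n+O(n^{0.75})$ is the layout lemma, and queries take $O(\log n)$ time because they scan two bins of size $B$.

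The correctness argument is an induction over operations, conditioned on a good event $\mathcal{G}$. We let $\mathcal{G}$ be the event that for every set $S$ occurring in the sequence (at most $T+1$ of them), the conclusions of Theorems \ref{thm:static-placement} and \ref{thm:static-placement-underloaded} hold for the static process run on $S$. The priority rule selects forwarded keys without looking at their $h_2$ values, so these theorems apply to this particular rule. Because the sequence is oblivious, each $S$ is fixed in advance, independent of the hash functions. Hence each $S$ fails with probability at most $n^{-\gamma}$, and for $c$ large in terms of $\log_n T$ a union bound over $T=n^{O(1)}$ sets gives $\Pr[\mathcal{G}]\ge 1-n^{-\gamma'}$. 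This is exactly where history independence removes reappearance dependencies: we never analyze a history, only fixed sets.

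Given $\mathcal{G}$ and the invariant for $S$, we show insertion of $x$ produces $\Sigma(S\cup\{x\})$. Compare the canonical states bin by bin in layer order. Adding $x$ to $A_{h_1(x)}$ raises $|A_q|+|I_q|$ by one in that bin. Either the bin had a free slot, in which case no forwarded set changes, or $r_q$ rises by one. In the latter case the new $V_q$ is the old one plus the minimum-$\rho$ key among the resident primaries and $x$, which is exactly what \textsc{SHI-EvictionVictim} selects. That key then enters $I_{q'}$ for $q'=h_2$ of it, and the same argument repeats one layer down. Every other bin is unaffected, since its $A$ and $I$ sets are unchanged.

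The chain terminates at a free slot. By Theorem \ref{thm:static-placement-underloaded}, part 2, every final-layer bin has free space, and every $F_q\le 3B/4<B$. So the victim set $C$ is never empty: a full bin with incoming forwarded load below $B$ must contain a primary key. Deletion is symmetric. Removing $x$ frees a slot in some bin $q$, which lowers $r_{q}$ by one if $q$ had overflowed. The key leaving $V_q$ is the maximum-$\rho$ overflowed key, which is exactly \textsc{SHI-PullbackVictim}. That key leaves $I_{q^+}$, and the argument recurses one layer down. If $x$ was itself in $V_{h_1(x)}$ we instead start at $h_2(x)$; the algorithm handles this because \textsc{Find} returns the bin holding $x$.

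Each chain visits at most $\ell=O(\log n)$ bins, doing $O(1)$ helper calls per bin. Each helper call costs $O(B)$, since \textsc{OverflowedFrom} scans the 3 successor bins. This gives $O(\ell B)=O(\log^2 n)$ per operation. The main obstacle is the bookkeeping in the inductive step. We must check carefully that the one-key change propagates along a single path, and that the eviction victim, the minimum-$\rho$ key, coincides with the element newly entering the canonical $V_q$. This needs extra care when $x$ itself is the minimum, and when the forwarded load into a bin changes but the bin was not full.
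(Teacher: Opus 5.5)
Your proposal is correct and follows essentially the same route as the paper. You maintain the invariant that the table equals the canonical $\rho$-rule static placement of the current live set (which the paper calls straightforward to verify, and which you spell out along the single eviction/pullback path). You then apply Theorem~\ref{thm:static-placement-underloaded} to each of the polynomially many live sets, which are fixed because the sequence is oblivious, and union bound to rule out an empty victim set; the $O(\ell B)=O(\log^2 n)$ time bound is the same as the paper's.
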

\begin{proof}
It is straightforward to verify that the insertion and deletion procedures above 
correctly maintain the invariant that, if the current set of elements is $S$, then
the overflowing set $V_q(S)$ of each bin $q$ consists of the elements of $A_q(S)$ with the
 smallest $\rho$-values; and that the only bins to have free slots are the bins $q$
 with no overflow elements, meaning $V_q(S) = \emptyset$. It follows that the state
 of the hash table, at any given moment, is precisely what one obtains by applying
 the static placement process from Section \ref{sec:static-placement}, with the rule
 that, when selecting elements from $A_q(S)$ to overflow out of bin $q$, we select
 those with the smallest $\rho$-values. 

This implies two things: (1) that the data structure is strongly history independent, as claimed,
and (2) that we can apply the Static Placement Theorem with $n' < n$ keys 
(Theorem \ref{thm:static-placement-underloaded}) to reason about the state of the hash table 
at any given moment. From Theorem \ref{thm:static-placement-underloaded}, we know that, with high probability in $n$, every bin $q$ has at most $\frac{3B}{4}$ elements forwarded to it 
from each of its predecessors, so any bin $q$ which is full has at least one element $x$ with $h_1(x) = q$. 
Taking a union bound over the $T = n^{O(1)}$ updates, we know that, with high probability in $n$, 
this is true before every insertion. Thus, critically, the failure condition in \textsc{SHI-EvictionVictim} 
can never be triggered
-- every bin that we wish to find an insertion-eviction victim in will have at least one valid victim, namely
some element $x$ with $h_1(x) = q$. 
Since this is the only way in which the hash table can 
incur a failure, the probability of failure is at most $1/\poly(n)$. 

Finally, should no failure occur, the worst-case time bounds are determined as follows. 
Each insertion or deletion interacts with at most $O(\ell) = O(\log n)$ bins, 
and spends time $O(B) = O(\log n)$ interacting with each one. Therefore, the worst-case
time per insertion or deletion is $O(\ell B) = O(\log^2 n)$. Queries, on the other hand,
take worst-case time $O(B) = O(\log n)$, since they need only check two bins.

\end{proof}

\subsection{An $O(\log n)$-time Solution via Weak History Independence} 

The reason that the algorithm in the previous section is unable to achieve
$O(\log n)$-time operations is that, each time it is selecting a key to evict or pull back,
it must find the \emph{specific key} that the strongly history-independent algorithm
requires it to move. This forces the algorithm to either spend $\Omega(B)$ time
scanning the bin to find the key (if the bin is unsorted) or $\Omega(B)$ time 
keeping the bin in sorted order each time the bin is modified (if the bin is sorted, as in the
implementation given in the previous section). Either way, the algorithm 
spends $\Omega(B)$ time at every step of the eviction or pullback chain.

In this section, we will see that by using a \emph{weakly} history-independent data structure,
as opposed to a \emph{strongly} history-independent one, we can reduce the time per operation
to $O(\log n)$. The basic idea is that, instead of finding a specific key to move (during an eviction or pullback),
it suffices to instead make sure that the key which we move is chosen from a certain probability distribution;
and, if we do this correctly, we can still maintain (weak) history independence, but with $O(1)$ time (on average) per 
eviction or pullback step. 

Formally, the weakly history-independent data structure maintains the following distributional invariant:

\begin{quote}
\textbf{The Uniform Overflow Invariant:} For every bin \(q\), conditioned on the placements in earlier layers and on the
value \(r_q\), the overflowing set \(V_q\) is a uniformly random
\(r_q\)-subset of \(A_q\) (and this is independent across bins $q$ in the same layer).
\end{quote}

Note that, if we have this invariant, then we can still apply the Static Placement Theorem,
because the state of the data structure at any given moment is precisely what one obtains by applying
the static placement process from Section \ref{sec:static-placement}, with the rule
that, when selecting elements from $A_q(S)$ to overflow out of bin $q$, we select 
a uniformly random subset. 

In addition to maintaining the Uniform Overflow Invariant, we will also maintain
a second invariant that, within each bin, the keys are kept in a uniformly random order (independently across bins);
and if the bin has free slots, then those slots all appear at the end of the bin.\footnote{We remark that this second invariant is \emph{not necessary} for the correctness or time bounds of the algorithm,
but it is technically necessary in order for us to declare that the resulting 
data structure is weakly history-independent.
It is also slightly convenient algorithmically, as it lets us test in \(O(1)\) time whether a bin is full, by
checking only its final slot.}

\paragraph{The insertion and deletion algorithms. }At a high level, the algorithm works as follows.  Whenever an insertion needs
to evict a key forward, or a deletion needs to pull a key backward, we use a
sampling procedure to choose the key that moves.  The sampling procedures are
designed so that the chosen key has exactly the distribution needed to preserve
the Uniform Overflow Invariant. 

The procedures for finding a key to evict or pull back will now use \emph{rejection sampling} 
to find the desired key. Using the properties from the (underloaded) Static Placement Theorem
(Theorem \ref{thm:static-placement-underloaded}), we will be able to argue that each
sampling trial has probability $\Omega(1)$ of success. This gives an expected-time bound of 
$O(1)$ per eviction/pullback, and applying a Chernoff bound over all the eviction/pullback steps
we will be able to recover a high-probability time bound of $O(\log n)$.

We now give the procedures used to implement insertions and deletions.  First, the following two routines maintain the
random occupied-prefix representation of each bin.

\begin{algorithm}[H]
\caption{\(\textsc{RandomPrefixInsert}(q,x,k)\)}
\begin{algorithmic}[1]
\Require Bin \(q\) currently contains \(k<B\) keys in a random occupied prefix.
\State Choose \(j\in\{1,\ldots,k+1\}\) uniformly at random.
\If{\(j\le k\)}
    \State Move the key in slot \(j\) to slot \(k+1\).
\EndIf
\State Store \(x\) in slot \(j\).
\end{algorithmic}
\end{algorithm}

\begin{algorithm}[H]
\caption{\(\textsc{RandomPrefixDelete}(q,s,k)\)}
\begin{algorithmic}[1]
\Require Bin \(q\) currently contains \(k\) keys in a random occupied prefix,
and slot \(s\le k\) is to be deleted.
\If{\(s<k\)}
    \State Move the key in slot \(k\) to slot \(s\).
\EndIf
\State Clear slot \(k\).
\end{algorithmic}
\end{algorithm}

We also use the following basic helpers.

\[
\begin{array}{ll}
\textsc{Full}(q)
    & \text{return whether bin }q\text{ is full, by checking whether slot }B\text{ is occupied};\\[1mm]

\textsc{Load}(q)
    & \text{return the number of elements in bin }q\text{, through a linear scan of the slots};\\[1mm]

\textsc{Find}(x)
    & \text{return the pair }(q, s)\text{ such that }x\text{ is in slot }s\text{ of bin }q.
\end{array}
\]

Next come the sampling procedures used to find keys for evictions and pullbacks. 
 The first procedure samples a uniformly random
resident first-choice key of a full bin.  This is the operation needed at an
internal step of a forward eviction chain.  The second handles the first step
of an insertion, where the new key itself must also be included in the sample.
The third samples a uniformly random key that has overflowed out of a bin \(q\);
this is the operation needed in a pullback chain.

\begin{algorithm}[H]
\caption{\(\textsc{SampleResidentPrimary}(q)\)}
\begin{algorithmic}[1]
\Require A full bin \(q\).
\For{\(t=1,\ldots,B\)}
    \State Choose a uniformly random slot \(s\in\{1,\ldots,B\}\) of \(q\).
    \If{slot \(s\) contains a key \(z\) with \(h_1(z)=q\)}
        \State \Return \((z,s)\).
    \EndIf
\EndFor
\State Scan slots \(s=1,\ldots,B\) of \(q\) for a key \(z\) with \(h_1(z)=q\).
\If{no such key exists}
    \State The table has failed.
\EndIf
\State \Return \((z,s)\).
\end{algorithmic}
\end{algorithm}

\begin{algorithm}[H]
\caption{\(\textsc{SamplePrimaryOrNew}(q,y)\)}
\begin{algorithmic}[1]
\Require A full bin \(q\), and a new key \(y\) with \(h_1(y)=q\).
\Repeat
    \State Choose uniformly from the following \(4B+1\) locations:
    \Statex \hspace{1cm} the \(B\) slots of \(q\), the \(3B\) slots in the bins of \(\mathsf{Succ}(q)\), and the singleton location \(\{y\}\).
\Until{the chosen location is \(\{y\}\), or contains a key \(z\) with \(h_1(z)=q\)}
\State \Return the selected key and its location. \Comment{If \(\{y\}\) was chosen, this is \((y,\{y\})\).}
\end{algorithmic}
\end{algorithm}

\begin{algorithm}[H]
\caption{\(\textsc{SampleOverflow}(q)\)}
\begin{algorithmic}[1]
\Require A bin \(q\) that has just lost a key.
\If{every bin in \(\mathsf{Succ}(q)\) is nonfull}
    \State Scan the bins in \(\mathsf{Succ}(q)\), collecting all keys \(z\) with \(h_1(z)=q\).
    \If{no such key exists}
        \State \Return \(\bot\).
    \Else
        \State \Return a uniformly random collected key, together with its slot.
    \EndIf
\Else
    \Repeat
        \State Choose a uniformly random slot \(s\) among the \(3B\) slots in the bins of \(\mathsf{Succ}(q)\).
    \Until{slot \(s\) contains a key \(z\) with \(h_1(z)=q\)}
    \State \Return \((z,s)\).
\EndIf
\end{algorithmic}
\end{algorithm}

The insertion algorithm follows a forward eviction chain.  At the first step,
the incoming key is a new first-choice key, so we sample from
\(A_q\cup\{x\}\).  At all later steps, the incoming key was forwarded from the
previous layer, so we sample only from the resident first-choice keys of the
current bin.

\begin{algorithm}[H]
\caption{\(\textsc{WHI-Insert}(x)\)}
\begin{algorithmic}[1]
\State \(y\gets x\).
\State \(q\gets h_1(y)\).
\While{true}
    \If{\(\neg\textsc{Full}(q)\)}
        \State \(k\gets \textsc{Load}(q)\).
        \State \(\textsc{RandomPrefixInsert}(q,y,k)\).
        \State \Return
    \EndIf

    \If{\(h_1(y)=q\)}
        \State \((z,\mathrm{loc})\gets \textsc{SamplePrimaryOrNew}(q,y)\).
        \If{\(\mathrm{loc}\) is a slot of \(q\)}
            \State Replace \(z\) by \(y\) in that slot.
            \State \(y\gets z\).
        \Else
            \State \(y\gets y\). \Comment{\(y\) itself is the newly overflowed key.}
        \EndIf
    \Else
        \State \((z,s)\gets \textsc{SampleResidentPrimary}(q)\).
        \State Replace \(z\) by \(y\) in slot \(s\).
        \State \(y\gets z\).
    \EndIf

    \State \(q\gets h_2(y)\).
\EndWhile
\end{algorithmic}
\end{algorithm}

Deletion creates a pullback chain.  If the bin from which the key is deleted was
already nonfull, then it had no overflowed keys of its own and the operation
stops after the deletion.  Otherwise, the bin has just acquired a hole, and we
try to fill that hole by pulling back a uniformly random key that had overflowed
out of it.

\begin{algorithm}[H]
\caption{\(\textsc{WHI-Delete}(x)\)}
\begin{algorithmic}[1]
\State \((q,s)\gets \textsc{Find}(x)\).
\If{\(\textsc{Full}(q)\)}
    \State \(\textsc{RandomPrefixDelete}(q,s,B)\).
    \While{\(q\notin L_\ell\)}
        \State \(a\gets \textsc{SampleOverflow}(q)\).
        \If{\(a=\bot\)}
            \State \Return
        \EndIf
        \State Let $(z, s)$ be the entries of $a$, and let \(q^+=h_2(z)\).
        \State \(\textsc{RandomPrefixInsert}(q,z,B - 1)\).
        \If{\(\textsc{Full}(q^+)\)}
            \State \(\textsc{RandomPrefixDelete}(q^+,s,B)\).
            \State \(q\gets q^+\).
        \Else
            \State \(\textsc{RandomPrefixDelete}(q^+,s,\textsc{Load}(q^+))\).
            \State \Return
        \EndIf
    \EndWhile
\Else
    \State \(\textsc{RandomPrefixDelete}(q,s,\textsc{Load}(q))\).
    \State \Return
\EndIf
\end{algorithmic}
\end{algorithm}

\paragraph{Analysis. } Having described the insertion and deletion algorithms, we
can now analyze the algorithm. We first argue that the sampling procedures correctly
preserve the Uniform Overflow Invariant.

\begin{lemma}
At any given moment, if the hash table has not failed, then the Uniform Overflow Invariant holds: for every bin $q$, conditional on the placements in earlier layers, the overflowing set $V_q$ is a uniformly random $r_q$-subset of the set $A_q$ consisting of elements with first-hash $q$ (and this is independent across bins $q$ in the same layer). \label{lem:uniform-overflow-invariant}
\end{lemma}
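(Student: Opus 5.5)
We argue by induction on the number of operations performed. The statement we carry is slightly stronger than the lemma: conditioned on the current key set $S$ and on the hash values $h_1,h_2$ of all keys, the joint law of the table state is the law produced by the static placement process with uniformly random overflow selection. Concretely, we process the layers in order. For each bin $q$, given the placements in earlier layers (which fix $I_q$ and hence $r_q$), we choose $V_q$ as a uniformly random $r_q$-subset of $A_q$, independently across the bins of a layer. On top of this, each bin is stored in a uniformly random order with its free slots at the end. The empty table satisfies this vacuously. Because the operation sequence is oblivious, it suffices to show that each single insertion or deletion maps this canonical distribution for $S$ to the canonical distribution for $S\cup\{x\}$ or $S\setminus\{x\}$.

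For an insertion of $x$ with $q=h_1(x)$, there are two cases. If $q$ is not full, then $r_q$ is unchanged and $V_q$ is unchanged, so nothing downstream changes. The only thing to check is that \textsc{RandomPrefixInsert} keeps the order uniform, which is the standard random-prefix argument. If $q$ is full, the new overflow set must be a uniform $(r_q+1)$-subset of $A_q\cup\{x\}$. \textsc{SamplePrimaryOrNew} samples uniformly over all locations that hold a key of $A_q$: every such key is either in $q$ or in one of the bins of $\mathsf{Succ}(q)$. It also samples the new key $y$ itself. The returned key is therefore uniform over the keys of $A_q\cup\{x\}$ that do not lie in $V_q$, namely $(A_q\setminus V_q)\cup\{x\}$. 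This gives the key fact: if $V$ is uniform among $r$-subsets of an $N$-set, and $z$ is uniform in the complement, then $V\cup\{z\}$ is a uniform $(r+1)$-subset.

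The evicted key $z$ then arrives at $h_2(z)$ as a new forwarded key. There, $r$ increases by one, and \textsc{SampleResidentPrimary} chooses a uniformly random resident primary, i.e., a uniform element of $A\setminus V$. The same fact applies. The fallback scan only runs when sampling fails, and on high-probability events the rejection loop succeeds. To get \emph{exact} uniformity here, I would note that the first $B$ trials are uniform conditioned on success. The scan branch introduces a small bias, which I would either absorb as a failure event or fix by returning a uniform key from the scan.

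Independence across bins of a layer holds because each step resamples only within one bin's own $A_q$, using fresh randomness. The one subtlety is that changing $I_q$ for downstream bins changes only their $r$, and the new overflow choice is made by fresh sampling conditioned on everything upstream. For this we need that conditioning on $h_2$ does not bias the choice. It does not: the sampled key is chosen without reference to $h_2$ beyond locating it.

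Deletions are symmetric and use the complementary fact: if $V$ is a uniform $r$-subset and $z$ is uniform in $V$, then $V\setminus\{z\}$ is a uniform $(r-1)$-subset. Two situations need care. First, if $x\in V_q$, then removing $x$ from $A_q$ leaves $V_q\setminus\{x\}$, which is uniform among $r_q$-subsets of $A_q\setminus\{x\}$ by symmetry. Second, if $x$ sits in $q$ unoverflowed and $q$ was full, then \textsc{SampleOverflow} returns a uniform element of $V_q$, pulls it back, and the hole propagates to the next bin, where the same argument repeats. When $\mathsf{Succ}(q)$ contains a full bin, the rejection sampling returns a uniform element of $V_q$. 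In the all-nonfull branch the scan is also uniform. The random-prefix operations preserve the uniform within-bin order.

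I expect the main obstacle to be formalizing the conditioning. The claim is "conditioned on earlier layers" with independence within a layer, but a single chain alters $r$-values in successive layers. We must verify that, after the update, the new $V$'s in layer $i+1$ are uniform given the *new* earlier-layer placements, and that the sampling choices never leak information about which keys were in $V$ sets downstream. I would handle this by coupling each operation with the layer-by-layer static process and checking layer by layer that the transition kernel is exactly "add or remove a uniform element."
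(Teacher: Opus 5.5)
\paragraph{The first insertion step is analyzed incorrectly.} Your treatment of the first step of an insertion into a full bin does not work, and this is the one delicate step of the lemma. You assert that the key returned by \textsc{SamplePrimaryOrNew} is uniform over $(A_q\setminus V_q)\cup\{x\}$, and then you invoke your key fact. Both parts are wrong.

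First, the sampler draws from the $B$ slots of $q$, the $3B$ slots of $\mathsf{Succ}(q)$, and $\{y\}$. It accepts any slot holding a key with first hash $q$. The keys of $V_q$ sit precisely in $\mathsf{Succ}(q)$, so the accepted key is uniform over all of $A_q\cup\{x\}$, not just over the complement of $V_q$.

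Second, your key fact needs $V$ to be a uniform $r_q$-subset of the ground set $A_q\cup\{x\}$. But $V_q$ is uniform over $r_q$-subsets of $A_q$ and never contains $x$.

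In fact, the mechanism you describe does not preserve the invariant. Take $A_q=\{a,b\}$ and $r_q=1$. Overflowing a uniform element of $(A_q\setminus V_q)\cup\{x\}$ yields $\{a,b\}$ with probability $1/2$, and each of $\{a,x\}$ and $\{b,x\}$ with probability $1/4$. The correct answer is $1/3$ each.

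\paragraph{What is missing.} The missing idea is how \textsc{WHI-Insert} uses the sample. If $z$ lies in a slot of $q$ (so $z\in A_q\setminus V_q$), then $z$ is evicted and $x$ takes its slot. If $z\in V_q\cup\{x\}$, then $x$ itself is forwarded. The successor slots are included in the sample space exactly so that $x$ overflows with probability $(r_q+1)/(|A_q|+1)$. This is the probability that a uniform $(r_q+1)$-subset of $A_q\cup\{x\}$ contains $x$.

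The correct argument is a two-case mixture:
\begin{enumerate}
\item The event $z\in V_q\cup\{x\}$ has probability $(r_q+1)/(|A_q|+1)$ for every realization of $V_q$. Conditioning on it therefore leaves $V_q$ uniform, and $V_q\cup\{x\}$ is a uniform $(r_q+1)$-subset containing $x$.
\item On the complementary event, $z$ is uniform in $A_q\setminus V_q$. Here your key fact applies with ground set $A_q$, and it gives a uniform $(r_q+1)$-subset of $A_q$.
\item The mixture weights match, since $\binom{|A_q|}{r_q}/\binom{|A_q|+1}{r_q+1}=(r_q+1)/(|A_q|+1)$.
\end{enumerate}
This is the paper's argument.

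Your handling of the later eviction steps (\textsc{SampleResidentPrimary}), the pullback chain (\textsc{SampleOverflow}), and the removal of $x$ from $A_q$ matches the paper. One small slip: when $x\in V_q$, the remaining set $V_q\setminus\{x\}$ has size $r_q-1$ (the new value of $r_q$), not $r_q$.
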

\begin{proof}
    It suffices to prove that insertions and deletions preserve the invariant.
    Moreover, since the algorithm only lets a bin $q$ have free slots if the bin
    has no overflowed keys (no keys with $h_1(x) = q$ are in the next layer),
    the number of overflowed keys from each bin $q$ will always be 
    $|V_q| = \max\{0, |A_q| + |I_q| - B\} = r_q$. So the only thing that we need
    to check is that the $r_q$ keys in $V_q$ are a uniformly random subset of those in $A_q$. 
    Throughout the rest of the proof, to avoid ambiguity, we will use $A_q$, $V_q$, and $r_q$ to refer
    to the values that these quantities had \emph{before} the current operation.

    Consider the first step of an insertion, where a new key \(x\) enters its
    first-choice bin \(q = h_1(x)\). Suppose \(q\) is full. After inserting \(x\), the overflowing set should
    be a uniformly random \((r_q + 1)\)-subset of \(A_q \cup \{x\}\). The procedure
    \(\textsc{SamplePrimaryOrNew}\) samples a uniform key
    \[
        z\in A_q\cup\{x\}.
    \]
    If \(z \in V_q \cup \{x\}\), the new key \(x\) becomes the newly overflowed key. If
    \(z \in A_q \setminus V_q\), then \(z\) becomes the newly overflowed key. This correctly places \(x\) into the overflow set with probability \((|V_q| + 1)/(|A_q| + 1)\), and the rest of the time it moves a uniformly random element in \(A_q \setminus V_q\) to \(V_q\); so, overall, the invariant that \(V_q\) is uniformly random from \(A_q\) is preserved.

    At every later step of the eviction chain, the incoming key is forwarded from
    the previous layer.  Thus \(A_q\) itself has not changed, but \(r_q\) increases
    by one.  Starting from a uniform \(r_q\)-subset \(V_q\subseteq A_q\), the
    correct update is to add a uniformly random key from \(A_q\setminus V_q\).
    These are exactly the resident first-choice keys of \(q\), and
    \(\textsc{SampleResidentPrimary}(q)\) samples one uniformly. Therefore, each 
    subsequent step of the eviction chain preserves the invariant.
    
    Now consider deletion. If a deletion removes a key $x$ with $h_1(x) = q$, 
    then the removal of $x$ (before we begin any pullback chain) preserves the
    invariant that the overflowed elements from $q$ are a uniformly random subset
    of the elements $y$ present with $h_1(y) = q$ (now $A_q \setminus \{x\}$).
    
    Next, we verify that the pullback chain preserves the invariant. For pullback to
    occur, the bin $q$ from which $x$ is removed must have overflowed keys, meaning 
    that \(r_q\) decreases by one, and the correct update is to remove a uniformly
    random key from \(V_q\), and to bring that element back to the bin $q$.  
    The procedure \(\textsc{SampleOverflow}(q)\) does
    exactly this, so the invariant is once again preserved.
\end{proof}

We also observe that, within each bin, the placement of keys is uniformly random
within some prefix of the bin. As discussed earlier, this invariant is not necessary
for the correctness or running time guarantees, but it is necessary to declare the algorithm weakly history-independent.
\begin{lemma}
At any given moment, if a bin $q$ contains $j$ elements, then they are placed in a uniformly random order in the
first $j$ slots of a bin (and this is independent across bins).
\end{lemma}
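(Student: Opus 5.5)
We prove the statement by induction over the sequence of operations, exactly parallel to the proof of Lemma~\ref{lem:uniform-overflow-invariant}. The invariant is: conditioned on the \emph{set} of keys in each bin (which, by Lemma~\ref{lem:uniform-overflow-invariant}, is determined by the hashes and the sampling choices), the arrangement of the $j$ keys of each bin is a uniformly random bijection onto the first $j$ slots, independently across bins. Initially every bin is empty, so the claim holds trivially. It then suffices to check that each primitive that modifies a bin preserves this property for that bin. The only such primitives are \textsc{RandomPrefixInsert}, \textsc{RandomPrefixDelete}, and the in-place replacement ``replace $z$ by $y$ in slot $s$'' used in \textsc{WHI-Insert}. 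Each of these touches only the bin it acts on, so independence across bins is preserved automatically, given that the randomness each one uses is fresh.

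The three primitives are handled as follows.

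\emph{\textsc{RandomPrefixInsert}.} Suppose the $k$ keys are uniformly arranged in slots $1,\dots,k$. Placing $x$ at a uniform position $j\in\{1,\dots,k+1\}$ and moving the displaced key to slot $k+1$ yields each of the $(k+1)!$ arrangements with probability $1/(k!(k+1))$. This is the standard ``insert into a random permutation'' step.

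\emph{\textsc{RandomPrefixDelete}.} Moving the last key into the vacated slot $s$ gives a uniform arrangement of the remaining $k-1$ keys in the first $k-1$ slots. The caveat is that $s$ must be chosen independently of the arrangement, conditioned on which key is removed. Under that condition, the map from arrangements of $k$ keys with the removed key at $s$ to arrangements of $k-1$ keys is a bijection for each fixed $s$, which gives uniformity.

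\emph{Replacement.} Replacing $z$ by $y$ in slot $s$ preserves uniformity provided that the identity of $z$ is independent of the arrangement: then relabeling $z\mapsto y$ maps a uniform arrangement to a uniform one.

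The main obstacle is this independence requirement in the last two cases. The sampling procedures pick a \emph{slot} at random and then read off a key, so we must verify that the distribution of the chosen key's identity does not depend on the permutation, and that conditioning on which key was chosen leaves the arrangement uniform. The argument is that each sampler (\textsc{SampleResidentPrimary}, \textsc{SamplePrimaryOrNew}, \textsc{SampleOverflow}) is invariant under permuting slots within a bin. Repeated uniform slot choices with a key-based acceptance rule pick each eligible key with equal probability no matter where it sits. Consequently, conditioned on the chosen key $z$, its slot $s$ is uniform among $z$'s possible positions, and the rest of the arrangement is still uniform. The deterministic fallback scan in \textsc{SampleResidentPrimary} occurs only in a failure-adjacent branch; there the key is still determined by a slot-symmetric rule up to the order of the scan, and one can argue (or modify the scan to pick uniformly among found keys) that symmetry holds. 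Combining these cases with the freshness of each call's randomness completes the induction.
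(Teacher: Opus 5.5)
Your proposal follows the paper's proof: both note that \textsc{RandomPrefixInsert} and \textsc{RandomPrefixDelete} preserve a uniform occupied prefix. Both also note that in-place replacement preserves it, because the sampled victim's identity is independent of the within-bin order, so conditioned on it its slot is uniform, and overwriting a fixed key of a uniform arrangement gives a uniform arrangement.

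The only adjustment is at the fallback scan in \textsc{SampleResidentPrimary}, where your first option (arguing symmetry for the scan as written) does not work and you need your second. Returning the first eligible key in slot order puts the incoming key at the earliest eligible position, so exact uniformity genuinely fails on that branch. The branch occurs with probability $(1-e/B)^B$, where $e$ is the number of resident primaries, and the paper's proof silently overlooks it. One must therefore either have the scan return a uniformly random found key or accept a tiny total-variation error.
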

\begin{proof}
By design, any call to \(\textsc{RandomPrefixInsert}\) or \(\textsc{RandomPrefixDelete}\) 
preserves the invariant. More subtly, when either an eviction chain or a pullback chain
replaces a sampled victim \(z\) by an incoming key \(y\) in a full bin, the choice of 
who $z$ is is independent of the order of keys within the bin. Conditioned on the identity of $z$,
the slot of \(z\) is uniformly random. Given a uniformly random permutation of $j$
elements $Q$, if we overwrite a fixed one of those elements $z$ with a new element $y$, 
the resulting permutation is still uniformly random over the elements in $Q \setminus \{z\} \cup \{y\}$.
Thus, again, the invariant is preserved.
\end{proof}

\begin{theorem}[Weakly History-Independent Dynamic Placement]
\label{thm:dynamic-whi}
Suppose that the parameters $n$, $B$, $m$ satisfy the Divisibility Assumption from
Subsection \ref{sec:divisibility-assumption}. Consider an oblivious sequence of $T = n^{O(1)}$ 
insertions and deletions that never has more than $n$ live keys at a 
time; and assume that the constant $c$ used by the algorithm is sufficiently large as a function of
$\log_n T$.

Then the algorithm above uses $m = n + O(n^{0.75})$ slots, is weakly history-independent,
and supports the full insertion/deletion sequence without failure with high probability in $n$. 
Each insertion or deletion takes time $O(\log n)$ with high probability in $n$, while queries take time $O(\log n)$
in the worst case. 
\end{theorem}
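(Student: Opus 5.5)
The plan follows the structure of the proof of Theorem~\ref{thm:dynamic-shi}, replacing the priority rule with the Uniform Overflow Invariant and adding a running-time analysis for the rejection samplers.

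\textbf{Step 1: the state is a static placement.} By Lemma~\ref{lem:uniform-overflow-invariant} and the random-prefix lemma, at any moment (absent failure) the joint distribution of the table state depends only on the current key set $S$ and on $h_1,h_2$. Indeed, it is exactly the distribution obtained by running the static placement process of Section~\ref{sec:static-placement}, with overflow sets chosen as uniformly random $r_q$-subsets, followed by a uniform order inside each bin. This gives weak history independence. It also lets us apply Theorem~\ref{thm:static-placement-underloaded} to the state before each of the $T=n^{O(1)}$ operations. Since the sequence is oblivious and the state's distribution depends only on $S$, the internal random coins are not correlated with the future keys. Taking $c$ large relative to $\log_n T$ and union bounding, with high probability all three conclusions of that theorem hold before every operation.

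\textbf{Step 2: no failure.} Conclusion~1 gives $F_q\le 3B/4$, so every full bin contains at least $B/4$ resident first-choice keys. Hence the scan fallback in \textsc{SampleResidentPrimary} always finds a victim, and the failure branch never fires. Conclusion~2 (a free slot in every bin of $L_\ell$) guarantees that eviction chains terminate. Chains have length at most $\ell$, and the intermediate states along a chain differ from static states by only one key, so applying the theorem at both endpoints suffices.

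\textbf{Step 3: success probability of each sampling trial.} I would show that every rejection-sampling trial succeeds with probability $\Omega(1)$.
\begin{itemize}
\item For \textsc{SampleResidentPrimary}, each trial succeeds with probability at least $1/4$. The probability that all $B$ trials fail is $(3/4)^B=n^{-\Omega(c)}$, so the $O(B)$ scan essentially never occurs.
\item For \textsc{SampleOverflow} in the branch where some successor bin $q^+\in\mathsf{Succ}(q)$ is full, I would use conclusion~3: a full bin has $H_{q,q^+}\ge\eta B$ keys coming from $q$. So a uniform slot among the $3B$ successor slots lands on a key of $q$ with probability at least $\eta/3$.
\item For \textsc{SamplePrimaryOrNew}, the bin $q$ is full and so has at least $B/4$ primaries. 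The success probability is therefore at least $1/16$.
\item The branch of \textsc{SampleOverflow} in which all successors are nonfull costs $O(B)$. However, it ends the pullback chain, since the pulled-back key leaves a nonfull bin. So it occurs at most once per operation, and the same holds for \textsc{Load} calls.
\end{itemize}

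\textbf{Step 4: total time and queries.} Each operation makes at most $\ell=O(\log n)$ sampling steps, each a geometric number of trials with constant success probability. Conditioned on the state bounds of Step~1, these trials use fresh coins. A Chernoff bound for sums of geometric random variables then shows that the total number of trials is $O(\ell+\log n)=O(\log n)$ with probability $1-n^{-\Omega(1)}$. The constant term is adjustable, so this union bounds over all $T$ operations. Adding the $O(B)$ terminal scans gives $O(\log n)$ per operation. Queries check two bins, costing $O(B)=O(\log n)$. The slot count $m=n+O(n^{0.75})$ comes from the layout lemma.

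\textbf{Main obstacle.} The delicate part is Step~3's use of conclusion~3, and more broadly justifying that the Step~1 state bounds hold at each intermediate point of a chain, not just between operations. My first approach is to note that a mid-chain state is a static state for $S$ with a single key in transit. Conclusion~3 is then applied to the bin $q^+$ whose overflow keys we sample, with the one missing key shifting counts by at most one, which is negligible against $\eta B$.
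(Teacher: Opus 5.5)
Your proposal is correct and follows essentially the same route as the paper. Weak history independence (Lemma~\ref{lem:uniform-overflow-invariant} plus the random-prefix lemma) lets you invoke Theorem~\ref{thm:static-placement-underloaded} before every operation. Conclusion~1 then rules out failure and gives constant success probability for \textsc{SampleResidentPrimary} and \textsc{SamplePrimaryOrNew}, conclusion~3 gives the $\eta/3$ bound for \textsc{SampleOverflow}, and a Chernoff bound finishes the time analysis.

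The obstacle you flag is milder than you fear. Every bin a sampler inspects has not yet been modified by the current operation: the current bin itself, or the successor bins in the next layer. So the pre-operation bounds apply exactly, with no one-key correction, which is how the paper argues.
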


\begin{proof}
    The preceding lemmas tell us that the algorithm is weakly history-independent. And, more
    importantly, that the state of the hash table at any given moment is drawn from the same
    distribution as if we were to rebuild the hash table from scratch with the same set of keys,
    using the static placement process, and with the rule that each overflowing bin forwards a uniformly
    random subset of its keys to the next layer. Thus, we can apply the (underloaded) Static Placement Theorem
    (Theorem \ref{thm:static-placement-underloaded}) to conclude that, at any given moment, 
    with high probability in $n$, the following events hold:
    \begin{enumerate}
    \item Every bin $q$ has at most $3B/4$ keys forwarded into it from the previous layer.
    \item Every bin $q$ in a layer $L_i$, $i \ge 2$, either has a free slot or
    satisfies $H_{q', q} \ge \eta B$ for each of its predecessors $q' \in \mathsf{Pred}(q)$.
    (Recall that $H_{q', q}$ is the number of keys forwarded from $q'$ to $q$.)
    \end{enumerate}
    Since we are interested only in a polynomial number $T$ of insertions/deletions, we have 
    with high probability in $n$ that these hold before and after every operation. 
    
    To complete the proof, we argue that, if the above events hold before a given insertion or deletion,
    then the operation succeeds (with probability 1) and takes time $O(\log n)$ with high 
    probability in $n$. 

    First, note that the only way for the operation to fail is if the failure condition
    is triggered in \textsc{SampleResidentPrimary}. This can only happen if some bin $q$
    becomes completely occupied by elements forwarded from the previous layer. But, this
    means that, before the operation, we had $F_q \ge B - 1 > 3B/4$, which is 
    a contradiction.

    Finally, we argue that the operation takes time $O(\log n)$ with high 
    probability in $n$. 
    To do this, we must reason about the time taken by each of the sampling procedures.
    Specifically, for a given sampling procedure, what is the probability that each of its
    sampling trials succeeds (conditioned on everything else that has happened so far)?

    First, consider \(\textsc{SampleResidentPrimary}(q)\).  This procedure is used
only when \(q\) is full (and has not been edited yet by the current operation).  Since at most \(3B/4\) of the keys in \(q\) were
forwarded from the previous layer, at least \(B/4\) slots contain resident
first-choice keys of \(q\).  Therefore each trial succeeds with probability at
least \(1/4\).

The same observation handles \(\textsc{SamplePrimaryOrNew}(q,y)\).  The sample
space has size \(4B+1\): the bin \(q\), the three successor bins of \(q\), and
the singleton location \(\{y\}\).  Since \(q\) is full and contains at least
\(B/4\) resident first-choice keys of \(q\), each trial succeeds with
probability \(\Omega(1)\).

The only subtle sampler is \(\textsc{SampleOverflow}(q)\).  Suppose first that
every successor of \(q\) is nonfull.  Then the procedure scans the three
successor bins.  This costs \(O(B)\), but it is necessarily the final step of
the pullback chain.  Indeed, after a key is pulled out of a nonfull successor
\(q^+\), that successor still has a free slot, so the pullback chain stops. 

Now suppose that at least one successor \(q^+\in\mathsf{Succ}(q)\) is full.
Since \(q\) is a predecessor of \(q^+\), 
\[
    H_{q,q^+}\ge \eta B.
\]
Thus at least \(\eta B\) of the \(3B\) slots in the successor bins contain keys
overflowed from \(q\).  Each rejection-sampling trial in
\(\textsc{SampleOverflow}(q)\) therefore succeeds with probability at least
\(\eta/3=\Omega(1)\).

Consequently, apart from the final step in a pullback chain, every
trial made by every sampler has conditional success
probability \(\Omega(1)\).  An eviction or pullback chain has length at most
\(\ell\), so an operation needs at most \(O(\ell) = O(\log n)\) successful samples.  A
Chernoff bound tells us that, for any $t = \Theta(\log n)$, if we flip coins that
each independently have $\Omega(1)$ probability of heads, then the number of flips
needed to get $t$ heads is at most $O(t + \log n) = O(\log n)$ with high probability in $n$. 
Thus the total time spent in sampling procedures (outside of the final 
call to \(\textsc{SampleOverflow}\)) is at most $O(\log n)$.

The remaining costs are \(O(B) = O(\log n)\) per operation, as they are all incurred
in either the first or final step of the chain, and from procedures that take
time at most $O(B)$ in the worst case. The theorem therefore follows.
\end{proof}

\section{Removing the Divisibility Assumption}
\label{sec:general-capacities}

So far, all of our results have been stated under the Divisibility Assumption 
from Subsection \ref{sec:divisibility-assumption}. This assumption states that
the number $n$ can be expressed as 
$$n = B2^K(2^\ell - 3/2),$$
where $\ell$ and $K$ are integers satisfying 
$\ell = \frac{1}{4}\log_2 n \pm O(1)$ and $K = \frac{3}{4}\log_2 n - \log B \pm O(1)$.
The assumption allowed us to design each layer of the hash table to have an exactly power-of-two number of bins.

In this section, we argue that the Divisibility Assumption is 
essentially WLOG, so long as we are willing to consider load factors $1 - \epsilon$ of at most, say, $1 - \Omega(n^{-1/5})$. 
The basic idea is that, given any capacity $n$,
we can break the
table into equal- and polynomial-size sub-tables (``blocks'') for which the arithmetic assumptions
hold, and we can hash each key to a block. We can then apply a Chernoff
to show that each block, with high probability, receives very close to the 
number of keys it is supposed to. And, finally, we can treat each of
the blocks as a separate hash table satisfying the Divisibility Assumption.

The following theorem, whose proof is deferred to Appendix \ref{app:general-capacities},
makes this reduction explicit, and removes the Divisibility Assumption from the results
in the previous section.

\begin{restatable}[Arbitrary capacities]{theorem}{generalcaps}
\label{thm:general-capacities}
There is a constant \(\alpha>0\), in fact one may take \(\alpha=1/5\), such that:
 For every sufficiently large \(n\), there exists a hash table construction
 with the same guarantees as Theorem \ref{thm:dynamic-whi} (or, if one wishes, Theorem \ref{thm:dynamic-shi}),
 but using $m = n + O(n^{1-\alpha})$ slots. 
\end{restatable}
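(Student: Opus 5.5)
The plan is to split the table into $M$ equal blocks, each built for a capacity $N$ that satisfies the Divisibility Assumption with room to spare, and to hash every key to a block with an independent uniform hash $g$. Each key is then handled inside block $g(x)$ by the construction of Theorem~\ref{thm:dynamic-whi} (or Theorem~\ref{thm:dynamic-shi}), using that block's own $h_1,h_2$ (and $\rho$) evaluated on $x$. Queries, insertions and deletions touch only one block, so their time bounds are inherited. Because the block hash fixes which block an operation touches, every block sees an oblivious subsequence of the global operation sequence.

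\textbf{Step 1: choose the block size.} Fix $M = \lfloor n^{\beta}\rfloor$ blocks for a small constant $\beta$, say $\beta = 1/10$. The target per-block capacity is
\[
N^* = \frac{n}{M} + C\sqrt{(n/M)\log n}.
\]
Choose integers $K,\ell$ with $\ell = \frac14\log_2 N^* \pm O(1)$, and define
\[
N = B2^K\left(2^\ell - \tfrac32\right),
\]
taking $B=\lceil c\log N\rceil$. Now vary $K$ (for fixed $\ell$) so that $N$ is the smallest value of this form with $N \ge N^*$. Consecutive admissible values of $N$ differ by a factor of about $2$, which is far too coarse. So instead I would keep $K$ and $\ell$ fixed and adjust $B$.

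Since $n$ is polynomial in $N$, the Static Placement Theorem's analysis only needs $B=\Theta(\log N)$ with a large enough constant, not exactly $\lceil c\log N\rceil$. With $B$ ranging over an interval of integers of length $\Theta(\log N)$, $N$ changes in steps of
\[
2^K\left(2^\ell-\tfrac32\right) = \Theta(N/\log N).
\]
This is still coarse. I would therefore also vary $K$ and $\ell$ within their $\pm O(1)$ windows.

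If that still cannot hit $N^*$ to within $O(N^{1-\alpha'})$, the fallback is to pad. Pick the admissible $N\ge N^*$ with $N \le N^*(1+O(1/\log N))$, and give up on $\alpha=1/5$ in favor of accepting extra slack. A cleaner alternative is to use blocks of two or more slightly different admissible sizes and choose $g$ nonuniformly, with probability proportional to capacity. Mixing sizes lets the total capacity $\sum N_j$ approximate $n + O(n^{1-\alpha})$ while every block's expected load stays a $(1-O(n^{-\alpha}))$ fraction of its capacity. I expect this arithmetic, getting total slots $n+O(n^{4/5})$, to be the main obstacle. It is the one place I would have to tune $\beta$ against the per-block slack, which is $\Theta(N^{3/4})$ by the earlier lemma, and against the per-block rounding.

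\textbf{Step 2: concentration.} At any fixed time the live set $S$ has $|S|\le n$, and it is fixed independently of $g$ because the sequence is oblivious. A Chernoff bound shows each block gets at most $|S|\,p_j + O(\sqrt{N\log n})$ keys with probability $1-n^{-\gamma}$. A union bound over the $M$ blocks and the $T=\poly(n)$ time steps then gives that no block ever exceeds its capacity $N_j$, because
\[
\sqrt{N\log n} \ll N - |S|p_j
\]
by the choice of $N^*$.

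\textbf{Step 3: apply the block guarantees.} Conditioned on Step 2, each block sees an oblivious sequence of $\poly(N)$ operations that never exceeds its capacity. Theorem~\ref{thm:dynamic-whi} then applies with failure probability $N^{-\gamma'}$, where $\gamma'$ can be made large by choosing $c$ large as a function of $\log_N T = O(1/\beta)$. A union bound over the $M$ blocks finishes the correctness argument.

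\textbf{Step 4: totals.} The space is
\[
\sum_j \left(N_j + O(N_j^{3/4})\right) = n + O\!\left(n^{1-\beta/2}\sqrt{\log n}\right) + O\!\left(n^{1-\beta/4}\right) + (\text{rounding loss}),
\]
and I would choose $\beta$ and the mixed block sizes so that all of this is $O(n^{4/5})$. History independence and the time bounds hold because each block is an independent copy of the earlier structure. The map $g$ is fixed randomness, so weak or strong history independence is preserved.
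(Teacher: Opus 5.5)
Your Step~1 never establishes the space bound, which is the entire content of the theorem, and the route you set up cannot deliver it.

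You fix the number of blocks $M=\lfloor n^{\beta}\rfloor$ first, so every block's capacity must sit just above $N^*\approx n/M$. But there are only $O(\log N)$ admissible capacities $B2^K(2^\ell-\tfrac32)$, even letting $B$ range over $\Theta(\log N)$ integers and $K,\ell$ over their $O(1)$ windows. Hence for some $n$ the nearest admissible value above $N^*$ is $\Omega(N/\log N)$ away, and the total rounding loss is $\Omega(n/\log n)$. Your padding fallback therefore sacrifices not just $\alpha=1/5$ but every $\alpha>0$. The mixed-size alternative is salvageable, but you leave it unexecuted.

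The missing idea is to make the block count, not the block size, the free parameter:
\begin{itemize}
\item Fix one admissible capacity $N=B2^{K_0}(2^{\ell_0}-\tfrac32)=\Theta(n^{4/5})$, e.g.\ $\ell_0=\lfloor\frac15\log_2 n\rfloor$ and $K_0=\lfloor\frac35\log_2 n-\log_2 B\rfloor$.
\item Set $\Delta=\sqrt{N}\log n$ and use $r=\lceil n/(N-\Delta)\rceil$ blocks with a uniform block hash.
\item The ceiling wastes at most one block, so $rN\le n+O(n\Delta/N+N)=n+O(n^{4/5})$.
\item The per-block overhead adds $r\cdot O(N^{3/4})=O(n^{1/5}\cdot n^{3/5})=O(n^{4/5})$.
\item Each block's expected load is at most $N-\Delta$, so a Chernoff bound makes exceeding $N$ an $e^{-\Omega(\log^2 n)}$ event. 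After that, your Steps~2 and~3 go through unchanged.
\end{itemize}
Note that $N=\Theta(n^{4/5})$ is exactly where the one-block loss $O(N)$ balances the total overhead $O(nN^{-1/4})$; that balance is where $\alpha=1/5$ comes from. With your $\beta=1/10$ (blocks of size about $n^{9/10}$), this construction would only give $\alpha=1/10$.

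Separately, your Step~4 exponents are inconsistent with Step~1. With $M=n^{\beta}$ blocks of size about $n^{1-\beta}$, the concentration padding totals $M\sqrt{N\log n}=n^{(1+\beta)/2}\sqrt{\log n}$, and the overhead totals $MN^{3/4}=n^{(3+\beta)/4}$. The expressions $n^{1-\beta/2}\sqrt{\log n}$ and $n^{1-\beta/4}$ that you wrote are what you would get with blocks of size $n^{\beta}$.
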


\section{Supporting Arbitrary Load Factors}
\label{sec:all-load-factors}

So far, the data structures in the paper have been tuned to operate at load
factor \(1-1/n^{\Omega(1)}\), while supporting insertions, deletions, and
queries in time \(O(\log n)\).  In this section we explain how to support load
factor \(1-\epsilon\), for an arbitrary parameter \(\epsilon\ge n^{-\alpha}\) (where
$\alpha$ is some absolute constant determined by the theorem below),
with time-per-operation $O(\log \epsilon^{-1})$ rather than $O(\log n)$.

\begin{theorem}[Load factor \(1-\epsilon\)]
\label{thm:epsilon-load-factor}
There are absolute constants \(\epsilon_0>0\) and \(\alpha>0\) such that, for
every \(n^{-\alpha}\le \epsilon\le \epsilon_0\), there is an open-addressed hash
table using \(n+O(\epsilon n)\) slots, and that can support any oblivious insertion/deletion sequence of length
\(n^{O(1)}\), with up to $n$ live keys at a time, with the following guarantees.    With probability at least \(1-1/\poly(n)\),
every operation succeeds.  Queries take worst-case time
\(O(\log\epsilon^{-1})\), and insertions and deletions take expected amortized
time \(O(\log\epsilon^{-1})\). 

The only source of amortization is a global
rebuild performed once every \(n\) updates, which is also permitted to use up to
$n$ additional slots of temporary space. In addition to the slots themselves, 
the hash table stores $O(\log n)$
machine words of metadata. 
\end{theorem}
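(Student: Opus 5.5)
Here is the plan I would follow. The idea is to split the table into many small independent braided cuckoo tables, each of polylogarithmic-in-$\epsilon^{-1}$ scale, and to put a small backyard behind them.

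\textbf{Partition.} Set $N := \epsilon^{-\beta}$ for a suitable constant $\beta$, chosen so that Theorem \ref{thm:general-capacities} applied with capacity $N$ gives slack $O(N^{1-\alpha'}) \le \epsilon N/2$. In other words, each sub-table runs at load factor $1-\Theta(\epsilon)$ and uses $\Theta(\log N) = \Theta(\log \epsilon^{-1})$ time per operation. We partition the table into $n/((1-\epsilon/4)N)$ blocks, each of capacity $N$. Each key $x$ hashes via a block hash $g(x)$ to one block. The expected number of live keys per block is then $(1-\epsilon/4)N$.

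\textbf{When does a block behave well?} Inside a block, Theorem \ref{thm:general-capacities} (built on Theorems \ref{thm:dynamic-whi} and \ref{thm:dynamic-shi}) succeeds with probability $1-1/\poly(N)$ over the block's own hash functions. This holds provided the block's load never exceeds $N$. We cannot union bound over blocks, since $1/\poly(N)$ is only $\poly(\epsilon)$. Two failure events must therefore be absorbed elsewhere. The first is \emph{overload}: a Chernoff bound gives probability $e^{-\Omega(\epsilon^2 N)}$, which is at most $\poly(\epsilon)$ once $\beta>2$. The second is \emph{internal failure}: probability $\poly(\epsilon)$, after choosing the constant $c$ large so that the exponent beats any fixed power.

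We handle both with a backyard: a secondary table of size $O(\epsilon n)$, implemented as a standard deamortized cuckoo table \cite{deamortizedcuckoo,stashcuckoo} with $O(1)$-time operations at constant load factor. Any block that is \emph{bad} at the last global rebuild, or that becomes bad during the epoch, is marked, and its keys are routed to the backyard.
\begin{itemize}
\item[] \emph{Queries} check the block's bins and the backyard, costing $O(\log \epsilon^{-1})$ in the worst case.
\item[] \emph{Updates} go to the block unless it is marked.
\end{itemize}

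\textbf{Bounding the backyard.} The expected number of keys in bad blocks is $\poly(\epsilon)\cdot n \le \epsilon n$. Bad events in different blocks are nearly independent, since each block has its own hash functions and block loads are negatively associated. A Chernoff or McDiarmid bound (Lemma \ref{lem:mcdiarmid}) then shows that with probability $1-1/\poly(n)$ the backyard holds $O(\epsilon n)$ keys at all times. This relies on $\epsilon \ge n^{-\alpha}$, which makes the number of blocks polynomial in $n$ and lets $\epsilon n$ dominate the deviations.

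Marking is one-way within an epoch, so the backyard only grows between rebuilds. A rebuild runs once every $n$ updates: it resamples the hash functions, reinserts everything into fresh blocks, and uses up to $n$ extra slots, giving $O(\log \epsilon^{-1})$ amortized cost. The metadata is $O(\log n)$ words: the hash seeds and a few counters. I would store the set of marked blocks implicitly, for example via a designated marker slot inside each block.

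\textbf{Main obstacle: dynamic failures within an epoch.} A block that becomes bad mid-epoch must move its up to $N = \poly(\epsilon^{-1})$ keys to the backyard. Amortized over the $n$ updates this is cheap, since the expected number of such events is $\poly(\epsilon)\, n/N$ per epoch. The difficulty is that bad events must be detected rather than assumed. My plan is to redefine a block as bad whenever any update in it triggers a failure path or an over-long sampling loop, and to argue via the history-independence of Theorem \ref{thm:dynamic-whi} that each update fails with probability $\poly(\epsilon)$. The price is conditioning on the oblivious sequence restricted to each block. I would verify independence carefully here.
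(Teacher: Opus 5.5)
Your architecture is essentially the paper's: mini-structures of capacity $\poly(\epsilon^{-1})$ built from Theorem~\ref{thm:general-capacities}, a deamortized cuckoo backyard of size $\Theta(\epsilon n)$, a from-scratch rebuild every $n$ updates, and a bound on the keys assigned to bad blocks. However, two steps fail as written.

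First, evacuating a block that turns bad mid-epoch costs $\Theta(N)=\poly(\epsilon^{-1})$ time inside a single update. That is a second source of amortization, which the statement explicitly excludes (and Section~\ref{sec:deamortization} relies on its absence). It is also unnecessary. At every moment of an eviction chain, every key other than the one currently in hand sits in a legal slot. So on a failure, or on exceeding a time budget $L=\Theta(\log\epsilon^{-1})$, you just put that single key into the backyard with no rollback; a deletion that times out simply stops. The block stays in service and ``bad'' becomes purely an analysis notion. You need no marks or per-block metadata, and every key that ever reaches the backyard still belongs to a bad block.

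Second, the claim that each block is bad with probability $\poly(\epsilon)$ does not follow from Theorem~\ref{thm:dynamic-whi}. Its $1/\poly(N)$ guarantee covers only sequences of $N^{O(1)}$ operations, with $c$ tuned to the exponent. An oblivious sequence can send $\Theta(n)$ operations to one block (fill the table, then repeatedly insert and delete one key), and for $\epsilon$ near $\epsilon_0$, $n$ is not polynomial in $N$. Each insertion into a full bin has probability at least $N^{-O(1)}$ of overrunning $L$, since its sampling coins are fresh, so such a block times out almost surely. You need the paper's separate traffic-bad class. An epoch has at most $2n$ local operations (rebuild insertions included), so at most $2n/N^3$ blocks receive more than $N^3$ of them. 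There are also at most $n/N^3$ capacity-bad blocks (McDiarmid over $g$) and at most $n/N^3$ locally bad ones (Chernoff over the independent local randomness, after conditioning on $g$ and on the other two classes). So the keys in bad blocks are at most the sum of the $O(n/N^3)$ largest block loads. Lemma~\ref{lem:largest-minis} bounds this by $O(n/N^2)\le \epsilon n/3$ w.h.p. This largest-loads step is also what your ``nearly independent'' concentration claim is missing, since a block's load is correlated with whether it is bad.
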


The amortization (and the large amount of temporary space used during rebuilds) will both
be removed in Section \ref{sec:deamortization}. 

The constants used in the proof are not optimized.  Throughout the section we
will assume that $\alpha$ and $\epsilon_0$ are sufficiently small positive constants.

\paragraph{Why a direct retuning is not enough.}
To prove the theorem, a tempting approach would be to
directly modify the construction from the previous section to use
bins of size $B = \Theta(\log\epsilon^{-1})$ instead of
$B=\Theta(\log n)$. This would give the desired time and space bound \emph{if the
construction succeeded}, but the resulting table would fail with substantial
probability, even for static placement.

Indeed, consider a compatible pair of bins \((q_1,q_2)\) that can arise as the
first and second choices of a key.  The probability that a fixed key has this
particular pair of choices is \(\Theta(B/n)\), so the number of keys with this
pair of choices has mean \(\Theta(B)\).  If
\(B=\Theta(\log\epsilon^{-1})\), then the probability that this pair receives
more keys than the \(2B\) slots available in \(q_1\cup q_2\) is 
\(\epsilon^{\Theta(1)}\) (rather than \(n^{-\Theta(1)}\)). Since there are
\(\Theta(n/B)\) possible pairs $(q_1,q_2)$, we expect roughly
\((n/B)\poly(\epsilon)\) such pairs to be overloaded, which is much larger than
\(1\) for natural values of \(\epsilon\).

Thus, we cannot hope to prove Theorem \ref{thm:epsilon-load-factor} by 
simply reusing the construction from the previous section. 

\paragraph{This section: Using a frontyard/backyard structure to prove Theorem \ref{thm:epsilon-load-factor}. }
Instead, we take a slightly more elaborate approach which, at a high level,
is as follows.

We break the data structure into a ``frontyard'' and a ``backyard'' \cite{arbitman2010backyard,bender2023iceberg}. 
The frontyard, which consists of the vast majority of the slots in the table,
 is divided into independent mini-structures of size
\(\poly(\epsilon^{-1})\), each of which is implemented using the construction 
from Theorem \ref{thm:general-capacities} (applied to Theorem \ref{thm:dynamic-whi}). Each key hashes to a random mini-structure,
and by default insertions/deletions are performed on that mini-structure.
If, however, an operation takes too long (time at least $L$, where $L$
is a sufficiently large constant multiplied by $\log\epsilon^{-1}$) or incurs 
a failure, then the operation 
is aborted in the mini-structure and completes in the backyard. The backyard can therefore be viewed as a ``fallback'' data structure
to handle failures (either time failures or genuine failures) that occur in the frontyard.

Additionally, once every \(n\) updates, we rebuild the entire
structure from scratch, allowing us to clear out unnecessary elements from the 
backyard, and to ``repair'' mini-structures that have incurred failures (but 
that are ready to be used again). 

We will describe the role of the frontyard and the backyard in more detail below.
The key insight in the analysis will be to show that (with high probability in $n$)
the number of slots that we need for the backyard is very small ($n \poly(\epsilon)$),
so the overall space usage remains at most $n + O(\epsilon n)$ slots.

\subsection{Data Structure Description}

We now present the full description of the data structure.

\paragraph{The frontyard. }
Let \(N=\epsilon^{-5}\), where for convenience we assume $N$ is an integer
(rounding does not affect the argument that follows at all). 

Each frontyard mini-structure is an independent copy of the 
data structure from Theorem \ref{thm:general-capacities} (applied to Theorem \ref{thm:dynamic-whi}),
 with capacity parameter \(N\).  Thus one mini-structure uses \(M=N+O(N^{4/5})\) slots,
 supports $O(\log\epsilon^{-1})$-time queries, and supports any $\poly(\epsilon^{-1})$-length 
 sequence of insertions and deletions with at most $N$ live keys in time $O(\log\epsilon^{-1})$
 per operation, with high probability in $\epsilon^{-1}$. 

We use
\[
    r= \lceil n \cdot (1 + \epsilon) / N \rceil
\]
frontyard mini-structures.  A fixed top-level hash function
\(g:U\to\{1,2,\ldots,r\}\) assigns each key to one mini-structure. 

The total number of frontyard slots is
\[
    rM=rN+O(rN^{4/5}) \le n \cdot (1 + \epsilon) + N + O(n / N^{1/5}) \le n + O(\epsilon n),
\]
where the final step uses that $N^{1/5} \ge \Omega(\epsilon^{-1})$ and that $N \le O(\epsilon n)$, 
both of which follow from the facts that $N = \epsilon^{-5}$ and $\epsilon \ge n^{-\alpha}$ for some sufficiently small $\alpha > 0$.

\paragraph{The backyard. }
The remaining slots are allocated to a backyard of size
\[
    m_{\mathrm B}= \epsilon n.
\]

The backyard is implemented using the deamortized cuckoo hashing construction
of Arbitman, Naor, and Segev~\cite{deamortizedcuckoo}. 
We use the following black-box form of their result.

\begin{lemma}[Deamortized Cuckoo Hashing \cite{deamortizedcuckoo}]
\label{lem:backyard-dictionary}
Fix constants \(D>0\) and \(c>0\).  There is an open-addressed hash table using 
\(m\) slots and $O(\log m)$ machine words of metadata that offers the following guarantee.  For any oblivious sequence of
at most \(m^c\) insertions, deletions, and queries, if the number of live keys
is always at most \(m/3\), then, with probability at least \(1-m^{-D}\), every
operation succeeds and every operation takes worst-case \(O(1)\) time.
\end{lemma}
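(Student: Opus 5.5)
The plan is to obtain the lemma from the deamortized cuckoo hashing theorem of Arbitman, Naor, and Segev~\cite{deamortizedcuckoo}, rather than re-deriving it. What needs checking is that their construction fits the interface we use. That means three things: it is open-addressed, it uses only $O(\log m)$ words of metadata, and its failure probability is at most $m^{-D}$ for any prescribed constant $D$ over $m^c$ operations.

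\textbf{Layout.} Split the $m$ slots into two arrays $T_1,T_2$ of $m/2$ slots each, with fully random hash functions $h_1,h_2$. Every key resident in the arrays sits in $T_1[h_1(x)]$ or $T_2[h_2(x)]$. Since at most $m/3$ keys are live, the load relative to each array is at most $2/3$. The cuckoo graph therefore has edge density $1/3<1/2$ per side, so we are in the Pagh--Rodler regime~\cite{PR01}.

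\textbf{Deamortization.} Following~\cite{deamortizedcuckoo}, each insertion enqueues its key into a queue. Each operation then performs a fixed number $O(1)$ of kickout steps on the key at the head of the queue. Keys whose insertion runs into a cycle are moved to a stash of constant size. We keep both the queue and the stash in the $O(\log m)$ words of metadata; each key fits in a machine word.

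A query checks $T_1[h_1(x)]$, $T_2[h_2(x)]$, and the metadata. To keep the query cost $O(1)$ we index the queue and stash with a small auxiliary structure of $O(\log m)$ words, as in~\cite{deamortizedcuckoo}. Deletions just remove the key from wherever it lives.

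\textbf{Invoking their analysis.} We then import two facts from the analysis in~\cite{deamortizedcuckoo}. First, for any oblivious sequence of $m^c$ operations at load at most $1/2-\Omega(1)$ per side, the queue never exceeds $O(\log m)$ keys with probability $1-m^{-D}$. Second, the stash never exceeds a constant $s(D,c)$ with the same probability. The constants in the queue bound and the number of kickout steps per operation are chosen as functions of $D$ and $c$. Their argument bounds the total kickout work in any window of $\Theta(\log m)$ operations by $O(\log m)$; this uses tail bounds on connected-component sizes in the sparse random cuckoo graph. Obliviousness of the sequence is exactly what lets those component bounds apply. A union bound over the $m^c$ time steps finishes the argument.

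\textbf{Main obstacle.} The main obstacle is not the probability analysis, which we take as given, but the bookkeeping. We need the queue and stash to count legitimately as $O(\log m)$ words of metadata in the open-addressing model, while still answering queries in worst-case $O(1)$ time. If the auxiliary index is problematic, the fallback is to reserve $O(\log m)$ extra designated slots. Each key would then be assigned a hashed position among those slots, and the stash's constant size would keep queries to $O(1)$ probes.
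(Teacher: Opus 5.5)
Your proposal is correct and takes the same route as the paper. The paper gives no proof of this lemma: it imports the Arbitman--Naor--Segev theorem as a black box and, as it remarks later, treats their queue, stash, and cycle-detection state as the $O(\log m)$ words of metadata. Your instantiation with two arrays of $m/2$ slots at overall load $1/3$ fits that citation, and the fallback with designated slots is unnecessary.
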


\paragraph{Queries, insertions, and deletions. } We now describe how the frontyard and 
backyard are used to implement the operations.

Queries can simply query the backyard (in time $O(1)$) and one frontyard mini-structure (in time $O(\log\epsilon^{-1})$).
Insertions and deletions, on the other hand, follow a slightly more intricate protocol. The basic idea is to set a parameter
\[
    L=C_{\mathrm L}\log N=\Theta(\log\epsilon^{-1}),
\]
where \(C_{\mathrm L}\) is a sufficiently large constant, and then to use the following rule: By default, 
we perform operations in the appropriate frontyard mini-structure, but whenever an operation takes
time more than $L$, we truncate it early, and if necessary, we move elements to the backyard to preserve correctness.

More concretely, we insert a key $x$ as follows.  We begin the insertion in
mini-structure \(g(x)\) and stop after time \(L\).  If the
insertion completes within time \(L\), then the operation is finished.  If the
local mini-structure reports failure, or if the insertion has not completed
after \(L\) time, then we say that the operation \defn{timed out}.
In this case, we stop the insertion and put the one currently
unplaced key into the backyard. (The unplaced key may not be the key
being inserted if the insertion has performed part of an eviction chain.) 

Note that, if the insertion times out, then there is no need to roll back any partial
 eviction chain that occurred.  At every moment of
the chain, all placed keys are still stored in one of their legal frontyard
locations; only the current key in the eviction chain is unplaced -- that is the key
that gets moved into the backyard. 

To delete a key \(x\), we first locate it via a query.  If the key is in the backyard
we delete it in $O(1)$ time. Otherwise, we delete it from its frontyard mini-structure
using the deletion protocol for the mini-structure. 
As before, if the procedure takes time more than $L$, we say that it \defn{times out}, and we 
end the pullback procedure early. Note that this does not violate correctness 
(every remaining key is still stored in one of its legal locations). This may,
 however, violate the Uniform Overflow Invariant
 in the mini-structure $g(x)$, which may lead to future insertions incurring failures 
 and then using the backyard. 

Finally, time is divided into \defn{epochs} of length \(R=n\) insertions/deletions.  At the beginning of each epoch,
we rebuild the entire structure from the current live set.  This rebuild is the
only source of amortization, and is performed in the most straightforward way possible: It moves the keys in the data structure to a temporary secondary table, it sorts them in linear time into some arbitrary canonical order, and then it reinserts them from scratch into the original hash table.

\subsection{Bounding the Number of ``Bad'' Mini-Structures}

The main difficulty in the analysis is to bound the size of the backyard, at any given moment, with high probability. To do this, we will first focus on bounding the number of ``bad'' mini-structures (namely, those that may send keys to the backyard). 

Fix an epoch.  Let \(A\) be the set of keys that are live at the beginning of
the epoch or inserted at some point during the epoch.  Since at most \(n\) keys
are live at the beginning of the epoch and the epoch contains \(n\) insertions/deletions, 
\(|A|\le 2n\).  For each mini-structure \(a\), write
\[
    A_a=\{x\in A:g(x)=a\}.
\]

We call a mini-structure \defn{bad} during the epoch if any of the following three
events occurs.

\begin{enumerate}

    \item \emph{Traffic bad:} more than \(N^3\) operations during the epoch
    involve keys assigned to the mini-structure.  Here we count both the
    insertions performed during the global rebuild and the insertions/deletions during
    the epoch.

    \item \emph{Capacity bad:} at some time during the epoch, more than \(N\)
    live keys are assigned to the mini-structure.

    \item \emph{Locally bad:} before the mini-structure has already become
    capacity bad or traffic bad, one of its insertions or deletions times out. 
\end{enumerate}

Every key that is ever physically placed in the backyard during the epoch is
assigned to a bad mini-structure.  Indeed, a key enters the backyard only when a
frontyard insertion attempt fails or times out.  If the corresponding
mini-structure was already capacity bad or traffic bad, then it is bad by
definition. Otherwise, that failed or timed-out insertion attempt makes it
locally bad.  Deletion timeouts can also make a mini-structure locally bad, but
they do not themselves add keys to the backyard.\footnote{Nonetheless, if a deletion timeout
occurs then the mini-structure violates its history-independent guarantee, 
and we can no longer reason about future insertions. This is why we include
deletion timeouts in the definition of locally bad.}

We now prove three lemmas that bound the number of bad mini-structures. 
During the fixed epoch that we are considering, let $\mathcal B_{\mathrm{traf}}$ be the set of traffic-bad mini-structures,
$\mathcal B_{\mathrm{cap}}$ be the set of capacity-bad mini-structures, and
$\mathcal B_{\mathrm{loc}}$ be the set of locally bad mini-structures.

\begin{lemma}[Few Traffic-Bad Mini-Structures]
\label{lem:few-traffic-bad}
$|\mathcal B_{\mathrm{traf}}| \le 2n/N^3$.
\end{lemma}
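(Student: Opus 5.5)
This is a deterministic counting (pigeonhole) argument; no probability is needed. The plan is to bound the total number of operations in the epoch and then divide by the threshold $N^3$ that defines traffic-badness.

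First, I would bound the total number of operations counted toward traffic-badness, summed over all mini-structures. Each operation counted for a mini-structure $a$ involves one key $x$ with $g(x)=a$. So each operation is charged to exactly one mini-structure, namely the one containing its key. There are two kinds of operations to count:
\begin{itemize}
\item the insertions performed during the global rebuild at the start of the epoch, one per key live at that moment, so at most $n$ of them;
\item the insertions and deletions performed during the epoch itself, exactly $R = n$ of them.
\end{itemize}
Hence
\[
\sum_a (\text{number of operations charged to } a) \le 2n.
\]
A deletion is charged to the mini-structure $g(x)$ of its key whether it ends up acting on the frontyard or the backyard, so no operation is double-counted.

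Second, every traffic-bad mini-structure has more than $N^3$ operations charged to it, and all the charges are nonnegative. Therefore
\[
N^3 \cdot |\mathcal B_{\mathrm{traf}}| < \sum_{a \in \mathcal B_{\mathrm{traf}}} (\text{operations charged to } a) \le 2n,
\]
which gives $|\mathcal B_{\mathrm{traf}}| \le 2n/N^3$, as claimed.

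The only point needing care is the bookkeeping in the first step: each operation must be attributed to a single mini-structure, and the rebuild insertions must be counted once per live key. That is immediate from the definition of traffic-badness, which counts operations involving keys assigned to the mini-structure. This bound holds deterministically and does not depend on the randomness of $g$.
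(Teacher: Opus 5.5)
Your proof is correct and is the same pigeonhole argument the paper uses. At most $2n$ local operations occur in the epoch (at most $n$ rebuild insertions plus $n$ updates), each traffic-bad mini-structure receives more than $N^3$ of them, and so $|\mathcal B_{\mathrm{traf}}| \le 2n/N^3$ deterministically.
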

\begin{proof}
The total number of local operations during an epoch, including the rebuild
insertions, is at most \(2n\).  Each traffic-bad mini-structure receives more
than $N^3$ such operations. Therefore the lemma follows. 
\end{proof}

\begin{lemma}[Few Capacity-Bad Mini-Structures]
\label{lem:few-capacity-bad}
With probability at least \(1-\exp(-n^{\Omega(1)})\),
\[
    |\mathcal B_{\mathrm{cap}}|\le n / N^3.
\]
\end{lemma}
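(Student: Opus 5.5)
The plan is to bound the probability that a single fixed mini-structure is capacity bad by something much smaller than $N^{-3}$, and then obtain the high-probability bound on $|\mathcal B_{\mathrm{cap}}|$ via McDiarmid's inequality (Lemma \ref{lem:mcdiarmid}). Because the operation sequence is oblivious, the live sets $S_t$ at every time $t$ in the epoch, and the set $A$, are fixed in advance and independent of $g$. The only randomness used in this lemma is $g$.

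\textbf{Step 1: per-structure bound.} Fix a mini-structure $a$. At any fixed time $t$, the number of live keys in $a$ is $\mathrm{Bin}(|S_t|,1/r)$ with $|S_t|\le n$. Its mean is at most $n/r\le N/(1+\epsilon)\le N-\epsilon N/2$ for small $\epsilon$. A naive union bound over all $n$ times is too weak when $\epsilon$ is only moderately small, so we discretize instead.

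Split the epoch into $O(\epsilon^{-1})$ chunks of $\Delta=\epsilon n/8$ consecutive operations. Suppose $a$ exceeds $N$ live keys at some time. Then one of two things happens:
\begin{itemize}
\item[(i)] At some chunk boundary, $a$ has more than $N-\epsilon N/4$ live keys. This is a deviation of at least $\epsilon N/4$ above a mean of at most $N-\epsilon N/2$.
\item[(ii)] Within some chunk, more than $\epsilon N/4$ insertions target $a$. The number of such insertions is dominated by $\mathrm{Bin}(\Delta,1/r)$, whose mean is at most about $\epsilon N/8$.
\end{itemize}
Each individual event has probability $\exp(-\Omega(\epsilon^2 N))=\exp(-\Omega(\epsilon^{-3}))$ by a Chernoff bound. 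A union bound over the $O(\epsilon^{-1})$ chunks gives
\[
\Pr[a\in\mathcal B_{\mathrm{cap}}]\le O(\epsilon^{-1})e^{-\Omega(\epsilon^{-3})}\le \tfrac14\epsilon^{15}=\tfrac14 N^{-3},
\]
provided $\epsilon\le\epsilon_0$ is sufficiently small.

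\textbf{Step 2: the expectation.} Summing over the $r=\lceil n(1+\epsilon)/N\rceil$ mini-structures gives $\E|\mathcal B_{\mathrm{cap}}|\le r N^{-3}/4\le n/(3N^3)$.

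\textbf{Step 3: concentration.} View $|\mathcal B_{\mathrm{cap}}|$ as a function of the independent values $\{g(x)\}_{x\in A}$, of which there are at most $2n$. Whether $a$ is capacity bad depends only on which keys of $A$ map to $a$. So changing a single $g(x)$ affects the status of at most two mini-structures, and the function is $2$-Lipschitz. Apply McDiarmid's inequality with $k\sqrt{2n}=n/(3N^3)$. Since $N=\epsilon^{-5}\le n^{5\alpha}$ with $\alpha$ small, we have $k=\Omega(\sqrt n/N^3)=n^{\Omega(1)}$. Hence $|\mathcal B_{\mathrm{cap}}|\le 2n/(3N^3)\le n/N^3$ with probability $1-\exp(-n^{\Omega(1)})$.

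The main obstacle is Step 1. A key-set-level argument (bounding $|A_a|$) fails, because $|A_a|$ has mean about $2N$. The chunking is what keeps the number of union-bounded events at $O(\epsilon^{-1})$ rather than $n$. It works only because the slack $\epsilon N=\epsilon^{-4}$ is many standard deviations ($\sqrt N=\epsilon^{-2.5}$) above the mean. That is where $N=\epsilon^{-5}$ and $\epsilon\le\epsilon_0$ get used.
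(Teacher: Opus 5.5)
Your proof is correct once one step is repaired. The final McDiarmid step matches the paper's: the same $2$-Lipschitz observation and the same deviation of order $n/N^3$. Where you differ is in bounding $\E|\mathcal B_{\mathrm{cap}}|$.

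\textbf{How the paper bounds the expectation.} The paper never bounds the probability that a fixed mini-structure is capacity bad. It charges each capacity-bad structure either to being over capacity at the start of the epoch, or to the specific insertion that raises its live load from $N$ to $N+1$. For an insertion of $x$, the value $g(x)$ is independent of the hashes of the other live keys. So the probability that $g(x)$ already holds $N$ keys is $e^{-\Omega(\epsilon^2N)}$. Linearity of expectation over the at most $r+n$ such events then gives $\E|\mathcal B_{\mathrm{cap}}|\le n e^{-\Omega(\epsilon^{-1})}$, which is below $n/(2N^3)$ for $\epsilon\le\epsilon_0$.

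This avoids your worry about a union bound over $n$ time steps. The $n$ events are summed in expectation across the whole table, not union-bounded for a single structure. Your chunking into $O(\epsilon^{-1})$ windows buys something the paper's argument does not: a per-structure bound $\Pr[a\in\mathcal B_{\mathrm{cap}}]\le O(\epsilon^{-1})e^{-\Omega(\epsilon^{-3})}$ that is independent of $n$. The cost is a longer argument.

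\textbf{The step that needs repair.} In (ii) you claim that the number of insertions in a chunk that target $a$ is dominated by $\mathrm{Bin}(\Delta,1/r)$. This is false if the oblivious sequence inserts the same key several times within one chunk (insert, delete, reinsert, \dots). Those insertions all hit $g(x)$ together, so they are perfectly correlated.

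What you actually need is weaker. The increase in $a$'s live load since the chunk boundary is at most the number of \emph{distinct} keys inserted during the chunk that hash to $a$. This set of keys is fixed in advance and has size at most $\Delta$. Each key lands in $a$ independently with probability $1/r$, so this count is dominated by $\mathrm{Bin}(\Delta,1/r)$, and your Chernoff bound goes through unchanged.

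The paper's charging argument handles repeated keys automatically, since it treats each insertion separately and uses only linearity of expectation.
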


\begin{proof}
Recall that there are $r = (1 + \Theta(\epsilon)) n /N$ mini-structures in total.
At any given moment,
for any fixed mini-structure \(a\), the number of live keys $S_a$ assigned to \(a\) has
mean at most
\[
    \frac{n}{r}\le (1-\Omega(\epsilon))N.
\]
By a Chernoff bound,
\begin{equation}\label{eq:capacity-bad-prob}
    \Pr[|S_a|\ge N]\le \exp(-\Omega(\epsilon^2N)) \le \exp(-\Omega(\epsilon^{-1})).
\end{equation}
Thus the expected number of mini-structures that are already over capacity at
the beginning of the epoch is at most
\(r\exp(-\Omega(\epsilon^{-1})) \le n \exp(-\Omega(\epsilon^{-1}))\).

After the beginning of the epoch, a mini-structure can become capacity bad only
when an insertion increases its assigned live load from \(N\) to \(N+1\). Consider
an insertion of a key $x$ that maps to mini-structure $g(x)$. Prior 
to the insertion, the probability of the mini-structure $g(x)$ having at least $N$ live
keys is, by \eqref{eq:capacity-bad-prob}, at most $\exp(-\Omega(\epsilon^{-1}))$.
The probability that the insertion causes a capacity violation is therefore
also at most $\exp(-\Omega(\epsilon^{-1}))$. 

Now view \(|\mathcal B_{\mathrm{cap}}|\) as a function of the values
$g(x)$ for $x \in A$.  Changing $g(x)$ for a single $x \in A$ can change
 the capacity-bad status of at most two mini-structures: the key's old
mini-structure and its new mini-structure.  Thus the function is
\(2\)-Lipschitz.  McDiarmid's inequality gives
\[
    \Pr\left[
        |\mathcal B_{\mathrm{cap}}|> \E[|\mathcal B_{\mathrm{cap}}|] + k \sqrt{n}
    \right]
    \le
    \exp\left(
        -\Omega\left(k^2\right)
    \right).
\]
Plugging in $k = \sqrt{n}/(2N^3)$ gives 
\[
    \Pr\left[
        |\mathcal B_{\mathrm{cap}}|> n e^{-\Omega(\epsilon^{-1})} + n/(2N^3)
    \right]
    \le
    \exp\left(
        -\Omega\left(n/N^6\right)
    \right).
\]
Since $\epsilon^{-1}  \le n^\alpha$ for a sufficiently small $\alpha > 0$, and $N = \epsilon^{-5}$, we have
$n e^{-\Omega(\epsilon^{-1})} + n/(2N^3) \le n / N^3$ and $\exp(
        -\Omega(n/N^6)) = \exp(
        -n^{\Omega(1)})$, which completes the proof.
\end{proof}

\begin{lemma}[Few Locally Bad Mini-Structures]
\label{lem:few-locally-bad}
With probability at least \(1-\exp(-n^{\Omega(1)})\), 
\[
    |\mathcal B_{\mathrm{loc}}|\le \frac{n}{N^3}.
\]
\end{lemma}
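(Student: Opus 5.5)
The plan follows the template of Lemma~\ref{lem:few-capacity-bad}. First we bound the expected number of locally bad mini-structures by $O(r \cdot N^{-c'})$, where $c'$ can be made large by choosing $C_{\mathrm L}$ and the constant $c$ of the mini-structures large. Then we use McDiarmid's inequality to upgrade this to a high-probability bound.

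\emph{Expectation.} Fix a mini-structure $a$. Condition on the set $A_a$ and on the restriction of the operation sequence to keys in $A_a$; since $g$ is independent of the internal hash functions of $a$, this subsequence is oblivious from $a$'s point of view. Local badness only counts timeouts that occur before $a$ becomes traffic or capacity bad. So we may truncate the subsequence at the first moment it would exceed $N^3$ operations or $N$ live keys. The truncated sequence then has length at most $N^3=\poly(N)$, never has more than $N$ live keys, and starts from a freshly rebuilt structure. It therefore meets the hypotheses of Theorem~\ref{thm:general-capacities} (applied to Theorem~\ref{thm:dynamic-whi}) with capacity $N$ and $T=N^3$. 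That theorem gives, with probability $1-N^{-\gamma}$ for any desired constant $\gamma$, that no operation fails and each takes time $O(\log N)\le L$ once $C_{\mathrm L}$ is large. We should check that the epoch-start rebuild, which inserts keys in canonical order, is just such an oblivious insertion sequence, and that before any timeout all operations run unmodified, so the history-independence invariants hold. With these checks, $\Pr[a\in\mathcal B_{\mathrm{loc}}]\le N^{-\gamma}$, and hence $\E|\mathcal B_{\mathrm{loc}}|\le rN^{-\gamma}\le 2n/N^{1+\gamma}\le n/(2N^3)$ for $\gamma\ge 3$.

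\emph{Concentration.} This is the main obstacle. Locally badness depends both on $g$ and on each mini-structure's internal randomness $h_1,h_2,\rho$ and the sampling coins, so a naive McDiarmid argument over all of these has too many coordinates with unbounded influence. The plan is to take as independent coordinates, for each key $x\in A$, the pair $(g(x),\text{all internal hash values of }x)$, plus one coordinate per mini-structure for that mini-structure's private sampling coins. Mini-structures are independent copies, so each coordinate affects only the mini-structures it touches. Changing one key's coordinate changes the status of at most two mini-structures, namely its old and new mini-structure; a mini-structure's coin coordinate affects only itself. So $|\mathcal B_{\mathrm{loc}}|$ is $2$-Lipschitz in $|A|+r\le 4n$ independent coordinates. (If one prefers to regard each key's hash as living in every mini-structure, then $x$'s hash values matter only in $g(x)$, and the same accounting applies.)

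McDiarmid (Lemma~\ref{lem:mcdiarmid}) with deviation $k\sqrt{n}=n/(2N^3)$ then gives failure probability $\exp(-\Omega(n/N^6))=\exp(-n^{\Omega(1)})$, using $N=\epsilon^{-5}\le n^{5\alpha}$ with $\alpha$ small. Combined with the expectation bound, this yields $|\mathcal B_{\mathrm{loc}}|\le n/N^3$. The remaining delicacy is confirming that the oblivious-sequence hypothesis survives the conditioning on $g$, since the adversary's sequence is fixed in advance and independent of all hashes.
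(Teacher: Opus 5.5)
Your proof is correct. Your expectation step matches the paper's: for a mini-structure that is not yet capacity or traffic bad, the local operation sequence is fixed by the oblivious global sequence and $g$, so Theorem~\ref{thm:dynamic-whi} bounds its timeout probability by $N^{-\Omega(1)}$ (the paper uses $N^{-3}/2$).

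Where you differ is the concentration step. The paper conditions on $g$. Once $g$ is fixed, each mini-structure's locally-bad indicator depends only on that mini-structure's private randomness (its internal hash values and sampling coins), and this randomness is independent across mini-structures. So $|\mathcal B_{\mathrm{loc}}|$ is a sum of independent indicators with total mean at most $rN^{-3}/2$, and a plain Chernoff bound gives failure probability $\exp(-\Omega(n/N^3))$.

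You instead work unconditionally and apply McDiarmid over per-key coordinates $(g(x),\text{internal hashes of }x)$ plus one coin tape per structure. Your observation that this function is $2$-Lipschitz is correct. This mirrors the proof of Lemma~\ref{lem:few-capacity-bad} and gives the weaker tail $\exp(-\Omega(n/N^6))$. That tail is still sufficient, and it needs no smaller $\alpha$ than that lemma already does.

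The ``main obstacle'' you flag is real only for an unconditional argument; it disappears once you condition on $g$. So the paper's route is shorter, avoids the coordinate-grouping bookkeeping, yields a better exponent, and in fact holds pointwise in $g$. Yours buys a single uniform template shared with the capacity-bad lemma, but it rests on essentially the same independence fact, just packaged per key instead of per mini-structure.
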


\begin{proof}
Condition on the hash function \(g\), and also condition on the sets
of capacity-bad and traffic-bad mini-structures.  Consider any remaining
mini-structure \(a\).  During the epoch it has at most \(N\) assigned live keys
at all times and receives at most \(H=N^3=N^{O(1)}\) local operations,
including the rebuild insertions.

Until the first locally bad event occurs in \(a\), the local operation sequence
seen by \(a\) is fixed by the global oblivious update sequence and by \(g\), and
is independent of the local randomness used inside \(a\).  Therefore
Theorem~\ref{thm:dynamic-whi} bounds the probability that \(a\) becomes locally bad during the epoch
by at most, say, \(N^{-3}/2\).

Conditioned on \(g\), the local randomness used by different mini-structures is
independent.  Hence \(|\mathcal B_{\mathrm{loc}}|\) is dominated by a sum of
independent indicator random variables with total mean at most 
\(N^{-3}r/2 \le N^{-3} n/2\).
A Chernoff bound gives
\[
    \Pr\left[
        |\mathcal B_{\mathrm{loc}}|> N^{-3} n
        \,\middle|\, g
    \right]
    \le
    \exp(-\Omega(N^{-3} n)) \le \exp(-n^{\Omega(1)}).
\]
\end{proof}

Combining Lemmas~\ref{lem:few-traffic-bad},
\ref{lem:few-capacity-bad}, and \ref{lem:few-locally-bad}, we obtain the
following.

\begin{corollary}[Few Bad Mini-Structures]
\label{cor:few-bad-minis}
For a given epoch, we have with high probability in $n$
that the number of bad mini-structures is at most $4n / N^3$. 
\end{corollary}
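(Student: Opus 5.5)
The plan is to bound each of the three classes of bad mini-structures separately and then combine them with a union bound. A mini-structure is bad exactly when it lies in $\mathcal B_{\mathrm{traf}}\cup\mathcal B_{\mathrm{cap}}\cup\mathcal B_{\mathrm{loc}}$, so the number of bad mini-structures is at most
\[
    |\mathcal B_{\mathrm{traf}}|+|\mathcal B_{\mathrm{cap}}|+|\mathcal B_{\mathrm{loc}}|.
\]
Lemma~\ref{lem:few-traffic-bad} gives $|\mathcal B_{\mathrm{traf}}|\le 2n/N^3$ deterministically. Lemma~\ref{lem:few-capacity-bad} gives $|\mathcal B_{\mathrm{cap}}|\le n/N^3$ except with probability $\exp(-n^{\Omega(1)})$. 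Lemma~\ref{lem:few-locally-bad} gives $|\mathcal B_{\mathrm{loc}}|\le n/N^3$ except with probability $\exp(-n^{\Omega(1)})$. Summing the three bounds yields $4n/N^3$.

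For the failure probability, a union bound over the two probabilistic events gives total failure probability $2\exp(-n^{\Omega(1)})$. This is $O(n^{-\gamma})$ for every constant $\gamma$, which is stronger than what "with high probability in $n$" requires.

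One subtlety needs checking. Lemma~\ref{lem:few-locally-bad} is proved conditionally on $g$ and on the sets of capacity-bad and traffic-bad mini-structures. Its conclusion holds for every such conditioning, so it also holds unconditionally. The only step I expect to need care is this: its proof applies Theorem~\ref{thm:dynamic-whi} (via Theorem~\ref{thm:general-capacities}) with failure probability at most $N^{-3}/2$. That requires the constant $c$ in each mini-structure to be large enough relative to the exponent of the operation count $H=N^3$. I take this as already established within that lemma, so the corollary itself follows directly.
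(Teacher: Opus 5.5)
Your proposal is correct and follows the paper's own argument: the paper obtains the corollary directly by adding the bounds $2n/N^3$, $n/N^3$, and $n/N^3$ from the three preceding lemmas and union-bounding over the two $\exp(-n^{\Omega(1)})$ failure events. Your extra check, that the locally-bad lemma's bound holds for every conditioning on $g$ and therefore also unconditionally, is valid but not needed for the corollary.
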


\subsection{Bounding the Number of Backyard Keys}

Having bounded the number of bad mini-structures (within a given epoch), we bound the size of the backyard by bounding the total number of keys that (during that epoch) hash to those bad mini-structures. To do this, we make use of the following lemma: 
\begin{lemma}[Largest Mini-Structures]
\label{lem:largest-minis}
Let \(s\le 2n\) keys be assigned independently and uniformly to
\(r \ge n/N\) mini-structures. Let \(Y_a\) be the load of mini-structure \(a\),
and let \(Z_k\) be the sum of the \(k\) largest values among
\(Y_1,\ldots,Y_r\). Then,
\[
    Z_k\le 4kN + n e^{-\Omega(N)} + \tilde{O}(\sqrt{n})
\]
with high probability in $n$. 
\end{lemma}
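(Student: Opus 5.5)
Split each load as $Y_a \le 4N + (Y_a - 4N)^+$. Then, for any set of $k$ mini-structures,
\[
    Z_k \le 4kN + \sum_{a=1}^r (Y_a - 4N)^+ .
\]
It therefore suffices to show that the excess $W := \sum_a (Y_a-4N)^+$ satisfies $W \le n e^{-\Omega(N)} + \tilde O(\sqrt n)$ with high probability in $n$. The bound on $W$ is uniform in $k$, so it covers every $k$ simultaneously.

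\textbf{Expectation.} Each $Y_a$ is Binomial$(s,1/r)$ with mean $s/r \le 2N$. By a Chernoff bound, $\Pr[Y_a \ge 4N + t] \le e^{-\Omega(N+t)}$ for all $t\ge 0$. Summing the tail over $t$ gives $\E[(Y_a-4N)^+] = \sum_{t\ge 1}\Pr[Y_a \ge 4N+t] \le e^{-\Omega(N)}$. Hence
\[
    \E[W] \le r e^{-\Omega(N)} \le n e^{-\Omega(N)},
\]
where $r \le O(n)$ is the number of mini-structures (in our application $r = O(n/N)$).

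\textbf{Concentration.} View $W$ as a function of the $s \le 2n$ independent assignments. Moving one key changes one load by $-1$ and another by $+1$, so $W$ is $2$-Lipschitz. McDiarmid's inequality (Lemma~\ref{lem:mcdiarmid}) with $k = \Theta(\sqrt{\log n})$ gives $W \le \E[W] + O(\sqrt{n\log n})$ with probability $1-n^{-\gamma}$ for any constant $\gamma$. Combining with the expectation bound,
\[
    W \le n e^{-\Omega(N)} + \tilde O(\sqrt n).
\]

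The only potentially delicate point is that the $Y_a$ are not independent. McDiarmid's inequality sidesteps this entirely, since it needs only independence of the key assignments. The expectation bound uses only linearity and per-bin Chernoff bounds, so no Poissonization is needed.
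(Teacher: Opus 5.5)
Your approach is essentially the paper's: cap each load at $4N$, bound the expected excess with a Chernoff bound, and concentrate with McDiarmid. Applying McDiarmid to the excess $W$ rather than to $Z_k$ is a harmless variation, and it even makes the bound uniform in $k$.

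There is one step the lemma's hypotheses do not support. You bound $\E[W]$ bin by bin, obtaining $r e^{-\Omega(N)}$, and then use $r = O(n)$. The statement only assumes $r \ge n/N$; nothing bounds $r$ from above. Your per-bin estimate $\E[(Y_a-4N)^+] \le e^{-\Omega(N)}$ uses only that the mean $s/r$ is at most $2N$. So it is uniform in $r$, and the bin-by-bin sum grows linearly in $r$. For $r \gg n$, the bound $r e^{-\Omega(N)}$ need not be $O(n e^{-\Omega(N)})$, so as written you have proved the lemma only under an extra hypothesis.

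The fix is to sum over keys instead of bins, which is what the paper does. Since $(Y_a-4N)^+ \le Y_a \mathbf{1}[Y_a > 4N]$, the excess $W$ is at most the number of keys whose mini-structure has load at least $4N+1$. For a fixed key $x$, the number of \emph{other} keys in $x$'s mini-structure is binomial with mean at most $s/r \le 2N$. By Chernoff, this is at least $4N$ with probability $e^{-\Omega(N)}$. Summing over the $s \le 2n$ keys gives $\E[W] \le 2n e^{-\Omega(N)}$, with no dependence on $r$.

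With this substitution, the rest of your argument goes through unchanged. Alternatively, you could add the hypothesis $r = O(n/N)$, which holds wherever the lemma is applied, e.g.\ in Corollary~\ref{cor:backyard-size}.
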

\begin{proof}
The expected number $Y_a$ of keys, out of the $s \le 2n$, that are assigned to a 
given mini-structure $a$ is $s/r \le 2N$. By a Chernoff bound,
$$\Pr[Y_a \ge 4N] = e^{-\Omega(N)}.$$
It follows that, for any key $x$, the probability that $x$ hashes to a 
mini-structure $a$ satisfying $Y_a \ge 4N + 1$ (i.e., ignoring $x$, we have $Y_a \ge 4N$)
is at most $e^{-\Omega(N)}$. Thus, if we define $Y$ to be the total number of 
keys that hash to mini-structures with load at least $4N + 1$, we have 
$$\E[Y] \le 2n e^{-\Omega(N)}.$$
Since $Z_k \le Y + 4kN$, it follows that
$$\E[Z_k] \le 2n e^{-\Omega(N)} + 4kN.$$

Finally, observe that $Z_k$ can be written as a function of the hashes $g(x)$ 
of each of the $s \le 2n$ keys $x$. Changing a given hash value changes $Z_k$ by at most
$1$, so we can apply McDiarmid's inequality to conclude that, with high probability in $n$,
$Z_k \le \E[Z_k] + \tilde{O}(\sqrt{n})$. This completes the proof.
\end{proof}

Putting the pieces together, we can now argue that the backyard does not contain too many elements (and therefore is small enough for Lemma~\ref{lem:backyard-dictionary} to apply). 

\begin{corollary}[Backyard Size]
\label{cor:backyard-size}
With high probability in $n$, the number of keys in the backyard is never more than
$O(n / N^2)$.
\end{corollary}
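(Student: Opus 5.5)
We argue epoch by epoch and then union bound over the polynomially many epochs; there are $n^{O(1)}/n$ of them. Fix an epoch, and let $A$ be the set of keys live at its start or inserted during it, so $|A| \le 2n$.

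First, we check that every key in the backyard during this epoch belongs to $A$ and is assigned to a bad mini-structure. The rebuild at the start of the epoch empties the backyard, since it reinserts every live key from scratch. Any key placed in the backyard afterwards arrives through a frontyard insertion attempt that failed or timed out. By the observation preceding the three lemmas, such a key's mini-structure $g(x)$ is bad. This covers the key actually being placed, even when it is not the key being inserted but a key displaced partway through an eviction chain: the displaced key has the same $g$-value, because the chain runs inside a single mini-structure. Hence the backyard size at any moment of the epoch is at most $\sum_{a \in \mathcal B} |A_a|$, where $\mathcal B$ is the set of bad mini-structures.

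Next, we bound $\sum_{a \in \mathcal B} |A_a|$. By Corollary~\ref{cor:few-bad-minis}, with high probability $|\mathcal B| \le k := 4n/N^3$. The set $\mathcal B$ depends on $g$, so we avoid bounding its loads directly. Instead we bound the sum by the total load of the $k$ largest mini-structures, $Z_k$, and apply Lemma~\ref{lem:largest-minis} with $s = |A| \le 2n$ and $r \ge n/N$. One subtlety is that $A$ is determined by the oblivious update sequence alone, not by $g$, so the keys of $A$ are indeed assigned independently and uniformly, as the lemma requires. The lemma gives
\[
    Z_k \le 16 n/N^2 + n e^{-\Omega(N)} + \tilde{O}(\sqrt{n}).
\]

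Finally, we simplify. Since $N = \epsilon^{-5} \le n^{5\alpha}$ with $\alpha$ small, we have $\sqrt{n}\,\mathrm{polylog}(n) \le n/N^2$. Also $n e^{-\Omega(N)} \le n/N^2$, because $N \ge \epsilon_0^{-5}$ is at least a large constant, and we may shrink $\epsilon_0$ if needed. So each of the three terms is $O(n/N^2)$. A union bound over the epochs completes the argument.

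The only delicate point is the first step: confirming that every backyard key, including displaced chain keys, is charged to a bad mini-structure in $A$. This also yields the consequence needed later, that $O(n/N^2) = O(\epsilon^{10} n) \le m_{\mathrm B}/3$, so Lemma~\ref{lem:backyard-dictionary} applies to the backyard.
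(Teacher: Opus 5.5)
Your proposal is correct and follows the paper's proof essentially step for step: you fix an epoch and union bound over the polynomially many epochs, use Corollary~\ref{cor:few-bad-minis} to get at most $4n/N^3$ bad mini-structures, bound their total load by $Z_k$ via Lemma~\ref{lem:largest-minis}, and absorb the $ne^{-\Omega(N)}$ and $\tilde{O}(\sqrt{n})$ terms into $O(n/N^2)$ using $\epsilon \ge n^{-\alpha}$. Your additional checks, that displaced chain keys share the same $g$-value and that $A$ is determined independently of $g$, make explicit what the paper leaves implicit.
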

\begin{proof}
Since there are only $\poly(n)$ total epochs, it suffices to prove the result during a given epoch. 
By Corollary~\ref{cor:few-bad-minis}, the number of bad mini-structures during the epoch is at most $4n / N^3$ with high probability in $n$. 
By Lemma~\ref{lem:largest-minis}, the number of keys in the backyard is at most $O(n / N^2) + n e^{-\Omega(N)} + \tilde{O}(\sqrt{n})$. Recalling that $N = \epsilon^{-5}$ and that $\epsilon \ge n^{-\alpha}$ for a sufficiently small $\alpha > 0$, we have that $\tilde{O}(\sqrt{n}) = O(n / N^2)$. Therefore the number of keys in the backyard is at most $O(n / N^2)$ with high probability in $n$.
\end{proof}

\subsection{Proof of Theorem~\ref{thm:epsilon-load-factor}}

Finally, we are now in a position to prove Theorem~\ref{thm:epsilon-load-factor}. 
The frontyard uses at most \(n+O(\epsilon n)\) slots, and the backyard uses at
most \(\epsilon n\) slots.  Thus the total number of slots is at most
$n + O(\epsilon n)$. The only metadata is the $O(\log n)$ machine words of metadata
from Lemma \ref{lem:backyard-dictionary}, and the only temporary space is the $O(n)$
machine words of space used during rebuilds. Thus the space bounds
in Theorem~\ref{thm:epsilon-load-factor} are satisfied.

Next we turn to proving time bounds. Queries check the backyard and one frontyard mini-structure.  The backyard
query costs \(O(1)\), and the mini-structure query costs
\(O(\log N)=O(\log\epsilon^{-1})\).  Thus queries take worst-case time
\(O(\log\epsilon^{-1})\).

Consider an insertion or deletion that is not part of a global rebuild.  If
the operation touches the frontyard, the frontyard procedure is stopped after
\(L=\Theta(\log N)=\Theta(\log\epsilon^{-1})\) time.  Thus the
frontyard work of any non-rebuild update is worst-case
\(O(\log\epsilon^{-1})\).  Any backyard work costs worst-case \(O(1)\) on the
success event of Lemma~\ref{lem:backyard-dictionary}.

Every \(n\) updates, we rebuild the entire data structure.  The rebuild cost is
\(O(n\log N)=O(n\log\epsilon^{-1})\), so rebuilds contribute
\(O(\log\epsilon^{-1})\) amortized time per update.  Therefore insertions and
deletions take expected amortized time \(O(\log\epsilon^{-1})\).

Finally, and most importantly, we bound the probability of the backyard failing.
By Corollary~\ref{cor:backyard-size}, 
with high probability in $n$, the
backyard never contains more than \(O(n / N^2)\) keys. Recalling that
$N =\epsilon^{-5}$, and that $\epsilon \le \epsilon_0$ for a sufficiently small $\epsilon_0 > 0$,
it follows that the backyard never contains more than $\epsilon n / 3$ keys.
Conditioned on this event,
Lemma~\ref{lem:backyard-dictionary} implies that the backyard succeeds
throughout the polynomial-length update sequence with high probability in $n$. 
Since the only way for the overall data structure to fail is for the backyard to fail,
the overall data structure succeeds with high probability in $n$. This completes the proof.

\section{Deamortizing the Rebuilds}
\label{sec:deamortization}

We now explain how to remove the global rebuilds from the construction of
Section~\ref{sec:all-load-factors}.  The goal is to obtain high-probability
worst-case bounds for insertions and deletions, rather than expected amortized
bounds.

The only reason that the construction in Section~\ref{sec:all-load-factors} 
does not already achieve high-probability worst-case time bounds is because
of the time spent in rebuilds. The task
in this section is to spread the work of each rebuild over the operations
of that epoch. This will give us the following theorem:

\begin{theorem}[Deamortized load factor \(1-\epsilon\)]
\label{thm:epsilon-load-factor-deamortized}
Let $n^{-\alpha} \le \epsilon \le \epsilon_0$ for some sufficiently small constants $\alpha, \epsilon_0 > 0$.
There is an open-addressed hash table using $n + O(\epsilon n)$ slots and
$O(\log n)$ machine words of metadata, that offers the following
guarantee for any oblivious polynomial-length insertion/deletion sequence in which the number of live keys
never exceeds $n$: With high probability in $n$, every operation succeeds and every
insertion and deletion takes worst-case time $O(\log\epsilon^{-1})$; additionally, queries take worst-case time $O(\log\epsilon^{-1})$.
\end{theorem}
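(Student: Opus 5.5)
The plan is to deamortize the global rebuild of Theorem~\ref{thm:epsilon-load-factor} by never rebuilding the whole table at once. Instead, we rebuild the $r$ frontyard mini-structures one at a time, in a fixed round-robin order, spreading the work over each epoch of $R=n$ updates. Each mini-structure has capacity $N=\epsilon^{-5}$, and with high probability only $O(N)$ keys are assigned to it. Rebuilding it from scratch therefore costs $O(N\log\epsilon^{-1})$ time: extract its keys together with its backyard keys, then reinsert them into a fresh copy with fresh local randomness. Since there are $r=\Theta(n/N)$ mini-structures and $n$ updates per epoch, a budget of $O(\log\epsilon^{-1})$ extra work per update completes one full sweep per epoch, with each update advancing the current rebuild by $O(1)$ reinsertions.

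To avoid needing $n$ temporary slots, each mini-structure is rebuilt in place, using a single spare region of $M=N+O(N^{4/5})$ slots, which is $\le \epsilon n$ and costs nothing asymptotically. While mini-structure $a$ is being rebuilt, it has an ``old'' copy and a ``new'' copy in the spare region. Queries to keys with $g(x)=a$ check both copies plus the backyard, which is still $O(\log\epsilon^{-1})$ worst case. Insertions go to the new copy. Deletions remove the key from whichever copy holds it, with pullback chains in the old copy suppressed, since it is being drained. After the rebuild, the old region becomes the spare, and a pointer in the $O(\log n)$ words of metadata records the swap.

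Making the rebuild history independent is where the strongly history-independent warmup (Theorem~\ref{thm:dynamic-shi}) helps. The new copy's final state should depend only on its final key set and local randomness, and not on how reinsertions interleaved with live updates. If the rebuilt copy is strongly history independent, or at least weakly history independent in the sense of Theorem~\ref{thm:dynamic-whi} with fresh randomness, the analysis of Section~\ref{sec:all-load-factors} applies unchanged: after its rebuild, each mini-structure is distributed as a fresh static placement. Keys moving back from the backyard are simply reinserted, and the obliviousness of the update sequence is preserved because the rebuild schedule is deterministic.

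The remaining analysis redoes the bad-mini-structure counting over ``local epochs''. A local epoch is the window between consecutive rebuilds of the same mini-structure; it has length at most $2n$ updates. Lemmas~\ref{lem:few-traffic-bad}, \ref{lem:few-capacity-bad}, and~\ref{lem:few-locally-bad} then go through with windows that overlap and are staggered. Counting only mini-structures whose current local epoch is bad, Corollary~\ref{cor:backyard-size} still bounds the backyard by $O(n/N^2)\le\epsilon n/3$ at all times, so Lemma~\ref{lem:backyard-dictionary} applies.

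I expect the main obstacle to be the interaction between the in-progress rebuild and live operations inside a single mini-structure. Two things are needed. First, the old copy's distorted state during draining must not cause failures, which holds because we only delete from it. Second, the per-update rebuild work must be worst case, not merely expected: a single reinsertion costs $O(\log\epsilon^{-1})$ only with high probability in $\epsilon^{-1}$. My fix is to cap each reinsertion at time $L$, sending timeouts to the backyard just as ordinary insertions do, so rebuild timeouts count as locally bad events.
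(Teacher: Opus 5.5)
The step that fails as written is draining the backyard. You rebuild mini-structure $a$ by extracting ``its keys together with its backyard keys.'' But the backyard is a single deamortized cuckoo table on $\epsilon n$ slots, and its layout is independent of $g$. So the keys with $g(x)=a$ can only be found by scanning the entire backyard. Doing this once per mini-structure costs $\Theta(\epsilon n)$ per window of $\Theta(N)$ updates, i.e.\ $\Theta(\epsilon^{6}n)\ge n^{1-6\alpha}$ extra work per update. You cannot simply drop draining: a timed-out chain can push a long-lived key into the backyard, and Corollary~\ref{cor:backyard-size} bounds the backyard only because it is emptied at every epoch boundary. The paper therefore drives the rebuild from the backyard rather than from the mini-structures:
\begin{itemize}
\item It scans the backyard and rebuilds only the mini-structures that own a backyard key. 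A token key is moved to the backyard whenever a deletion times out, so every damaged mini-structure is flagged.
\item It keeps two backyards. Failures and live insertions during a rebuild go to the new one, while the old one is scanned and emptied.
\item It splits the table into chunks of size $\Theta(n^{1/3})$, each with its own backyards, and rebuilds one chunk every $R$ operations. Each scan is then short, and every rebuild window is fixed in advance.
\end{itemize}

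There are two further problems. First, swapping the old region with the spare moves $a$ physically. With a single spare under a fixed round-robin, every mini-structure drifts through all regions over successive sweeps. A key's legal cells then depend on a metadata pointer rather than on its hash, so the construction is no longer open-addressed in the paper's sense. The same holds if ``fresh local randomness'' means fresh hash functions. The paper avoids this with a mirror workspace: a fixed region contained in every key's probe set, into which keys are copied and then reinserted into the same mini-structure.

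Second, your budget forces the weakly history-independent protocol. Rebuilding all $r$ mini-structures with the strongly history-independent one only gives $O(\log^2\epsilon^{-1})$ per update. But Theorem~\ref{thm:dynamic-whi} assumes an oblivious sequence, whereas reinserting keys in their old-layout order is correlated with their hashes. For pure reinsertions you could recover conclusion~1 of Theorem~\ref{thm:static-placement-underloaded} by monotonicity. However, you also route live deletions into the new copy, and their pullback chains need conclusion~3, which is not monotone under hash-dependent subsets.

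The paper can afford the order-robust strongly history-independent protocol because only $O(R/N^2)$ mini-structures per chunk are flagged, each costing $O(N\log^2\epsilon^{-1})$, for $O(R)$ total. During a rebuild it also sends live insertions straight to the new backyard and counts every mini-structure touched by a live update as data-race bad; there are only $O(1)$ of these per chunk rebuild.
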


Note that the assumption $\epsilon \le \epsilon_0$ is WLOG since replacing $\epsilon > \epsilon_0$ with $\epsilon' = \min(\epsilon, \epsilon_0)$ does not change the asymptotic guarantees of the theorem. Furthermore, the requirement $\epsilon \ge n^{-\alpha}$ can be substituted with $\epsilon \ge n^{-1/4}$ since, for $\epsilon^{-1} \in [n^\alpha, n^{1/4}]$, we can use Theorem \ref{thm:general-capacities} instead of Theorem \ref{thm:epsilon-load-factor-deamortized}.
Therefore, as an immediate corollary, we have the main theorem of the paper, Theorem \ref{thm:main}.

\maintheorem*

The rest of the section is spent proving Theorem \ref{thm:epsilon-load-factor-deamortized}.

\paragraph{Warmup: Performing rebuilds in-place.}
We begin with a simpler task: implementing the rebuild step from the previous section,
but in-place, so that no temporary memory is needed. In addition, we will implement
the rebuild in such a way that, if a \emph{query} were to be performed at any time
during the rebuild, it would still be correctly answered.

The rebuild procedure will expand the data structure to include two additional sets of slots:
\begin{enumerate}
\item \textbf{A mirror workspace:} This is an array of $O(N)$ slots that perfectly mirrors a mini-structure. Any element $y$ that is allowed to go in some slot $s$ of some mini-structure $g(y)$ is also allowed to
go to slot $s$ in the mirror workspace (queries therefore check not just $g(y)$ but also the mirror workspace).
The mirror workspace will be used as a temporary scratch space when we rebuild a mini-structure. The mirror workspace is not temporary memory -- it is a permanent part of the array, and there is a single mirror workspace for the entire data structure.
\item \textbf{A second backyard: } Rather than one backyard, we will now have two, $B_1$ and $B_2$. This is so that, as we perform a rebuild, which attempts to clear out one of the backyards, say, $B_1$, we can put any elements that incur
failures \emph{during the rebuilds} (i.e., their mini-structure becomes capacity bad) into the other backyard, say $B_2$. During the rebuild, we refer to the original backyard as the \emph{old backyard} and the new backyard as the \emph{new backyard}. At the end of the rebuild, the old backyard is empty, and the new backyard becomes the backyard in-use for the next epoch -- the two backyards then swap roles during the next rebuild, and so on. 
\end{enumerate}
Note that these additions to the data structure consist of $O(\epsilon n)$ slots, so they do not change the overall space bound of $n + O(\epsilon n)$ slots.

The basic task of a rebuild is to move any keys from backyard
back to the appropriate frontyard mini-structure (if we can). 
Naively, we can do this by scanning the backyard and attempting
to insert each key into the appropriate frontyard mini-structure.
However, this strategy would fail to restore correctness for 
mini-structures that have incurred failures -- these mini-structures
need to be entirely rebuilt so that they once again satisfy the
 Uniform Overflow Invariant.

Thus, to perform a rebuild, we take a more heavy-handed strategy. 
As in the naive strategy, we scan the backyard and attempt to move 
each key $x$ that we encounter back to its frontyard mini-structure $g(x)$.
But, rather than using the usual insertion rule, we instead rebuild the
entire mini-structure $g(x)$ as follows:
\begin{enumerate}
\item We copy the elements of the mini-structure $g(x)$ to the mirror workspace.
\item We then scan the mirror workspace, and insert each element $y$ in the mirror workspace
back into mini-structure $g(x)$, \emph{but using the strongly history-independent protocol from Theorem \ref{thm:dynamic-whi} (really Theorem \ref{thm:general-capacities}).}
\item Finally, we also insert $x$ into $g(x)$ using the same strongly history-independent protocol.
\item If any of these insertions fail, then we declare mini-structure $g(x)$ to be capacity bad,
and we send the remaining elements in the mirror workspace to the new backyard.
\end{enumerate}

Finally, a bit of extra care is needed to make sure that \emph{every} failed mini-structure
gets rebuilt during the rebuild. To ensure this, whenever a deletion times out during an epoch,
move one element (any element in the mini-structure that timed out) to the backyard. This ensures
there is a ``token element'' in the backyard for every mini-structure that has incurred a failure.

It is worth taking a moment to explain why we use the strongly history-independent protocol here, rather than the more efficient weakly history-independent one. One oddity of the
construction above is that, as it inserts elements from the mirror workspace back into the mini-structure $g(x)$, the \emph{order} in which they are inserted is not necessarily oblivious to their hashes in $g(x)$. Indeed, the order is the same as the order in which they appear in the mirror workspace, which is the same as the order they formerly appeared in $g(x)$. What is nice about the strongly history-independent protocol is that, unless it incurs failure, it guarantees worst-case $O(\log^2 N) = O(\log^2 \epsilon^{-1})$ time per operation no matter what order the elements are inserted in. Thus, if the mini-structure does not incur a failure during the rebuild, then the rebuild takes worst-case time $O(N \log^2 \epsilon^{-1})$ and results in a mini-structure that satisfies the Uniform Overflow Invariant.

This concludes the description of the in-place rebuild procedure. Note that the total time taken by the procedure is at most $O(K \cdot N \log^2 \epsilon^{-1})$, where $K$ is the number of elements in the backyard. Supposing $K = O(n / N^2)$, this is $O(n)$. 

\paragraph{The deamortized data structure.}
To obtain a data structure with (high-probability) worst-case time bounds, we make one additional modification:
\begin{enumerate}
\item We split the data structure into equal-size (up to $\pm 1$) chunks of size $R = \Theta(n^{1/3})$. 
\item Each element $x$ hashes to a random chunk $r(x)$.
\item Each chunk then uses the construction from the previous section, with its own backyard. 
\item For every $R$ operations that occur on the data-structure as a whole, we perform a rebuild on
one of the chunks (using the in-place procedure described above), rotating through the chunks so that
each chunk is rebuilt every $O(n)$ operations. The rebuild is performed using $O(\log N)$ time per operation,
so as long as the rebuild can complete in time $O(R)$, then it can complete within its scheduled window.
\end{enumerate}

When a rebuild is performed, queries to that chunk
proceed as usual (rebuilds never break queries); insertions directly use the new backyard (they 
skip the frontyard mini-structure); and deletions either use the appropriate backyard (if the key is in a backyard)
or delete the key from the appropriate frontyard mini-structure (or mirror workspace) without doing anything else. 

This approach guarantees that insertions/deletions/queries to the chunk being rebuilt take worst-case time $O(\log \epsilon^{-1})$. On the other hand,
this also means that, when a key $x$ that hashes to the chunk being rebuilt is inserted/deleted,
we do nothing to try to maintain the Uniform Overflow Invariant for that key $x$'s frontyard mini-structure $g(x)$. Declare such a frontyard mini-structure to be \defn{data-race bad}. We will, as part of the analysis, argue that there are very few data-race bad mini-structures during the rebuild of any given chunk.

\paragraph{Avoiding a blowup in backyard metadata.}
A naive application of the construction above would seem to multiply the $O(\log n)$ metadata used by Lemma \ref{lem:backyard-dictionary} (i.e., the backyard) by the number of chunks. However, so long as the backyard dictionaries all stay under capacity, then they can all share a single set of $O(\log n)$ machine words of metadata. This metadata is just used to keep track of the queue/stash/cycle-detection-mechanism in the construction of \cite{deamortizedcuckoo}. For two of these sources of metadata, the queue size and the cycle-detection-mechanism size, the bound of $O(\log n)$ proven in \cite{deamortizedcuckoo} applies equally well to the setting where, as here, the hash table is broken into polynomial-size chunks.

Thus the only minor point of trickiness is the number of ``stash'' elements, i.e., elements that have incurred a genuine failure (they are in a bicyclic component) in one of the backyard cuckoo hash tables during the most recent $O(n)$ operations. The standard analysis (e.g., Theorem 2.1 of \cite{stashcuckoo} or Lemma 4.5 of \cite{deamortizedcuckoo}) shows that the number of such elements per chunk is $O(1)$ with high probability in $n$; moreover, the \emph{expected} number of elements per chunk (based only on the randomness in that chunk) is, by the standard analysis of cuckoo hashing, $O(1/R^2) = O(n^{-2/3})$. Summing over the $O(n/R) = O(n^{2/3})$ chunks, the total number of stash elements is a sum of independent $[0, O(1)]$-random variables with total mean $O(1)$, and is therefore $O(\log n)$ with high probability in $n$. 

Thus, even though we now have many backyards, they can collectively use just $O(\log n)$ machine words of metadata.

\paragraph{Analysis. } To analyze the deamortized data structure, the main technical challenge (just as in the previous section) is to argue that no backyard ever becomes overly full. 

Focus on a given chunk $Q$, and define an epoch for that chunk to be the period of time starting from the last rebuild (and including that rebuild) up until the start of the next rebuild (not including that rebuild). Define the \defn{operation sequence} for the chunk, during the epoch, to be the operations on keys that hash to that chunk.

Unlike in the previous section, the epoch for chunk $Q$ is not guaranteed to have exactly $|Q| = R$ operations. The next lemma splits elements into \defn{heavy hitters} and \defn{light hitters}, and argues that the number of operations on light hitters is $O(R)$ with high probability in $n$.

\begin{lemma}
\label{lem:heavy-hitters}
Let $Q$ be a chunk of size $R = \Theta(n^{1/3})$. Let $H$ be the set of keys that hash to $Q$ and experience at least $N^{10}$ insertions/deletions during the epoch (the \emph{heavy hitters}), and let $L$ be the set of keys that hash to $Q$ and experience fewer than $N^{10}$ insertions/deletions during the epoch (the \emph{light hitters}). Then, with high probability in $n$, 
\begin{itemize}
\item The number of keys in $H$ is $O(R / N^{10})$;
\item The number of insertions/deletions on keys in $L$, during the epoch, is $O(R)$.
\end{itemize}
\end{lemma}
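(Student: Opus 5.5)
The plan is to fix the epoch for chunk $Q$ and look at the global operation sequence during it. The sequence is oblivious, so its keys are fixed in advance, and the chunk hash $r(\cdot)$ is independent of it. Because chunks are rebuilt in rotation, one every $R$ global operations, the epoch of $Q$ covers $T_Q = O(n)$ global operations. There are $\Theta(n/R) = \Theta(n^{2/3})$ chunks, so each key hashes to $Q$ with probability $p = \Theta(R/n)$. One subtlety is that the epoch boundaries are fixed by the rotation schedule, which is deterministic, so the window itself does not depend on the hashes. For each key $y$ appearing in the window, let $c_y$ be its number of insertions/deletions in the window, so that $\sum_y c_y = T_Q = O(n)$.

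For the first bullet, the keys with $c_y \ge N^{10}$ (call them globally heavy) number at most $T_Q/N^{10} = O(n/N^{10})$, and this is deterministic. Each of them lands in $Q$ independently with probability $p$. The expected size of $H$ is therefore $O(pn/N^{10}) = O(R/N^{10})$. Since $\epsilon \ge n^{-\alpha}$ with $\alpha$ small, $N^{10} = \epsilon^{-50} \le n^{50\alpha}$, so this mean is $n^{\Omega(1)}$ once $\alpha < 1/150$. A Chernoff bound then gives $|H| = O(R/N^{10})$ with probability $1 - \exp(-n^{\Omega(1)})$. If the mean were tiny we would instead bound by $O(R/N^{10} + \log n)$, but in this parameter regime that is not needed.

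For the second bullet, let $X = \sum_{y:\, c_y < N^{10}} c_y \mathbf{1}[r(y) = Q]$. Its mean is at most $pT_Q = O(R)$. It is a sum of independent random variables, each taking values in $[0, N^{10}]$. Applying a Chernoff bound to the rescaled variables $c_y/N^{10} \in [0,1]$ gives
\[
\Pr[X \ge 2\E[X] + C R] \le \exp\bigl(-\Omega(R/N^{10})\bigr).
\]
With $R = \Theta(n^{1/3})$ and $N^{10} \le n^{50\alpha}$, the exponent is $n^{\Omega(1)}$. Hence $X = O(R)$ with high probability in $n$.

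The step needing the most care is this weighted Chernoff bound: truncating at $N^{10}$ is exactly what gives the bounded-difference scale, and we need $N^{10} \ll R$, which forces the constraint on $\alpha$. The last step is a union bound over all chunks and all $\poly(n)$ epochs, giving high probability in $n$ overall.
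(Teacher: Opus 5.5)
Your proposal is correct and follows the paper's proof essentially step for step. You deterministically bound the number of globally heavy keys by $O(n/N^{10})$ and apply a Chernoff bound to how many of them land in $Q$. You then apply a Chernoff bound for independent $[0,N^{10}]$-valued variables with total mean $O(R)$ to the light-hitter traffic, using that $N^{10}$ is polynomially smaller than $R$. Your extra remarks (that the epoch window is deterministic, and the explicit constraint on $\alpha$) just spell out what the paper leaves implicit in its assumptions $N^{10}\le\sqrt{R}$ and $N^{10}\le R^{1/3}$.
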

\begin{proof}
During the epoch, at most $O(n)$ insertions/deletions occur overall, so there are at most $O(n / N^{10})$ elements that incur at least $N^{10}$ insertions/deletions. Each of these hashes to $Q$ with probability $O(R/n)$, so $|H|$ is a sum of iid 0-1 random variables with total mean $O(R / N^{10})$. Since $R / N^{10} \ge \sqrt{R} \ge \omega(\log n)$ (by the assumption that $\epsilon^{-1}$ is at most a small polynomial of $n$), we have by a Chernoff bound that $|H|$ is $O(R / N^{10})$ with probability $1 - e^{-\Omega(R / N^{10})} \ge 1 - 1/\poly(n)$. 

On the other hand, the number of insertions/deletions on keys in $L$, during the epoch, is a sum of independent (non-iid) $[0, N^{10}]$-random variables with total mean $O(R)$ (one random variable for each key that is accessed fewer than $N^{10}$ times during the epoch). By assumption, $N^{10} \le R^{1/3}$, so a standard Chernoff bound gives that this number is $O(R)$ with probability at least $1 - e^{-\Omega(R / N^{10})} \ge 1 - 1/\poly(n)$.
\end{proof}

Define traffic-bad mini-structures, capacity-bad mini-structures, and locally-bad mini-structures as in the previous section. Also define data-race bad mini-structures as above. Any mini-structure that sends any keys to the backyard during the epoch is guaranteed to be bad in one of these ways. Note that any heavy hitter (as defined above) is guaranteed to be in a traffic-bad mini-structure. 

The bounds on the number of capacity-bad mini-structures and locally-bad mini-structures follow exactly as in the previous section: with high probability in $n$, there are at most $O(R / N^3)$ such mini-structures. 

The bound on the number of traffic-bad mini-structures is also straightforward (assuming the high-probability event from Lemma \ref{lem:heavy-hitters}): Any traffic-bad mini-structure either has at least one key from $H$ that hashes to it (there are at most $O(R / N^{10})$ such keys); or receives at least $N^3$ operations on keys from $L$ (since the total number of operations on keys in $L$ is $O(R)$ by Lemma \ref{lem:heavy-hitters}, there can be at most $O(R / N^3)$ such mini-structures).  Therefore, with high probability in $n$, there are at most $O(R / N^3)$ traffic-bad mini-structures. 

Finally, we can bound the number of data-race bad mini-structures as follows:
\begin{lemma}
\label{lem:data-race-bad-mini-structures}
With high probability in $n$, there are at most $O(1)$ data-race bad mini-structures.
\end{lemma}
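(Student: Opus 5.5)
We need to bound the number of data-race bad mini-structures in a chunk, i.e. mini-structures $g(x)$ of keys $x$ that hash to chunk $Q$ and are inserted or deleted while $Q$ is being rebuilt. The plan is to bound the number of such operations, not the number of mini-structures. Each operation creates at most one data-race bad mini-structure, so it suffices to show that, with high probability in $n$, only $O(1)$ operations during the rebuild window of $Q$ involve keys hashing to $Q$. Here the relevant quantity is the rebuild window of a chunk, not the chunk's full epoch.

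\textbf{Step 1: bound the rebuild window.} Condition on the high-probability events from the analysis of the chunk's previous epoch. In particular, the chunk's backyard then holds $O(R/N^2)$ keys, by the chunk-level analogue of Corollary~\ref{cor:backyard-size} together with the bad-mini-structure counts established above. The in-place rebuild therefore costs $O((R/N^2)\cdot N\log^2\epsilon^{-1}) = O(R)$ time. It is spread at $\Theta(\log N)$ work per global operation, so it completes within $W = O(R)$ global operations. Since the schedule is fixed in advance, these operations form a fixed (oblivious) window of the global sequence.

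\textbf{Step 2: count the operations that land in $Q$.} Fix the window. The number of operations in it is at most $W = O(R) = O(n^{1/3})$, and they touch at most that many distinct keys. The chunk hash $r(\cdot)$ is independent of the sequence, and there are $\Theta(n/R) = \Theta(n^{2/3})$ chunks, so each distinct key maps to $Q$ with probability $O(R/n) = O(n^{-2/3})$. The number of distinct window keys hashing to $Q$ is thus a sum of independent indicators with mean $O(n^{1/3}\cdot n^{-2/3}) = O(n^{-1/3})$. The probability that at least $k$ of them hash to $Q$ is at most $\binom{W}{k}(R/n)^k \le O(n^{-k/3})$. Taking $k$ to be a sufficiently large constant makes this $n^{-\gamma}$ for any target $\gamma$, and a union bound over the $\poly(n)$ chunk rebuilds preserves the high-probability guarantee.

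\textbf{Step 3: handle repeated operations on the same key.} A single key may be operated on many times within the window. Every such operation touches the same mini-structure $g(x)$, so it adds no new data-race bad mini-structure. Hence the count is at most the number of distinct window keys hashing to $Q$, which is $O(1)$ by Step~2.

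\textbf{Main obstacle.} The hard part is Step~1: certifying that the rebuild window has length $O(R)$ with high probability, and not depending on the current randomness in a way that breaks obliviousness. I would handle this as follows. Choose the window length deterministically as $C\cdot R$ global operations, for a fixed constant $C$. Then argue separately that the rebuild finishes within this budget, using the backyard bound from the previous epoch, whose randomness is independent of $r$-values for keys appearing in the window. This decoupling makes Step~2 a clean calculation over a fixed window.
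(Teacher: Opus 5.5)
Your proposal is correct and follows essentially the same route as the paper. The rebuild of $Q$ occupies a scheduled window of $O(R)=O(n^{1/3})$ global operations, and each distinct key accessed in that window lands in $Q$ independently with probability $O(R/n)=O(n^{-2/3})$. So only $O(1)$ such keys, and hence only $O(1)$ data-race bad mini-structures, appear with high probability.

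One small remark: the paper's schedule already fixes the window deterministically, so you do not need your claimed independence between the previous epoch's backyard bound and the window's $r$-values. That claim is not quite true anyway, since keys can recur across epochs; a union bound over the two high-probability events suffices.
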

\begin{proof}
Recall that the rebuild spans $O(R) = O(n^{1/3})$ operations (in the whole data structure). Let $S$ be the set of keys accessed by these operations. Each key in $S$ independently has probability $O(R/n) = O(n^{-2/3})$ of hashing to chunk $Q$. So the number of such keys that hash to $Q$ is at most a sum of iid 0-1 random variables with mean $O(n^{-1/3})$. By a Chernoff bound, such a sum is $O(1)$ with high probability in $n$. The number of data-race bad mini-structures is therefore also $O(1)$ with high probability in $n$. 
\end{proof}

Putting the bounds together, the total number of bad mini-structures in chunk $Q$, during a given epoch, is at most $O(R / N^3)$ with high probability in $n$. By the same argument as in Lemma \ref{lem:largest-minis}, we can conclude that, of the $O(R)$ keys in $L$, at most $O(R/N^2)$ hash to bad mini-structures (with high probability in $n$). 
Combining these with the $O(R / N^{10})$ keys in $H$, the total number of keys that hash to bad mini-structures in $Q$
during the epoch, and therefore the size of the backyard, is at most $O(R / N^2)$ with high probability in $n$. 

Applying a union bound over all chunks and all epochs, we can conclude that, with high probability in $n$, no backyard ever has more than $O(R / N^2)$ keys in it. This means that each backyard meets the requirements for Lemma \ref{lem:backyard-dictionary}, offering $O(1)$ time operations with high probability in $R$ (and therefore also in $n$). This $O(R/N^2)$ bound also means that the time per rebuild is $O(R)$, meaning that each rebuild can successfully complete within its scheduled window. 

By the design, the new data structure offers worst-case $O(\log \epsilon^{-1})$ time operations (with high probability in $n$), so Theorem \ref{thm:epsilon-load-factor-deamortized} follows.

\section{A Lower Bound on Insertion/Deletion Time}
\label{sec:lower}

In this section, we prove that any open-addressed hash table with $O(\log \epsilon^{-1})$ worst-case query time must also move $\tilde{\Omega}(\log \epsilon^{-1})$ keys per insertion/deletion, on average. To reason about an arbitrary open-addressed hash table, we will use the following abstraction.  Each key
\(x\) has a probe set \(L(x)\subseteq [m]\), determined by its hash value, and
the table may store \(x\) only in cells of \(L(x)\).  A worst-case successful
query bound of \(d\) means that \(|L(x)|\le d\) for every key \(x\).

Throughout the section, all logarithms are base two.  We suppress floors and
ceilings, as they do not affect the asymptotics.  A move means relocating a key
that is already present in the table; the initial placement of a newly inserted
key is not counted.

\begin{theorem}
\label{thm:open-addressing-movement-lower-bound}
Let an open-addressed hash table have \(m\) cells and load factor
\(1-\varepsilon\), where \(0<\varepsilon<\varepsilon_0\) for a sufficiently
small constant \(\varepsilon_0\).  Suppose every key is stored in one of at most
\(d=O(\log \varepsilon^{-1})\) possible cells.  Starting from any configuration
with \(n=(1-\varepsilon)m\) keys, delete \(r=10\varepsilon m\) uniformly random
stored keys and then insert \(r\) new keys.  Then the expected total number of
moves during this sequence is
\[
        \Omega\!\left(
            \frac{\varepsilon m \log \varepsilon^{-1}}
                 {\log\log \varepsilon^{-1}}
        \right).
\]
Consequently, the expected number of moves per update is
\[
        \Omega\!\left(
            \frac{\log \varepsilon^{-1}}
                 {\log\log \varepsilon^{-1}}
        \right).
\]
\end{theorem}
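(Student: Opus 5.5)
The plan is an encoding (one-way communication) argument. Alice knows the random set $D$ of deleted keys; Bob knows only the initial configuration, the hash functions, the probe sets, and the identities of the $r$ inserted keys. Alice sends Bob a message from which he can recover $D$.

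Since $D$ is a uniformly random $r$-subset of the $n$ stored keys, $H(D)=\log\binom{n}{r}\ge r\log(n/r) = 10\varepsilon m\log\varepsilon^{-1}-O(\varepsilon m)$. So the expected message length must be at least this. The goal is a message of expected length
\[
\varepsilon m\log\varepsilon^{-1}+O(\varepsilon m)+O(\mathbb{E}[M]\log d),
\]
where $M$ is the total number of moves. Since $\log d=O(\log\log\varepsilon^{-1})$, comparing the two bounds forces $\mathbb{E}[M]\log\log\varepsilon^{-1}=\Omega(\varepsilon m\log\varepsilon^{-1})$, which is the claim.

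The factor $10$ in $r=10\varepsilon m$ exists to leave a constant-factor entropy margin over the cost of describing the final free set. The message has the following parts.

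\textbf{(i) Final free cells.} Alice sends the set of free cells in the final configuration. There are $\varepsilon m$ of them, costing $\log\binom{m}{\varepsilon m}\le \varepsilon m\log(e/\varepsilon)$ bits.

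\textbf{(ii) Final position of each inserted key.} For each inserted key, Alice sends its final cell as an index into its probe set, costing $\log d$ bits per key. In total this is $r\log d=O(\varepsilon m\log\log\varepsilon^{-1})$, which is lower order.

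\textbf{(iii) The moves, as chains.} Alice describes all moves so that Bob can run the whole update sequence backwards. The moves decompose into chains. In each chain, a key leaves cell $a$ for a cell $b\in L(\text{key})$; the vacated cell $a$ is then either filled by another move, or it ends up empty, or it was a deleted key's cell. Running backwards from the final configuration, Bob walks each chain starting from a cell he can already identify: a final free cell, known from (i), or the cell of an inserted key, known from (ii). At each step he is at cell $b$ holding a key $y$, and he must learn which cell of $L(y)$ that key came from. That costs $\log d$ bits, plus $O(1)$ bits to mark where each chain ends. Chain lengths sum to $M$, and the number of chains is $O(r+M)$, so this part costs $O((M+r)\log d)$ bits.

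Once the moves are undone, Bob holds the configuration just after the deletions. Comparing it with the known initial configuration, the newly emptied cells that were not explained as move sources are exactly the cells of $D$, so he recovers $D$.

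The step I expect to be the main obstacle is making (iii) rigorous for an arbitrary algorithm. The difficulty is that the algorithm may interleave moves during the deletion phase (pullback-type moves) with moves during the insertion phase, and may move the same key many times. A naive encoding names each moved key by its cell, which costs $\log m$ bits per move and would destroy the bound.

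My first attempt is to replace the actual move history by the net displacement. Compare the post-deletion-free state with the final state and look at only each surviving key's initial and final cells, so that every relocated key counts as at most one move. This only decreases $M$. The net displacement forms a functional graph on cells in which each key moves within its own probe set. It decomposes into vertex-disjoint paths and cycles, and every path starts at a cell freed by a deletion or at a final free cell, and ends at a cell occupied by an inserted key or at a final free cell. Cycles can be described locally: Bob learns one cell per cycle via a rank within the path structure, or the cycles can simply be undone as non-essential. Paths are walked backwards from their known endpoints exactly as in (iii).

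With that decomposition, the message length is at most $\varepsilon m\log(e/\varepsilon)+O((\mathbb{E}[M]+\varepsilon m)\log\log\varepsilon^{-1})$. Taking $\varepsilon_0$ small enough to absorb the constants against the $9\varepsilon m\log\varepsilon^{-1}$ margin gives the theorem. Dividing by the $2r=\Theta(\varepsilon m)$ updates gives the per-update bound.
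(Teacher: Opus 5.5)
Your proposal is correct and is essentially the paper's argument. Both encode net displacement paths by walking from cells whose initial occupant the decoder knows, pay $O(\log d)$ bits per moved key, and compare this against the entropy of the deletion set. Your path-endpoint labels are slightly off: one end lies in the final free or inserted-key cells, and the other in the initially empty or deleted-key cells. But your decoder already starts from the right end, and cycles can simply be ignored since they contain no deleted keys.

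The only real difference is how the free cells are paid for. You spend $\log\binom{m}{\varepsilon m}$ bits sending the final free set so that all of $D$ is recovered. The paper instead walks only from the inserted keys' final cells, recovers a subset of at least $r-\varepsilon m=9\varepsilon m$ deleted keys, and bounds $H(D\mid\Pi)\le\log\binom{n-s}{r-s}$. In both versions, the factor $10$ in $r$ is what absorbs the $\varepsilon m\log\varepsilon^{-1}$ overhead.
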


\begin{proof}
The proof is a one-way communication argument.  The encoder knows the deletion
set \(D\), runs the update sequence, and sends a binary message to a decoder who
does not know \(D\).  The encoder and decoder are both given, for free, the
initial table layout, the probe sets of all keys, and the sequence of inserted
keys.  If the data structure is randomized, we also give its random tape to both
parties; equivalently, we prove the bound after conditioning on the tape and
then average over it.

At a high level, the message will describe certain displacement paths in the final table.  These
paths allow the decoder to recover many members of \(D\).  If few keys are
moved, we will be able to guarantee that the message encoding the paths is 
very short (roughly $O(\log d)$ times the number of moves bits).  On the other hand, any message that reveals
many keys of $D$ must have large entropy.  Comparing these two
bounds is what ultimately gives the theorem.

Label the \(n=(1-\varepsilon)m\) initially stored keys by \([n]\), and let
\[
        D\subseteq [n],
        \qquad |D|=r=10\varepsilon m,
\]
be the uniformly random deletion set.  We write \(H(\cdot)\) for Shannon entropy
and \(I(\cdot;\cdot)\) for mutual information.  Since \(D\) is a uniform
\(r\)-subset of \([n]\),
\[
        H(D)=\log {n\choose r}.
\]

Let \(E\) be the set of initially empty cells, so \(|E|=\varepsilon m\).  For an
old key \(x\) (i.e., a key that was stored in the initial configuration), let \(c_0(x)\) denote its initial cell.  For any key \(z\) present
in the final table, let \(c_1(z)\) denote its final cell.

We now define a collection of displacement paths by comparing the initial and
final layouts. 

For each inserted key $y$, we will have one path starting from $y$. A path will specify a sequence of elements $z_1, z_2, \ldots, z_k$ and positions $p_1, p_2, \ldots, p_k$ such that $z_1 = y$; such that $z_i$, for $i \ge 2$, is an old key that was initially in position $p_{i - 1}$ and is now (in the final table) in position $p_i$; and such that $p_k$ is a slot that was either empty in the initial table or that contained a key in $D$. 

Starting from $y = z_1$, we define $p_1 = c_1(y)$, and we construct the displacement path incrementally as follows. Given a point $(z_i, p_i)$ in the path: If $p_i \in E$ or $p_i = c_0(x)$ for some $x \in D$, then the path ends at $z_i$; and otherwise, we define $z_{i + 1}$ to be the old key with $c_0(z_{i + 1}) = p_i$, and we define $p_{i + 1}$ to be $c_1(z_{i + 1})$. By continuing this construction until it completes, we get the full displacement path starting from $y$.

We remark that the paths above are non-cyclic and non-overlapping (both in terms of the keys and the positions they visit). The key observation is that paths are reversible: the predecessor of $z_{i+1}$ on its path is the unique key $z$ with $c_1(z) = c_0(z_{i+1})$, so given any point on a path, together with the initial and final configurations of the hash table, we can rebuild the path backwards. In particular, every key has at most one predecessor, and the inserted key $z_1 = y$ has none. If a path revisited a key, then the first revisited key would either be $y$ or have two distinct predecessors, both of which are impossible. And if two paths shared a key, then rebuilding backwards from that key would lead to a single starting key, so the paths would be the same. Therefore the paths are both non-cyclic and non-overlapping. 

Thus the \(r\) inserted keys define \(r\) disjoint paths, each terminating
either at an initially empty cell or at the initial cell of a deleted key.  The
endpoints are distinct.  Since only \(|E|=\varepsilon m\) endpoints can be
initially empty cells, at least
\[
        r-|E| = 9\varepsilon m
\]
paths terminate at cells that initially contained deleted keys.  Hence the paths
identify a subset
\[
        S\subseteq D
\]
of size at least
\[
        s=9\varepsilon m.
\]

Let \(P\) be the number of old surviving keys (keys present both initially and in the final configuration) that appear on these displacement
paths.  Every such key must move between the initial and final configuration, so the total number of moves $T$ must satisfy 
\[
        T\ge P.
\]

The encoder now sends a displacement transcript \(\Pi\) that encodes the paths as follows. The paths are listed
in the order of the insertions.  For each key \(z\) appearing on a path (including the element $y$ being inserted), the
transcript records the index of \(c_1(z)\) within \(L(z)\), together with one bit
indicating whether the path stops at that cell.  Thus each step in the path costs
\[
        b := \lceil \log d\rceil + 1 = O(\log(2d))
\]
bits.  The total number of steps is \(r+P\): one step for each inserted key, and
one additional step for each old surviving key on the paths.  Hence
\[
        |\Pi| = b(r+P).
\]

Given \(\Pi\) and the public side information, the decoder reconstructs the
paths as follows. The decoder knows the key $y$ at the start of the path (the key that is inserted), and can recover from the message which position in $L(y)$ is $c_1(y)$. Then, given some $z_i$ in the path and the position $p_i = c_1(z_i)$, the decoder can continue to recover the path as follows: If the stop bit for the path is not set, then the cell $c_1(z_i)$ must have initially contained some old key $q$ with $c_0(q) = p_i$. This key $q$ is $z_{i + 1}$, and using the information in the message, the decoder can recover which position in $L(z_{i + 1})$ is $c_1(z_{i + 1})$. This position is $p_{i + 1}$. Continuing like this, the decoder can recover the entire path starting from $y$. 

Having recovered the paths, the decoder knows that any path ending in a cell that was not initially empty must instead end in a cell whose initial key is in $D$. Therefore \(\Pi\)
determines a set \(S(\Pi)\subseteq D\) with
\[
        |S(\Pi)|\ge s.
\]

We now lower-bound the entropy of the transcript.  The transcript is prefix-free:
each step has fixed length \(b\), the decoder knows that there are exactly \(r\)
paths, and the stop bits tell it exactly where each path ends.  Hence the
standard source-coding inequality gives
\[
        \mathbb{E}[|\Pi|]\ge H(\Pi).
\]
Since the random tape, if any, has been fixed, \(\Pi\) is a deterministic
function of \(D\).  Thus
\[
        H(\Pi)=I(D;\Pi)=H(D)-H(D\mid \Pi).
\]

For any fixed transcript \(\Pi=\pi\), the deletion set \(D\) must contain the
revealed set \(S(\pi)\), and \(|S(\pi)|\ge s\).  Therefore the number of deletion
sets consistent with \(\Pi=\pi\) is at most
\[
        {n-s\choose r-s}.
\]
Consequently,
\[
        H(D\mid \Pi)\le \log {n-s\choose r-s}.
\]
Combining the last three displays,
\[
\begin{aligned}
        \mathbb{E}[|\Pi|]
        &\ge H(\Pi)                                                   \\
        &= H(D)-H(D\mid \Pi)                                          \\
        &\ge \log {n\choose r} - \log {n-s\choose r-s}.
\end{aligned}
\]
Since \(|\Pi|=b(r+P)\), this gives
\[
        \mathbb{E}[r+P]
        \ge
        \frac{\log {n\choose r}-\log {n-s\choose r-s}}{b}.
\]

It remains to estimate the entropy gap.  Using
\(n=(1-\varepsilon)m\), \(r=10\varepsilon m\), and
\(s=9\varepsilon m\),
\[
\begin{aligned}
        \log {n\choose r} - \log {n-s\choose r-s}
        &=
        \log \frac{{n\choose r}}{{n-s\choose r-s}}                 \\
        &=
        \log \prod_{i=0}^{s-1} \frac{n-i}{r-i}.
\end{aligned}
\]
For every \(0\le i<s\),
\[
        n-i \ge n-s = (1-10\varepsilon)m,
        \qquad
        r-i \le r = 10\varepsilon m.
\]
Thus, for sufficiently small \(\varepsilon\),
\[
        \frac{n-i}{r-i}
        \ge
        \frac{1}{20\varepsilon}.
\]
Therefore
\[
        \log {n\choose r} - \log {n-s\choose r-s}
        \ge
        s\log \frac{1}{20\varepsilon}
        =
        \Omega(\varepsilon m\log \varepsilon^{-1}).
\]

Since \(b=O(\log(2d))\), we obtain
\[
        \mathbb{E}[r+P]
        =
        \Omega\!\left(
            \frac{\varepsilon m\log \varepsilon^{-1}}
                 {\log(2d)}
        \right).
\]
Because \(r=10\varepsilon m\),
\[
        \mathbb{E}[P]
        =
        \Omega\!\left(
            \frac{\varepsilon m\log \varepsilon^{-1}}
                 {\log(2d)}
        \right)
        -
        O(\varepsilon m).
\]
Finally, \(d=O(\log \varepsilon^{-1})\), so
\[
        \log(2d)=O(\log\log \varepsilon^{-1}).
\]
For sufficiently small \(\varepsilon\), the first term dominates the
\(O(\varepsilon m)\) subtraction, and therefore
\[
        \mathbb{E}[P]
        =
        \Omega\!\left(
            \frac{\varepsilon m\log \varepsilon^{-1}}
                 {\log\log \varepsilon^{-1}}
        \right).
\]
Since \(T\ge P\), the same lower bound holds for the expected total number of
moves.  Dividing by the \(2r=20\varepsilon m\) updates gives the claimed
per-update bound.
\end{proof}

\section{Proof of the Static Placement Theorem}
\label{sec:static-placement-proof}

In this section, we present the full proof of the Static Placement Theorem. We begin in Subsection \ref{sec:subgaussian} with a review of some (relatively) standard properties of sub-Gaussian random variables. We then show in Subsection \ref{sec:actual-proof} how to leverage these properties to prove the theorem.

\subsection{Properties of  Sub-Gaussian Random Variables}
\label{sec:subgaussian}

A real-valued random variable \(X\) with mean zero is called \emph{sub-Gaussian} with variance \(\sigma^2\) if its moment-generating function satisfies
\[
    \mathbb{E}\left[\exp(\lambda X)\right]
    \le
    \exp\left(\frac{\lambda^2\sigma^2}{2}\right)
    \qquad \text{for all } \lambda\in\mathbb{R}.
\]
In this case, we write
\[
    X \in \mathrm{SG}(\sigma^2).
\]
More generally, for an arbitrary random variable \(Y\), we write
\[
    Y \in \mathrm{SG}(\mu,\sigma^2)
\]
to denote that \(\mathbb{E}[Y]=\mu\) and \(Y-\mu\in \mathrm{SG}(\sigma^2)\); that is, \(Y\) is sub-Gaussian with mean \(\mu\) and variance \(\sigma^2\).

We will routinely use the following standard equivalence between sub-Gaussian
moment-generating-function bounds and Gaussian-type tail bounds.

\begin{restatable}[Sub-Gaussian tails and MGF bounds]{lemma}{subgtailmgf}
\label{lem:subg-tail-mgf}
Let \(Y\) be a random variable with \(\mathbb{E}[Y]=\mu\).
\begin{enumerate}
    \item If \(Y\in \mathrm{SG}(\mu,\sigma^2)\), then for every \(t\ge 0\),
    \[
        \mathbb{P}[|Y-\mu|\ge t]
        \le
        2\exp\left(-\frac{t^2}{2\sigma^2}\right).
    \]
    \item Conversely, there is an absolute constant \(C>0\) such that if there
    exists \(v>0\) with
    \[
        \mathbb{P}[|Y-\mu|\ge t]
        \le
        2\exp\left(-\frac{t^2}{v}\right)
        \qquad \text{for every } t\ge 0,
    \]
    then \(Y\in \mathrm{SG}(\mu,Cv)\).
\end{enumerate}
\end{restatable}

The proof of Lemma~\ref{lem:subg-tail-mgf} is deferred to
Appendix~\ref{app:subg-tail-mgf}. We will also use the following standard closure properties of sub-Gaussian random variables.

\begin{lemma}[Sums of independent sub-Gaussians]
\label{lem:subg-sums}
Let \(X_1,\dots,X_m\) be independent random variables satisfying \(X_i\in
\mathrm{SG}(\mu_i,\sigma_i^2)\). Then
\[
    \sum_{i=1}^m X_i
    \in
    \mathrm{SG}\left(\sum_{i=1}^m \mu_i,\sum_{i=1}^m \sigma_i^2\right).
\]
\end{lemma}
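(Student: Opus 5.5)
The plan is to go directly through the moment-generating function, using independence to factor the expectation of the product. Write $S=\sum_{i=1}^m X_i$ and $\mu=\sum_i \mu_i$. Linearity of expectation gives $\mathbb{E}[S]=\mu$, so the mean part of the claim is immediate. It remains to show that $S-\mu=\sum_i (X_i-\mu_i)$ lies in $\mathrm{SG}\left(\sum_i\sigma_i^2\right)$, that is, that its MGF is bounded by $\exp(\lambda^2\sum_i\sigma_i^2/2)$ for every real $\lambda$.

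Fix $\lambda\in\mathbb{R}$. The centered variables $Y_i=X_i-\mu_i$ are measurable functions of the independent $X_i$, so they are independent. Hence the variables $\exp(\lambda Y_i)$ are independent and nonnegative. This lets us factor the expectation:
\[
    \mathbb{E}\left[\exp\left(\lambda\sum_i Y_i\right)\right]
    =\mathbb{E}\left[\prod_i \exp(\lambda Y_i)\right]
    =\prod_i \mathbb{E}\left[\exp(\lambda Y_i)\right].
\]
Nonnegativity makes the factorization valid even if some expectations are infinite. Under the hypotheses, however, each factor is finite: the assumption $X_i\in\mathrm{SG}(\mu_i,\sigma_i^2)$ says exactly that $\mathbb{E}[\exp(\lambda Y_i)]\le\exp(\lambda^2\sigma_i^2/2)$. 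Multiplying these bounds gives
\[
    \prod_i\exp\left(\frac{\lambda^2\sigma_i^2}{2}\right)=\exp\left(\frac{\lambda^2\sum_i\sigma_i^2}{2}\right),
\]
which is the required bound.

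The argument has no real obstacle. The only point needing care is to invoke independence of the centered variables when factoring the MGF of the sum, rather than dealing with any cross terms. One could alternatively induct on $m$ and handle two summands at a time, but the direct product form is cleaner and I would present that.
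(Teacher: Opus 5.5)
Your proposal is correct and matches the paper's proof essentially step for step: you center each $X_i$, use independence to factor the moment-generating function of the sum into a product, and multiply the individual bounds $\exp(\lambda^2\sigma_i^2/2)$. Your remark that nonnegativity justifies the factorization is a harmless refinement that the paper leaves implicit.
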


\begin{proof}
By replacing each \(X_i\) with \(X_i-\mu_i\), we may assume without loss of
generality that \(\mu_i=0\) for every \(i\); the general statement then follows
by adding back \(\sum_i \mu_i\), which only shifts the mean. So assume each
\(X_i\in \mathrm{SG}(\sigma_i^2)\). For every \(\lambda\in\mathbb{R}\),
independence gives
\[
    \mathbb{E}\left[\exp\left(\lambda\sum_{i=1}^m X_i\right)\right]
    =
    \prod_{i=1}^m \mathbb{E}\left[e^{\lambda X_i}\right]
    \le
    \prod_{i=1}^m
    \exp\left(\frac{\lambda^2\sigma_i^2}{2}\right)
    =
    \exp\left(
        \frac{\lambda^2}{2}\sum_{i=1}^m \sigma_i^2
    \right).
\]
\end{proof}

\begin{lemma}[Scaling]
\label{lem:subg-scaling}
If \(X\in \mathrm{SG}(\sigma^2)\) and \(a\in\mathbb{R}\), then \(aX \in
\mathrm{SG}(a^2\sigma^2)\). Equivalently, if \(Y\in \mathrm{SG}(\mu,\sigma^2)\),
then \(aY \in \mathrm{SG}(a\mu,a^2\sigma^2)\).
\end{lemma}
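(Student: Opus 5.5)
The scaling lemma follows directly from the definition of $\mathrm{SG}(\sigma^2)$ via the moment-generating function, with no need for the tail characterization of Lemma~\ref{lem:subg-tail-mgf}. For the first statement, suppose $X\in\mathrm{SG}(\sigma^2)$; then $\mathbb{E}[X]=0$, so by linearity $\mathbb{E}[aX]=0$ and $aX$ is mean zero. The task is to bound its MGF.

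Fix any $\lambda\in\mathbb{R}$ and set $\lambda' := a\lambda$, which is again a real number. Then
\[
    \mathbb{E}\left[\exp(\lambda\cdot aX)\right]
    =
    \mathbb{E}\left[\exp(\lambda' X)\right]
    \le
    \exp\left(\frac{\lambda'^2\sigma^2}{2}\right)
    =
    \exp\left(\frac{\lambda^2 (a^2\sigma^2)}{2}\right),
\]
where the inequality uses the defining MGF bound for $X$ at the point $\lambda'$. This bound is valid because the definition quantifies over all real arguments. Since $\lambda$ was arbitrary, this is exactly the statement $aX\in\mathrm{SG}(a^2\sigma^2)$. The degenerate case $a=0$ is covered automatically, since $aX\equiv 0$ and the bound reads $1\le 1$.

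For the equivalent formulation, suppose $Y\in\mathrm{SG}(\mu,\sigma^2)$, meaning $\mathbb{E}[Y]=\mu$ and $Y-\mu\in\mathrm{SG}(\sigma^2)$. Then $\mathbb{E}[aY]=a\mu$, and $aY-a\mu=a(Y-\mu)$. Applying the first part to $X:=Y-\mu$ gives $a(Y-\mu)\in\mathrm{SG}(a^2\sigma^2)$, which is precisely $aY\in\mathrm{SG}(a\mu,a^2\sigma^2)$.

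There is no real obstacle. The only point to check is that the reparametrization $\lambda\mapsto a\lambda$ stays inside the range where the MGF bound holds, and this is immediate because the definition is stated for every $\lambda\in\mathbb{R}$.
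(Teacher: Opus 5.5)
Your proposal is correct and matches the paper's proof, which also substitutes $a\lambda$ into the defining moment-generating-function bound for $X$. You additionally spell out the mean-zero check and derive the shifted formulation via $X := Y-\mu$; the paper leaves both implicit.
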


\begin{proof}
For every \(\lambda\in\mathbb{R}\),
\[
    \mathbb{E}\left[e^{\lambda aX}\right]
    =
    \mathbb{E}\left[e^{(a\lambda)X}\right]
    \le
    \exp\left(\frac{a^2\lambda^2\sigma^2}{2}\right).
\]
\end{proof}

In addition to the closure properties, we will need several additional standard results. The first is the fact that binomial random variables (with $p = 1/3$) are sub-Gaussian.
\begin{lemma}[Binomial sub-Gaussianity]
\label{lem:binomial-subg}
Let \(B\sim \mathrm{Bin}(n,1/3)\). Then \(B-\tfrac n3 \in \mathrm{SG}(\tfrac
n4)\). Equivalently, \(B\in \mathrm{SG}(\tfrac n3,\tfrac n4)\).
\end{lemma}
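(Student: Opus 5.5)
The plan is to reduce to a single Bernoulli trial and then bound the moment-generating function directly. Write \(B=\sum_{j=1}^n \xi_j\) with \(\xi_j\) i.i.d.\ Bernoulli\((1/3)\). By Lemma~\ref{lem:subg-sums}, applied to the centered variables \(\xi_j-\tfrac13\), it suffices to show that each \(\xi_j-\tfrac13\in\mathrm{SG}(\tfrac14)\). Independence then gives \(B-\tfrac n3\in \mathrm{SG}(\tfrac n4)\), which is the claim.

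For a single centered Bernoulli variable, I will use Hoeffding's lemma. Any mean-zero random variable supported in an interval \([a,b]\) is sub-Gaussian with variance proxy \((b-a)^2/4\). Here \(\xi-\tfrac13\) takes values in \([-\tfrac13,\tfrac23]\), so \(b-a=1\), and we get \(\E[e^{\lambda(\xi-1/3)}]\le e^{\lambda^2/8}\), that is, \(\mathrm{SG}(1/4)\) under the paper's convention \(e^{\lambda^2\sigma^2/2}\). This is exactly the constant \(\tfrac14\) claimed, so no slack is lost.

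If I want a self-contained argument instead of citing Hoeffding, I will prove the bound directly. Set \(\psi(\lambda)=\log\bigl(\tfrac23 e^{-\lambda/3}+\tfrac13 e^{2\lambda/3}\bigr)\). Then \(\psi(0)=\psi'(0)=0\). Moreover, \(\psi''(\lambda)\) is the variance of a two-point distribution on \(\{-\tfrac13,\tfrac23\}\) under an exponentially tilted measure, and such a variance is \(p(1-p)\le\tfrac14\). Taylor's theorem then gives \(\psi(\lambda)\le \lambda^2/8\).

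The main (mild) obstacle is keeping the constant sharp at \(\tfrac14\), rather than, say, the true variance \(\tfrac29\). The true variance does not suffice, because the asymmetric Bernoulli's optimal sub-Gaussian proxy exceeds its variance. The tilted-variance bound \(p(1-p)\le 1/4\) handles this cleanly.
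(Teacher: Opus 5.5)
Your proposal is correct and follows the paper's proof essentially verbatim: write the binomial as a sum of independent Bernoulli$(1/3)$ variables, observe that each centered summand is supported on an interval of length $1$ and hence lies in $\mathrm{SG}(1/4)$, and conclude with Lemma~\ref{lem:subg-sums}. Your tilted-variance argument, which uses $\psi''(\lambda)=p(1-p)\le 1/4$ together with Taylor's theorem, is a clean way of supplying the ``straightforward calculation'' that the paper leaves implicit.
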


\begin{proof}
Write \(B=\sum_{i=1}^n \xi_i\), where the \(\xi_i\)'s are independent
\(\mathrm{Bernoulli}(1/3)\) random variables. Each centered summand
\(\xi_i-\mathbb{E}[\xi_i]\) is supported on an interval of length \(1\). By
straightforward calculation,
\[
    \xi_i-\mathbb{E}[\xi_i] \in \mathrm{SG}(1/4).
\]
The result follows from Lemma~\ref{lem:subg-sums}.
\end{proof}

Next, we consider what happens when we clip a sub-Gaussian random variable to a bounded range. The result, not surprisingly, is still sub-Gaussian with essentially the same variance; and, if the original mean was far from the clipping boundaries, then the mean is essentially preserved as well. 
\begin{restatable}[Clipping a sub-Gaussian]{lemma}{clippingsubg}
    \label{lem:clipping-subg}
    Let \(X\in \mathrm{SG}(\mu,v)\), and define
    \[
        \clip_B(X)=\max\{0,\min\{X,B\}\}.
    \]
    Then
    \[
        \clip_B(X)\in \mathrm{SG}\left(\mathbb E[\clip_B(X)],v\right).
    \]
    
    Moreover, for any constant \(C>0\), there exists a constant \(\gamma>0\),
    depending only on \(C\), such that if \(v\le CB\) and
    \[
        \mu\in \left[\frac15B,\frac45B\right],
    \]
    then
    \[
        \Pr[\clip_B(X)\ne X]\le e^{-\gamma B},
        \text{ and }
        \left|\mathbb E[\clip_B(X)]-\mu\right|
        \le B e^{-\gamma B}.
    \]
\end{restatable}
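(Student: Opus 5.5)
The plan is to treat the two parts of Lemma~\ref{lem:clipping-subg} separately. The first part is a structural fact about $1$-Lipschitz monotone maps. The second part is a direct tail computation using Lemma~\ref{lem:subg-tail-mgf}.

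\textbf{Part 1 (sub-Gaussianity of $\clip_B(X)$).} Write $Y=\clip_B(X)$ and let $(X',Y')$ be an independent copy. The map $x\mapsto \max\{0,\min\{x,B\}\}$ is $1$-Lipschitz, so $|Y-Y'|\le |X-X'|$ pointwise. Jensen's inequality in $Y'$ gives $\E[e^{\lambda(Y-\E Y)}]\le \E[e^{\lambda(Y-Y')}]$. By symmetry of $Y-Y'$ this equals $\E[\cosh(\lambda(Y-Y'))]$. Since $\cosh$ is even and increasing in $|\cdot|$, it is at most $\E[\cosh(\lambda(X-X'))]=\E[e^{\lambda(X-\mu)}]\,\E[e^{-\lambda(X'-\mu)}]\le e^{\lambda^2 v}$.

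This symmetrization argument yields $Y\in\mathrm{SG}(\E Y,2v)$. Obtaining the exact proxy $v$ is the main obstacle. The factor $2$ is harmless for every later use, where only $v=O(B)$ matters. If the sharp constant is needed, I would first try handling the two one-sided truncations $x\mapsto\max\{x,0\}$ and $x\mapsto\min\{x,B\}$ one at a time. For each, I would compare $\E[e^{\lambda(Y-\E Y)}]$ with $\E[e^{\lambda(X-\mu)}]$ directly, splitting on the sign of $\lambda$. For the sign where truncation removes the tail that drives the MGF, the bound should follow from monotonicity. For the other sign, I would try a convex-ordering argument, and only then apply the other truncation. If that fails, I would state the lemma with proxy $2v$ (or $O(v)$ via part~2 of Lemma~\ref{lem:subg-tail-mgf}) and propagate the constant.

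\textbf{Part 2 (clipping rarely matters).} Assume $v\le CB$ and $\mu\in[B/5,4B/5]$. The event $\{\clip_B(X)\ne X\}$ is $\{X<0\}\cup\{X>B\}$, which is contained in $\{|X-\mu|\ge B/5\}$. By Lemma~\ref{lem:subg-tail-mgf}(1), its probability is at most $2\exp(-B^2/(50v))\le 2e^{-B/(50C)}$. Choosing $\gamma<1/(50C)$ and taking $B$ large (or shrinking $\gamma$ to absorb the constant $2$) gives $e^{-\gamma B}$.

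For the mean, $|\E[\clip_B(X)]-\mu|\le \E|X-\clip_B(X)|=\E[(X-B)^+]+\E[(-X)^+]$. Each term is at most $\int_{B/5}^\infty 2e^{-t^2/(2v)}\,dt\le \int_{B/5}^\infty 2e^{-tB/(10v)}\,dt=O(v/B)\,e^{-B^2/(50v)}=O(1)\cdot e^{-B/(50C)}$. Shrinking $\gamma$ once more yields the bound $Be^{-\gamma B}$, with room to spare.
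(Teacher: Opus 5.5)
Your Part~2 is fine and matches the paper's argument: a tail bound at distance $B/5$ from $\mu$, then integrating the tails, with $B$ taken large in terms of $C$.

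\textbf{The gap is in Part~1.} Your symmetrization argument is correct, but it only gives $\clip_B(X)\in\mathrm{SG}(\E[\clip_B(X)],2v)$. The lemma asserts proxy exactly $v$, and your claim that the factor $2$ is harmless downstream is false. In Lemma~\ref{lem:proxy-variance-invariant} the whole point is that the proxy contracts from layer to layer. If each $\widehat V_{q'}$ has proxy $CB$, then $\widehat F_q$ has proxy $6\bigl(\frac{C}{9}+\frac14\bigr)B=\bigl(\frac23C+\frac32\bigr)B$, and clipping must bring $\widehat Y_q$ back to proxy at most $CB$. Suppose clipping may multiply the proxy by $\kappa$. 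Then the recursion becomes $v_i\le \frac{2\kappa}{3}v_{i-1}+O(B)$, which only closes for $\kappa<3/2$. With $\kappa=2$ the proxy grows like $(4/3)^i$ and reaches $n^{\Theta(1)}B$ after $\ell=\Theta(\log n)$ layers. The $e^{-\Omega(B)}$ tails needed for the union bound are then lost. Your $O(v)$ fallback via Lemma~\ref{lem:subg-tail-mgf} does not help either, since that lemma gives no guarantee that its constant is below $3/2$.

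\textbf{The missing idea} is the one the paper uses: a contraction principle in convex order. If $\phi$ is nondecreasing and $1$-Lipschitz, then $\phi(X)-\E[\phi(X)]\le_{\mathrm{cx}}X-\E[X]$. Applying this with $\Psi(z)=e^{\lambda z}$ for every $\lambda\in\mathbb{R}$ gives proxy $v$ directly. It handles both signs of $\lambda$ and the two-sided clip at once, so there is no need to split into one-sided truncations.

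The principle has a short proof.
\begin{itemize}
\item Write $X=g(U)$ with $g=F_X^{-1}$ nondecreasing and $U$ uniform on $(0,1)$. Let $Y'=\phi(g(U))-\E[\phi(X)]$ and $D=(X-\mu)-Y'$, so $\E[D]=0$.
\item Monotonicity of $\phi$ makes $Y'$ nondecreasing in $U$. The $1$-Lipschitz property makes $x-\phi(x)$, and hence $D$, nondecreasing in $U$.
\item For convex $\Psi$ we have $\Psi(Y'+D)\ge\Psi(Y')+\Psi'(Y')D$.
\item Since $\Psi'(Y')$ and $D$ are both nondecreasing functions of $U$, Chebyshev's association inequality gives $\E[\Psi'(Y')D]\ge\E[\Psi'(Y')]\,\E[D]=0$.
\end{itemize}
Hence $\E[\Psi(X-\mu)]\ge\E[\Psi(Y')]$, which is exactly what is needed.
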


The proof of Lemma~\ref{lem:clipping-subg} is deferred to
Appendix~\ref{app:clipping-subg}.

Finally, we will also need the following two slightly more specialized results. The first says that, even though Poisson random variables are \emph{not} sub-Gaussian (because their tails are too fat), they can (in some parameter regimes) be coupled to a random variable that is. 

\begin{restatable}[Sub-Gaussian coupling for a Poisson random variable]{lemma}{poissonsubgcoupling}
\label{lem:poisson-subg-coupling}
There exist absolute constants \(C,c>0\) such that the following holds. Let
\(X\sim \mathrm{Pois}(\mu)\) with \(\mu\ge 1\). Then there exists a random
variable \(G\), coupled with \(X\), such that \(\mathbb{E}[G]=\mu\), \(G-\mu \in
\mathrm{SG}(C\mu)\), and \(\mathbb{P}(G=X)\ge 1-e^{-c\mu}\). Moreover, \(G\) may
be chosen to be integer-valued.
\end{restatable}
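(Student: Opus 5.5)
The plan is to build $G$ explicitly by truncating $X$ to a window around $\mu$ and then correcting the mean. The point is that the only obstruction to sub-Gaussianity of a Poisson variable is its upper tail beyond scale $\mu$. On the event $X\le 2\mu$ (and for $X\ge 0$ trivially), $X$ already has Gaussian-type tails at variance scale $\mu$.

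\textbf{Step 1: Poisson tail bound in the central window.} For $0\le t\le \mu$, the Chernoff bound for $\mathrm{Pois}(\mu)$ gives
\[
\Pr[|X-\mu|\ge t]\le 2\exp\left(-\frac{t^2}{3\mu}\right).
\]
This uses the standard bound $\Pr[X\ge \mu+t]\le \exp(-t^2/(2(\mu+t/3)))$ together with the lower-tail bound $\exp(-t^2/(2\mu))$.

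\textbf{Step 2: Truncate.} Define $G_0=\min\{X,\lfloor 2\mu\rfloor\}$, which is integer-valued and lies in $[0,2\mu]$. For $t\le\mu$ we have $\Pr[|G_0-\mu|\ge t]\le \Pr[|X-\mu|\ge t]$, since truncation from above only moves values toward $\mu$. For $t>\mu$ the tail probability is $0$. So the hypothesis of Lemma~\ref{lem:subg-tail-mgf}(2) holds for $G_0$ about its own mean with $v=O(\mu)$, once we check that $|\E[G_0]-\mu|$ is small. We have $\Pr[G_0\ne X]=\Pr[X>2\mu]\le e^{-c'\mu}$ and
\[
\mu-\E[G_0]=\E[(X-\lfloor 2\mu\rfloor)^+]\le e^{-c''\mu}.
\]
This mean shift is at most a constant, so shifting the center changes the tail bound only by a constant factor in $v$: handle $t\lesssim\sqrt{\mu}$ trivially and absorb the shift otherwise. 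Hence $G_0-\E[G_0]\in\mathrm{SG}(C\mu)$.

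\textbf{Step 3: Restore the exact mean while keeping $G$ integer-valued.} This is the fiddly step. Let $\delta=\mu-\E[G_0]\in[0,e^{-c''\mu}]$. The aim is to add an integer correction on a rare event. Pick a set of outcomes, for instance the event $\{X\le \mu\}$ (which has probability $\ge$ a constant, say $1/4$, for $\mu\ge1$), and on an independent coin of bias $p=\delta/k$ add $k=1$. Formally, let $G=G_0+\xi\,\mathbf{1}[X\le\mu]$, where $\xi\sim\mathrm{Bernoulli}(p)$ is independent and $p=\delta/\Pr[X\le\mu]\le 4e^{-c''\mu}\le 1$.

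Then $\E[G]=\mu$ exactly and $G$ is integer-valued. Also $\Pr[G\ne X]\le \Pr[X>2\mu]+p\le e^{-c\mu}$ for suitable $c$. Adding a bounded $\{0,1\}$-valued term keeps sub-Gaussianity: the MGF of $G-\mu$ is controlled by Cauchy--Schwarz, or more simply by noting $|G-G_0|\le1$ and rechecking the tail condition of Lemma~\ref{lem:subg-tail-mgf}(2), losing only constants since $\mu\ge1$. This gives $G-\mu\in\mathrm{SG}(C\mu)$.

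The main obstacle is bookkeeping in Step 2/3: converting tail bounds centered at $\mu$ into sub-Gaussianity centered at the true mean of the modified variable, uniformly over all $\mu\ge 1$. I would handle small $\mu$ (bounded by a constant) separately. In that regime $G$ is bounded by $O(1)$, so it is trivially sub-Gaussian with variance $O(1)=O(\mu)$, and $e^{-c\mu}$ can be made at most a constant less than $1$ by shrinking $c$.
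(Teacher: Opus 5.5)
Your construction is correct, but it takes a different route from the paper's.

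The paper leaves $X$ unchanged on $A=\{X\le 4\mu\}$ and replaces it on $A^c$ by the conditional tail mean $m=\mathbb E[X\mid A^c]$. This makes $\mathbb E[G]=\mu$ automatic, and $G$ disagrees with $X$ only on $A^c$. Integrality is then restored by randomizing $m$ between $\lfloor m\rfloor$ and $\lceil m\rceil$. That route needs the extra estimate $m\le C_K\mu$ to keep $G$ on an interval of length $O(\mu)$.

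You instead truncate at $\lfloor 2\mu\rfloor$, which gives support in $[0,2\mu]$ and the pointwise bound $|G_0-\mu|\le|X-\mu|$ for free. You then repay the mean deficit $\delta$ with an independent Bernoulli increment on the bulk event $\{X\le\mu\}$. This costs you an additional disagreement event of probability exactly $\delta=e^{-\Omega(\mu)}$, and it requires a uniform lower bound on $\Pr[X\le\mu]$. In exchange, you never need to estimate $\mathbb E[X\mid X>K\mu]$, and sub-Gaussianity follows cleanly. Since $\mathbb E[G]=\mu$ exactly and $G\in[0,2\mu]$ (the increment fires only when $X\le\mu$, so $X+1\le 2\mu$), you can feed $\Pr[|G-\mu|\ge t]\le\Pr[|X-\mu|\ge t-1]$ for $t\le\mu$ (and $0$ for $t>\mu$) directly into the converse of Lemma~\ref{lem:subg-tail-mgf}. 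There is no need to recenter $G_0$ at its own mean.

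One small repair: the chain $p\le 4e^{-c''\mu}\le 1$ holds only for large $\mu$. For bounded $\mu$ you should verify $\delta\le\Pr[X\le\mu]$ directly. It does hold; the worst case is $\mu\to 2^-$, where $p\approx 0.54$. In that regime, what you actually need is that $\Pr[G\ne X]=\Pr[X>2\mu]+\delta$ stays bounded away from $1$, so that shrinking $c$ makes $e^{-c\mu}$ at least that large. The issue is not whether $e^{-c\mu}$ is "at most" a constant.
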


The proof of Lemma~\ref{lem:poisson-subg-coupling} is deferred to
Appendix~\ref{app:poisson-subg-coupling}.

The final lemma considers what happens if we take a non-negative integer-valued sub-Gaussian random variable $X$ and replace it with $\mathrm{Bin}(X,1/3)$ (this is analogous to taking the overflow from a bin $q_1$ and deciding which of the elements should be forwarded to a given $q_2$). Here, critically, the result is not just sub-Gaussian, but it is sub-Gaussian with a \emph{smaller} variance (assuming the original variance was large enough). 

\begin{restatable}[Sub-Gaussianity under binomial thinning]{lemma}{subgbinomialthinning}
    \label{lem:subg-binomial-thinning}
    Let \(G\) be a nonnegative integer-valued random variable such that \(G\in
    \mathrm{SG}(\mu,v)\) (in particular \(\mathbb{E}[G]=\mu\)), and suppose that
    \(0\le G\le B\) almost surely. Conditional on \(G\), let \(H\mid G \sim
    \mathrm{Bin}(G,1/3)\). Then
    \[
        H\in \mathrm{SG}\left(\frac{\mu}{3}, \frac{v}{9}+\frac{B}{4}\right).
    \]
\end{restatable}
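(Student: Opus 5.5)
The plan is to bound the moment-generating function of $H-\mu/3$ directly, conditioning on $G$ first. Fix $\lambda\in\mathbb{R}$ and write
\[
H-\frac{\mu}{3} = \left(H-\frac{G}{3}\right) + \frac{1}{3}\left(G-\mu\right).
\]
Conditional on $G$, the first term is a centered $\mathrm{Bin}(G,1/3)$ variable. By Lemma~\ref{lem:binomial-subg}, applied with $n=G$, it lies in $\mathrm{SG}(G/4)$. Since the second term is $G$-measurable, the tower property gives
\[
\mathbb{E}\left[e^{\lambda(H-\mu/3)}\right]
= \mathbb{E}\left[e^{\frac{\lambda}{3}(G-\mu)}\,\mathbb{E}\left[e^{\lambda(H-G/3)}\,\middle|\,G\right]\right]
\le \mathbb{E}\left[e^{\frac{\lambda}{3}(G-\mu)}\,e^{\lambda^2 G/8}\right].
\]

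Next I use the almost-sure bound $0\le G\le B$. It gives $e^{\lambda^2 G/8}\le e^{\lambda^2 B/8}$ deterministically, so this factor comes out of the expectation. What remains is $\mathbb{E}[e^{(\lambda/3)(G-\mu)}]$. By the hypothesis $G\in\mathrm{SG}(\mu,v)$, equivalently by Lemma~\ref{lem:subg-scaling} with $a=1/3$, this is at most $e^{\lambda^2 v/18}$. Multiplying the two bounds gives
\[
\mathbb{E}\left[e^{\lambda(H-\mu/3)}\right]\le \exp\left(\frac{\lambda^2}{2}\left(\frac{v}{9}+\frac{B}{4}\right)\right).
\]
This is exactly the claimed sub-Gaussian parameter. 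The mean is also correct, since $\mathbb{E}[H]=\mathbb{E}[G]/3=\mu/3$.

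The only point needing care is the conditional application of Lemma~\ref{lem:binomial-subg}. That lemma is stated for a fixed $n$, and I will apply it pointwise for each value $G=g$ with $g$ a nonnegative integer; the case $g=0$ is trivial. Everything else is routine.

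Note that the boundedness $G\le B$ is what prevents the fresh sampling variance from coupling with large values of $G$. Without it one would need a mixed MGF bound on $e^{\lambda G/3+\lambda^2G/8}$, which the sub-Gaussian hypothesis alone would not cleanly control. Here, because $G\le B$, no such obstacle arises.
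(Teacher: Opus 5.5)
Your proposal is correct and matches the paper's proof: the same decomposition $H-\mu/3=(H-G/3)+(G-\mu)/3$, conditioning on $G$ and applying Lemma~\ref{lem:binomial-subg} pointwise, the bound $e^{\lambda^2 G/8}\le e^{\lambda^2 B/8}$ from $G\le B$, and Lemma~\ref{lem:subg-scaling} for the remaining factor. Multiplying these gives exactly the stated proxy $\frac{v}{9}+\frac{B}{4}$.
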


The proof of Lemma~\ref{lem:subg-binomial-thinning} is deferred to
Appendix~\ref{app:subg-binomial-thinning}.

\subsection{Proof of the Static Placement Theorem}\label{sec:actual-proof}

Having reviewed the properties of sub-Gaussian random variables, we now turn to the proof of the Static Placement Theorem, restated below for convenience.

\staticplacement*

We first prove the theorem in a Poissonized model, where rather than assuming $n$ elements, we assume that the number of elements is a Poisson random variable 
\[
    N\sim \mathrm{Pois}(n)
\]
with mean $n$. These elements are then placed into the hash table using the same construction as in the fixed-$n$ model. 

For a bin \(q\), let
\[
    A_q
\]
denote the number of elements whose first hash is \(q\) -- we refer to $A_q$ as the \emph{primary-arrival variable} for bin $q$. Under Poissonization, the variables
\(\{A_q\}\) are independent Poisson random variables.

For \(q\in L_i\), let
\[
    F_q
\]
denote the number of elements forwarded into \(q\) from layer \(L_{i-1}\), with
\(F_q=0\) for \(q\in L_1\). For \(q\in L_i\), \(i<\ell\), let
\[
    V_q
\]
denote the number of elements overflowed out of \(q\) and forwarded to
\(L_{i+1}\).

Before continuing, it will be helpful to compute the expectations of the primary-arrival variables. For \(q\in L_1\),
\[
    \mathbb E[A_q]
    =
    \frac{np_1}{b_1}
    =
    \frac{s_1+\frac12s_2}{b_1}
    =
    \frac54B.
\]
For \(q\in L_i\), where \(2\le i<\ell\),
\[
    \mathbb E[A_q]
    =
    \frac{np_i}{b_i}
    =
    \frac{\frac12(s_i+s_{i+1})}{b_i}
    =
    \frac34B.
\]
Finally, for $q \in L_\ell$, we have $A_q = 0$ since no elements have first hashes in the final layer. 

\paragraph{Analyzing a proxy process.} We now define an auxiliary process that is easier to analyze. For each
non-final bin \(q\), we couple \(A_q\) to an integer-valued random variable
\(\widehat A_q\) using the Poisson Sub-Gaussian Coupling Lemma
(Lemma~\ref{lem:poisson-subg-coupling}). Thus, if
\[
    \mu_q=\mathbb E[A_q],
\]
then
\[
    \widehat A_q\in \mathrm{SG}(\mu_q,C_{\mathrm P}\mu_q)
\]
and
\[
    \Pr[\widehat A_q\ne A_q]\le e^{-c_{\mathrm P}\mu_q},
\]
where \(C_{\mathrm P},c_{\mathrm P}>0\) are absolute constants. We take these
couplings independently over all bins.

The proxy process is defined recursively. For bins \(q\in L_1\), set
\[
    \widehat F_q=0.
\]
For a non-final bin \(q\), define its proxy pre-overflow quantity by
\[
    \widehat Y_q=\widehat F_q+\widehat A_q-B,
\]
and define its proxy overflow by clipping:
\[
    \widehat V_q=\clip_B(\widehat Y_q)
    =
    \max\{0,\min\{\widehat Y_q,B\}\}.
\]
Conditioned on \(\widehat V_q\), the proxy process sends
\(\widehat V_q\) elements independently and uniformly to the three bins in the
next layer allowed by the definition of \(h_2\). For a bin \(r\in L_{i+1}\), let
\[
    \widehat H_{q,r}
\]
denote the number sent from \(q\) to \(r\). Thus, whenever \(r\) is one of the
three possible destinations of \(q\),
\[
    \widehat H_{q,r}\mid \widehat V_q
    \sim
    \mathrm{Bin}(\widehat V_q,1/3).
\]
Finally, for \(r\in L_{i+1}\), define
\[
    \widehat F_r=\sum_{q\in\mathsf{Pred}(r)} \widehat H_{q,r},
\]
where \(\mathsf{Pred}(r)\) is the set of predecessor bins in \(L_i\) that can forward
to \(r\). This represents the number of elements forwarded into $r$ from the previous layer in the proxy process.

Note that this proxy process differs from the true process in only two ways: (1) each $A_q$ is replaced with $\widehat A_q$, and (2) the number of balls $\widehat V_q$ that overflow out of a bin $q$ is artificially truncated to at most $B$. At the end of the analysis, we will show that, with high probability in $n$, neither of these distinctions ever matters (the two processes behave exactly the same). Until then, however, it will be more convenient to reason about the proxy process.

\paragraph{Local independence of predecessor bins.} We now argue that, within the proxy process, if we focus on a given bin $q \in L_{i}$ (for some $i \ge 2$), then the behaviors of $q$'s predecessor bins are mutually independent.

\begin{lemma}[Local independence of predecessor overflows]
    \label{lem:local-independence}
    Fix \(2\le i\le \ell\), and let \(q\in L_i\). Let
    $
        \mathsf{Pred}(q)\subseteq L_{i-1}
    $
    be the set of bins in layer \(L_{i-1}\) from which an element may be forwarded
    to \(q\). Then $|\mathsf{Pred}(q)|=6$ and the proxy overflow variables $\{\widehat V_{q'}:q'\in\mathsf{Pred}(q)\}$ are mutually independent. Finally, conditioned on the value of \(\widehat V_{q'}\), the number of proxy
    elements forwarded from \(q'\in\mathsf{Pred}(q)\) to \(q\) is distributed as
    $\mathrm{Bin}(\widehat V_{q'},1/3),
    $
    independently over \(q'\in\mathsf{Pred}(q)\).
    \end{lemma}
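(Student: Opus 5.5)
First I will pin down $\mathsf{Pred}(q)$ explicitly from the definition of $h_2$. Fix $q\in L_i$ with address $y_1\cdots y_{k_i}$, where $k_i=K+\ell-i$. A bin $q'\in L_{i-1}$ has address $x_1\cdots x_{k_{i-1}}$ with $k_{i-1}=k_i+1$. Its successors are obtained by deleting $x_{k_{i-1}}$ and replacing the pair $(x_{2(i-1)-1},x_{2(i-1)})$ by one of the three other two-bit strings. So $q\in\mathsf{Succ}(q')$ exactly when three conditions hold: $x_j=y_j$ for all $j\le k_i$ outside positions $2i-3,2i-2$; the pair $(x_{2i-3},x_{2i-2})$ differs from $(y_{2i-3},y_{2i-2})$; and $x_{k_{i-1}}$ is arbitrary. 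That gives $3\cdot 2=6$ predecessors. Since $h_2$ picks the three alternatives uniformly and independently for each forwarded element, conditioned on $\widehat V_{q'}$ the number sent to $q$ is $\mathrm{Bin}(\widehat V_{q'},1/3)$. Because the proxy forwarding randomness is independent across source bins, this holds independently over $q'$ once the $\widehat V_{q'}$ are fixed. That settles the first and last claims.

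The main point is mutual independence of $\{\widehat V_{q'}\}$. I would prove it via an ancestor-tree argument. For a bin $u\in L_j$, define its ancestor set $T_u$ as $u$ together with all bins in $L_1,\ldots,L_{j-1}$ reachable by iterating $\mathsf{Pred}$. By the recursive definition of the proxy process, $\widehat V_u$ is a measurable function of two families of randomness. The first is the primary variables $\widehat A_w$ for $w\in T_u$. The second is the forwarding randomness (the $h_2$ choices) of overflowed elements from bins $w\in T_u\setminus\{u\}$; strictly, only the part of that randomness that determines whether each element lands in the relevant descendant matters. I will set this up as an induction on $j$, using independent per-bin randomness: each bin $w$ gets its coupled $\widehat A_w$ and an independent i.i.d. sequence of destination choices for its overflowed elements. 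Since these per-bin randomness sources are all independent (by Poissonization and the independent couplings), it suffices to show that the ancestor sets $T_{q'}$, for the six $q'\in\mathsf{Pred}(q)$, are pairwise disjoint.

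To prove disjointness, I will characterize $T_u$ by addresses. Going backward one layer from $L_j$ to $L_{j-1}$ appends one free final bit and alters only the pair at positions $2j-3,2j-2$. By induction, any ancestor $w\in L_{j'}$ of $u\in L_j$ with $j'<j$ has the following form: its first $k_j$ bits agree with $u$'s address except possibly at positions $2j'-1,\ldots,2j-2$. Those positions are all $\le 2j-2<2j-3+\ldots$, strictly before positions $2(i-1)-1,2(i-1)$ when $j=i-1$. Its remaining $j-j'$ bits are arbitrary. Now take two distinct predecessors $q',q''\in L_{i-1}$ of $q$. They differ either in the pair at positions $2i-3,2i-2$ or in the final bit $k_{i-1}$.

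The step I expect to be the main obstacle is checking that both kinds of difference survive in every ancestor. Consider ancestors of $q'$ and $q''$ in a common layer $L_{j'}$. The pair positions $2i-3,2i-2$ lie outside the window $[2j'-1,2(i-1)-2]$ that earlier transitions may toggle, so that pair is inherited unchanged. Also, position $k_{i-1}$ of $q'$ is never deleted or toggled as we go backward, because going backward only appends bits beyond it, and $k_{i-1}>2i-2$ by the well-definedness lemma. Hence any ancestor of $q'$ in $L_{j'}$ carries $q'$'s values at positions $2i-3,2i-2$ and $k_{i-1}$, and likewise for $q''$. So the ancestor sets are disjoint at every layer. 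The care needed is in bookkeeping the index of the deleted last bit relative to the toggled windows.
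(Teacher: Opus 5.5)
Your proposal is correct and follows the same route as the paper. You count $\mathsf{Pred}(q)$ as $3\cdot 2$ (the altered bit-pair times the deleted final bit), and you get independence from pairwise disjointness of the ancestor sets, which holds because earlier transitions only toggle pairs at positions $\le 2i-4$ and, traced backward, only append bits beyond position $k_{i-1}$. Your version is slightly more careful than the paper's in that it puts the per-bin forwarding randomness (not just the $\widehat A_w$) into the dependency set of $\widehat V_u$, which the disjointness argument actually needs. The garbled inequality ``$\le 2j-2<2j-3+\ldots$'' should simply say that for $j=i-1$ the toggle window ends at $2i-4<2i-3$.
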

    
    \begin{proof}
    First consider the size of \(\mathsf{Pred}(q)\). A bin \(q'\in L_{i-1}\) can forward
    to \(q\in L_i\) only if deleting the final bit of the address of \(q'\), and
    then changing the bit-pair in positions
    \[
        2i-3,\ 2i-2,
    \]
    produces the address of \(q\). There are three possible original values for
    this bit-pair, since it must be different from the pair in \(q\), and there are
    two possible values for the deleted bit. Hence $|\mathsf{Pred}(q)|=3\cdot 2=6$.
    
    We now prove the independence claim. For a bin \(u\), let
    \[
        \mathcal D(u)
    \]
    denote the set of bins \(r\) whose primary-arrival variables \(\widehat A_r\)
    can influence the proxy overflow \(\widehat V_u\) (so $\mathcal D(u)$ consists of the predecessors of $u$, the predecessors of those predecessors, etc.). We claim that, for the six
    bins \(q'\in\mathsf{Pred}(q)\), the sets
    \[
        \{\mathcal D(q') : q'\in\mathsf{Pred}(q)\}
    \]
    are pairwise disjoint. Since the proxy primary-arrival variables
    \(\{\widehat A_r\}_r\) are mutually independent, this implies that the overflows
    \[
        \{\widehat V_{q'}:q'\in\mathsf{Pred}(q)\}
    \]
    are mutually independent.
    
    We prove the disjointness claim by describing \(\mathcal D(u)\). Let
    \(u\in L_{i-1}\). The overflow \(\widehat V_u\) is determined by the proxy
    primary arrivals in the bins that can reach \(u\) through the first \(i-2\)
    forwarding steps. Equivalently, \(\mathcal D(u)\) consists of those bins in
    layers
    \[
        L_1,L_2,\ldots,L_{i-1}
    \]
    whose address can be transformed into the address of \(u\) by repeatedly
    applying the earlier layer transitions.
    
    Crucially, the earlier transitions use only the bit-pairs
    \[
        (1,2),(3,4),\ldots,(2i-5,2i-4),
    \]
    and delete only one trailing bit at each step. They do not inspect or modify the
    fresh bit-pair
    \[
        (2i-3,2i-2)
    \]
    used in the transition from \(L_{i-1}\) to \(L_i\).
    
    The six bins in \(\mathsf{Pred}(q)\) differ from one another either in this fresh
    bit-pair \((2i-3,2i-2)\), or in the final bit that is deleted in the transition
    to \(L_i\). Since neither of these bits is touched by any earlier transition,
    any bin that can reach one predecessor \(q'\in\mathsf{Pred}(q)\) through earlier
    transitions cannot also reach a different predecessor \(q''\in\mathsf{Pred}(q)\).
    Thus
    \[
        \mathcal D(q')\cap \mathcal D(q'')=\emptyset
        \qquad
        \text{for distinct }q',q''\in\mathsf{Pred}(q).
    \]
    This proves the disjointness claim, and hence the mutual independence of the
    proxy overflows \(\widehat V_{q'}\).
    
    Finally, in the proxy process,
    conditioned on \(\widehat V_{q'}\), each of the \(\widehat V_{q'}\) overflowed
    proxy elements independently chooses one of the three allowed destinations in
    the next layer uniformly at random. Therefore the number sent from \(q'\) to
    the fixed bin \(q\) is
    \[
        \mathrm{Bin}(\widehat V_{q'},1/3),
    \]
    and this is independent over the six bins $q'\in\mathsf{Pred}(q)$.
    \end{proof}

\paragraph{Analyzing overflow variances. } We now prove the most important result of the analysis: that the overflow random variables $\widehat V_q$ are sub-Gaussian with variance $O(B)$. 

\begin{lemma}[Variance invariant]
\label{lem:proxy-variance-invariant}
There is an absolute constant \(C>0\) such that the following holds for the
proxy process.

For every non-final bin \(q\),
\[
    \widehat V_q
    \in
    \mathrm{SG}\left(\mathbb E[\widehat V_q],CB\right).
\]
Moreover, for every bin \(q\in L_i\), \(i\ge 2\),
\[
    \widehat F_q
    \in
    \mathrm{SG}\left(\mathbb E[\widehat F_q],
        \left(\frac23C+\frac32\right)B
    \right).
\]
\end{lemma}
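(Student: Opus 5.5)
We argue by induction on the layer index $i$, proving both claims for all bins of $L_i$ before moving to $L_{i+1}$. We fix $C$ at the end so that the recursion closes. Write $v_A := C_{\mathrm P}\cdot \tfrac54 B$; this is the largest sub-Gaussian variance parameter among the proxy primary arrivals $\widehat A_q$, since $\mu_q\in\{\tfrac54B,\tfrac34B\}$ for non-final bins.

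\emph{Base case ($i=1$).} Here $\widehat F_q=0$, so $\widehat Y_q=\widehat A_q-B\in\mathrm{SG}(\mu_q-B,v_A)$. The first part of the Clipping Lemma (Lemma~\ref{lem:clipping-subg}) gives
\[
\widehat V_q=\clip_B(\widehat Y_q)\in\mathrm{SG}(\mathbb E[\widehat V_q],v_A).
\]
This is at most $CB$ once $C\ge \tfrac54 C_{\mathrm P}$.

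\emph{Inductive step, forwarded load.} Fix $q\in L_i$ with $i\ge2$, and assume $\widehat V_{q'}\in\mathrm{SG}(\mathbb E[\widehat V_{q'}],CB)$ for every non-final $q'\in L_{i-1}$. By Lemma~\ref{lem:local-independence}, the six predecessors have mutually independent overflows, and given $\widehat V_{q'}$ we have $\widehat H_{q',q}\sim\mathrm{Bin}(\widehat V_{q'},1/3)$, independently across $q'$. Hence the pairs $(\widehat V_{q'},\widehat H_{q',q})$ are independent across $q'$, and the $\widehat H_{q',q}$ are independent. Since $\widehat V_{q'}$ is a clipped integer-valued variable, $0\le\widehat V_{q'}\le B$, so the Binomial Thinning Lemma (Lemma~\ref{lem:subg-binomial-thinning}) applies:
\[
\widehat H_{q',q}\in\mathrm{SG}\Bigl(\tfrac13\mathbb E[\widehat V_{q'}],\ \tfrac{CB}{9}+\tfrac B4\Bigr).
\]
Summing over the six predecessors with Lemma~\ref{lem:subg-sums} gives
\[
\widehat F_q\in\mathrm{SG}\Bigl(\mathbb E[\widehat F_q],\ \tfrac23CB+\tfrac32B\Bigr),
\]
which is exactly the second claim.

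\emph{Inductive step, overflow.} For non-final $q\in L_i$, the variable $\widehat A_q$ depends only on the coupling randomness at $q$ itself. Since $q\notin\mathcal D(q')$ for any predecessor $q'$, $\widehat A_q$ is independent of $\widehat F_q$. Lemma~\ref{lem:subg-sums} then gives
\[
\widehat Y_q=\widehat F_q+\widehat A_q-B\in\mathrm{SG}\Bigl(\mathbb E[\widehat Y_q],\ \bigl(\tfrac23C+\tfrac32\bigr)B+v_A\Bigr).
\]
The first part of the Clipping Lemma preserves this variance parameter for $\widehat V_q$. We need
\[
\tfrac23C+\tfrac32+\tfrac54C_{\mathrm P}\le C,
\qquad\text{i.e.}\qquad
C\ge \tfrac92+\tfrac{15}{4}C_{\mathrm P},
\]
which also covers the base case. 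Taking $C$ to be this absolute constant closes the induction.

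The step that needs the most care is the interaction between clipping and thinning. The thinning lemma requires $0\le G\le B$, which is exactly why the proxy process clips at $B$ rather than just at $0$. Clipping must not inflate the variance parameter, and the first part of Lemma~\ref{lem:clipping-subg} guarantees this with no assumption on the mean. As a result, this lemma needs no control of the means at all; the mean estimates around $B/4$ are deferred to the later use of the Clipping Lemma's second part. The other point to verify is the joint independence used in summing: the family $\{\widehat A_q\}\cup\{\widehat H_{q',q}\}_{q'\in\mathsf{Pred}(q)}$ must be independent. This follows because the dependency sets $\mathcal D(q')$ are disjoint and do not contain $q$, and the thinning coins are fresh per predecessor.
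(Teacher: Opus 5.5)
Your proof is correct and follows the paper's argument essentially step for step: clipping for the base case, then binomial thinning plus summation over the six independent predecessors to get $\bigl(\frac23C+\frac32\bigr)B$ for $\widehat F_q$, then adding $\widehat A_q$ and clipping again, with the coefficient $2/3<1$ closing the recursion. The differences are cosmetic. You use the single bound $\frac54 C_{\mathrm P}B$ for all primary arrivals, which gives $C\ge \frac92+\frac{15}{4}C_{\mathrm P}$ instead of the paper's $\frac34 C_{\mathrm P}$ term in later layers. You also state explicitly that $\widehat A_q$ is independent of $\widehat F_q$, which the paper uses without comment.
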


\begin{proof}
Let \(C_{\mathrm P}\) be the absolute constant from the Poisson Sub-Gaussian
Coupling Lemma (Lemma~\ref{lem:poisson-subg-coupling}). Choose \(C\) large
enough that
\[
    \frac54C_{\mathrm P}\le C
\]
and
\[
    \frac23C+\frac32+\frac34C_{\mathrm P}\le C.
\]
This is possible because the coefficient of \(C\) in the second inequality is
\(2/3<1\).

We prove the statement by induction over the layers. We begin with the base case of $q \in L_1$. For \(q\in L_1\), we have \(\widehat F_q=0\), and
\[
    \widehat A_q
    \in
    \mathrm{SG}\left(\frac54B,\frac54C_{\mathrm P}B\right)
    \subseteq
    \mathrm{SG}\left(\frac54B,CB\right).
\]
Thus
\[
    \widehat Y_q=\widehat A_q-B
\]
has sub-Gaussian variance at most \(CB\). Since
\[
    \widehat V_q=\clip_B(\widehat Y_q),
\]
the Clipping Lemma
(Lemma~\ref{lem:clipping-subg}) gives
\[
    \widehat V_q
    \in
    \mathrm{SG}\left(\mathbb E[\widehat V_q],CB\right).
\]

Now consider a bin \(q\in L_i\), where \(i\ge 2\), and assume the invariant has
been proved for layer \(L_{i-1}\). By the Local Independence of Predecessor
Bins Lemma (Lemma~\ref{lem:local-independence}), \(q\) has six independent
predecessors. For each \(q'\in\mathsf{Pred}(q)\), conditioned on \(\widehat V_{q'}\),
\[
    \widehat H_{q',q}\mid \widehat V_{q'}
    \sim
    \mathrm{Bin}(\widehat V_{q'},1/3).
\]
Since
\[
    0\le \widehat V_{q'}\le B
\]
and
\[
    \widehat V_{q'}
    \in
    \mathrm{SG}\left(\mathbb E[\widehat V_{q'}],CB\right),
\]
the Sub-Gaussianity under Binomial Thinning Lemma
(Lemma~\ref{lem:subg-binomial-thinning}) gives
\[
    \widehat H_{q',q}
    \in
    \mathrm{SG}\left(
        \frac{\mathbb E[\widehat V_{q'}]}{3},
        \frac{CB}{9}+\frac{B}{4}
    \right).
\]
Summing over the six independent predecessors and using the Sums of Independent
Sub-Gaussians Lemma (Lemma~\ref{lem:subg-sums}), we obtain
\[
    \widehat F_q
    \in
    \mathrm{SG}\left(
        \mathbb E[\widehat F_q],
        6\left(\frac{CB}{9}+\frac{B}{4}\right)
    \right).
\]
Equivalently,
\[
    \widehat F_q
    \in
    \mathrm{SG}\left(
        \mathbb E[\widehat F_q],
        \left(\frac23C+\frac32\right)B
    \right).
\]

This gives us the desired result for $\widehat F_q$, and it remains to consider $\widehat V_q$ (in the case where $i < \ell$). Recall that 
\[
    \widehat A_q
    \in
    \mathrm{SG}\left(\frac34B,\frac34C_{\mathrm P}B\right).
\]
Therefore, by the Sums of Independent Sub-Gaussians Lemma (Lemma~\ref{lem:subg-sums}),
\[
    \widehat Y_q
    =
    \widehat F_q+\widehat A_q-B
\]
has variance at most
\[
    \left(\frac23C+\frac32+\frac34C_{\mathrm P}\right)B
    \le
    CB.
\]
Applying the Clipping Lemma
(Lemma~\ref{lem:clipping-subg}) to
\[
    \widehat V_q=\clip_B(\widehat Y_q)
\]
gives
\[
    \widehat V_q
    \in
    \mathrm{SG}\left(\mathbb E[\widehat V_q],CB\right),
\]
which completes the proof. 
\end{proof}

\paragraph{Analyzing overflow means. } We next compute the means in the proxy process. By symmetry, all bins in the
same layer have the same proxy-overflow mean. Let
\[
    m_i=\mathbb E[\widehat V_q]
    \qquad
    \text{for } q\in L_i,\ i<\ell.
\]

\begin{lemma}[Proxy mean invariant]
\label{lem:proxy-mean-invariant}
There is a constant \(\gamma>0\) such that the following holds. Define
\[
    \tau=Be^{-\gamma B},
\]
and define
\[
    \Delta_1=\tau,
    \qquad
    \Delta_i=2\Delta_{i-1}+\tau
    \quad\text{for } i\ge 2.
\]
Assume that
\[
    \Delta_i\le \frac{B}{40}
    \qquad
    \text{for all } i<\ell.
\]
Then, for every non-final layer \(L_i\),
\[
    m_i=\frac14B+\varepsilon_i
\]
for some deterministic \(\varepsilon_i\) satisfying
\[
    |\varepsilon_i|\le \Delta_i.
\]
\end{lemma}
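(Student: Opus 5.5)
We will prove the statement by induction on $i$, using the Clipping Lemma (Lemma~\ref{lem:clipping-subg}) with the constant $C$ from Lemma~\ref{lem:proxy-variance-invariant}. We take $\gamma$ to be the constant that Lemma~\ref{lem:clipping-subg} provides for variance bound $CB$. The point is that clipping changes the mean by at most $\tau = Be^{-\gamma B}$, provided the unclipped mean lies in $[\frac15B,\frac45B]$. Without clipping, the process preserves the mean $B/4$ exactly, by the choice of the $p_i$.

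\emph{Base case.} For $q\in L_1$ we have $\widehat Y_q=\widehat A_q-B$ with $\E[\widehat A_q]=\mu_q=\frac54B$, so $\E[\widehat Y_q]=\frac14B\in[\frac15B,\frac45B]$. By Lemma~\ref{lem:proxy-variance-invariant}, the variance proxy of $\widehat Y_q$ is at most $CB$. Lemma~\ref{lem:clipping-subg} then gives $|m_1-\frac14B|\le\tau=\Delta_1$.

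\emph{Inductive step.} Assume $m_{i-1}=\frac14B+\varepsilon_{i-1}$ with $|\varepsilon_{i-1}|\le\Delta_{i-1}$, and take $q\in L_i$ with $2\le i<\ell$. By Lemma~\ref{lem:local-independence}, $\widehat F_q=\sum_{q'\in\mathsf{Pred}(q)}\widehat H_{q',q}$. Each term has mean $\E[\widehat V_{q'}]/3=m_{i-1}/3$, since the thinning is $\mathrm{Bin}(\widehat V_{q'},1/3)$. Hence
\[
\E[\widehat F_q]=2m_{i-1}=\tfrac12B+2\varepsilon_{i-1}.
\]
Adding $\E[\widehat A_q]=\frac34B$ gives $\E[\widehat Y_q]=\frac14B+2\varepsilon_{i-1}$. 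By hypothesis $|2\varepsilon_{i-1}|\le2\Delta_{i-1}\le\Delta_i\le B/40$, so this mean lies in $[\frac{9}{40}B,\frac{11}{40}B]\subseteq[\frac15B,\frac45B]$. Lemma~\ref{lem:proxy-variance-invariant} shows that $\widehat Y_q$ has variance proxy at most $CB$. Lemma~\ref{lem:clipping-subg} then gives $|m_i-\E[\widehat Y_q]|\le\tau$, so $|\varepsilon_i|\le2|\varepsilon_{i-1}|+\tau\le\Delta_i$.

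Two points need care. First, $\varepsilon_i$ is deterministic because $m_i$ is an expectation that is the same for every bin in $L_i$ by symmetry. Symmetry holds because every bin in a layer sees an isomorphic predecessor tree and i.i.d.\ arrival distributions. Second, and this is the only real obstacle, we must check that the Clipping Lemma applies with the same $\gamma$ at every layer. This requires the variance proxy of $\widehat Y_q$ to be bounded by $CB$ uniformly in $i$, which Lemma~\ref{lem:proxy-variance-invariant} guarantees. It also requires the mean window assumption, which is exactly what the hypothesis $\Delta_i\le B/40$ supplies.

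Note that the doubling in the recursion $\Delta_i=2\Delta_{i-1}+\tau$ comes from the six predecessors, each contributing one third of its overflow. Fluctuations of the mean are therefore amplified by a factor of $2$ per layer. Since there are only $\ell=O(\log n)$ layers, $\Delta_i\le2^{\ell}\tau=n^{O(1)}Be^{-\gamma B}$, which is tiny once $c$ is large. This is why the assumption $\Delta_i\le B/40$ will be easy to discharge later, though that is not part of this lemma.
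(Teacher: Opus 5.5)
Your proposal is correct and follows the paper's proof essentially step for step: you choose $\gamma$ from the Clipping Lemma at variance $CB$, handle the base case via $\E[\widehat Y_q]=\frac14B$, and in the inductive step use $\E[\widehat F_q]=2m_{i-1}$ and $\E[\widehat A_q]=\frac34B$ to get $\E[\widehat Y_q]=\frac14B+2\varepsilon_{i-1}$, then apply clipping to get $|\varepsilon_i|\le 2|\varepsilon_{i-1}|+\tau$. The only cosmetic difference is the mean-window check: you bound $2\Delta_{i-1}\le\Delta_i\le B/40$, whereas the paper uses $2\Delta_{i-1}\le B/20$; both place the mean inside $[\frac15B,\frac45B]$.
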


\begin{proof}
Choose \(\gamma>0\) small enough so that the Clipping Lemma
(Lemma~\ref{lem:clipping-subg}) applies with variance \(CB\) and
gives mean shift at most
\[
    \tau=Be^{-\gamma B}.
\]
We prove the claim by induction.

For \(q\in L_1\),
\[
    \mathbb E[\widehat Y_q]
    =
    \mathbb E[\widehat A_q]-B
    =
    \frac54B-B
    =
    \frac14B.
\]
Moreover, by Lemma~\ref{lem:proxy-variance-invariant},
\[
    \widehat Y_q
\]
has variance at most \(CB\). The mean \(\frac14B\) lies far from both
clipping thresholds \(0\) and \(B\), so the Clipping Lemma
(Lemma~\ref{lem:clipping-subg}) gives
\[
    \left|
        \mathbb E[\widehat V_q]-\frac14B
    \right|
    \le \tau.
\]
Thus \(|\varepsilon_1|\le \Delta_1\).

Now suppose \(2\le i<\ell\), and assume
\[
    m_{i-1}=\frac14B+\varepsilon_{i-1},
    \qquad
    |\varepsilon_{i-1}|\le \Delta_{i-1}.
\]
For \(q\in L_i\), the expected forwarded load is
\[
    \mathbb E[\widehat F_q]
    =
    6\cdot \frac13 m_{i-1}
    =
    2m_{i-1}.
\]
The expected primary-arrival proxy $\E[\widehat A_q]$ is $\frac34B$, so
\[
    \mathbb E[\widehat Y_q]
    =
    2m_{i-1}+\frac34B-B
    =
    \frac14B+2\varepsilon_{i-1}.
\]
Since
\[
    |2\varepsilon_{i-1}|
    \le
    2\Delta_{i-1}
    \le
    \frac{B}{20},
\]
this mean lies in \([B/5,4B/5]\). By the Variance Invariant
(Lemma~\ref{lem:proxy-variance-invariant}), \(\widehat Y_q\) has variance
at most \(CB\). Hence the Clipping Lemma
(Lemma~\ref{lem:clipping-subg}) gives
\[
    \left|
        \mathbb E[\widehat V_q]
        -
        \left(\frac14B+2\varepsilon_{i-1}\right)
    \right|
    \le \tau.
\]
Writing
\[
    \mathbb E[\widehat V_q]=\frac14B+\varepsilon_i,
\]
we get
\[
    |\varepsilon_i-2\varepsilon_{i-1}|\le \tau,
\]
and therefore
\[
    |\varepsilon_i|
    \le
    2|\varepsilon_{i-1}|+\tau
    \le
    2\Delta_{i-1}+\tau
    =
    \Delta_i.
\]
This completes the induction.
\end{proof}

Recall that the bin size $B$ is given by $B = \lceil c \log n \rceil$ for some large positive constant $c$. Moreover, the constants used in the previous arguments ($C, \gamma, C_{\mathrm P}, c_{\mathrm P}$) are all independent of $c$. Therefore, we may treat those constants as fixed, and take $c$ to be sufficiently large. Doing so, we can argue that the error sequence
in Lemma~\ref{lem:proxy-mean-invariant} is negligible. Since
\[
    \Delta_i=(2^i-1)\tau
\]
and
\[
    2^\ell=\Theta(n^{1/4}),
\]
we have
\[
    \max_{i<\ell}\Delta_i
    \le
    O\left(Bn^{1/4}e^{-\gamma B}\right).
\]
Because \(B=c\log n\), choosing \(c\) sufficiently large ensures that
\[
    \max_{i<\ell}\Delta_i\le Bn^{-10}.
\]
In particular, the hypothesis
\[
    \Delta_i\le B/40
\]
from Lemma~\ref{lem:proxy-mean-invariant} holds for every \(i<\ell\).

\paragraph{Proving the theorem in the Poissonized model. } We now have all the pieces necessary to prove the theorem in the Poissonized model. 

\begin{lemma}[Poissonized static placement]
    \label{lem:poissonized-static-placement}
    In the Poissonized model, the conclusions of the Static Placement Theorem hold
    with probability at least \(1-n^{-D}\), for any desired constant \(D>0\),
    provided the constant $c$ in \(B=c\log n\) is chosen sufficiently large.
    \end{lemma}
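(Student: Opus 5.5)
The plan is to show that, with high probability, the proxy process and the real Poissonized process coincide. Then we read off both conclusions of the Static Placement Theorem from concentration of the proxy quantities. We define a good event $\mathcal G$ as the intersection of three families of events. First, $\widehat A_q = A_q$ for every non-final bin $q$. Second, $\widehat Y_q \in [0,B]$ for every non-final bin $q$, so that clipping is inactive. Third, $\widehat F_q\in[B/4,3B/4]$ for every bin $q$ in layers $L_2,\dots,L_\ell$. We bound the failure probability of each event separately and union bound over the $O(n/B)$ bins.

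\textbf{Step 1: the coupling and clipping events.} By Lemma~\ref{lem:poisson-subg-coupling}, $\Pr[\widehat A_q\neq A_q]\le e^{-c_{\mathrm P}\mu_q}$. Here $\mu_q\in\{\tfrac34B,\tfrac54B\}$, so this probability is $n^{-\Omega(c)}$. For the clipping event, Lemma~\ref{lem:proxy-variance-invariant} gives that $\widehat Y_q$ is sub-Gaussian with variance at most $CB$. Lemma~\ref{lem:proxy-mean-invariant}, together with the bound $\max_i\Delta_i\le Bn^{-10}$, gives $\E[\widehat Y_q]=\tfrac14B\pm o(1)$: for $L_1$ this is exact, and for $i\ge2$ it equals $2m_{i-1}-\tfrac14B$. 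Lemma~\ref{lem:subg-tail-mgf} then bounds the probability that $\widehat Y_q$ deviates by $B/5$ by $2e^{-B/(50C)}=n^{-\Omega(c)}$.

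\textbf{Step 2: the forwarded-load event.} We have $\E[\widehat F_q]=2m_{i-1}=\tfrac12B\pm o(1)$. Lemma~\ref{lem:proxy-variance-invariant} gives sub-Gaussian variance $(\tfrac23C+\tfrac32)B$. A deviation of $B/4-o(1)$ therefore has probability $e^{-\Omega(B)}=n^{-\Omega(c)}$. This works for final-layer bins too, since their predecessors lie in $L_{\ell-1}$, where the invariant is already established. Taking $c$ large relative to $D$ makes the union bound at most $n^{-D}$.

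\textbf{Step 3: transfer to the real process.} On $\mathcal G$, we argue by induction over the layers that the real process and the proxy process agree, for a suitable coupling of the $h_2$ choices. In $L_1$, $F_q=\widehat F_q=0$ and $A_q=\widehat A_q$. The real overflow is $\max\{0,F_q+A_q-B\}$, which equals $\widehat V_q$ because $\widehat Y_q\in[0,B]$. For the forwarding, conditioned on the overflow count the real destinations are i.i.d.\ uniform over the three successors, since the forwarding choice is oblivious to $h_2$. So we couple the $h_2$ choices of the real forwarded elements with the proxy multinomial splits, giving $H_{q',q}=\widehat H_{q',q}$ and hence $F_q=\widehat F_q$. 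This real-side multinomial coupling is the step I expect to need the most care: one must check that coupling layer by layer yields the same joint law as the real process. Obliviousness handles this, because each element's $h_2$ is an independent uniform choice among three options, used only once.

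Given the agreement, conclusion 1 is exactly the third family of events in $\mathcal G$. For conclusion 2, every non-final bin has $F_q+A_q\ge B$ (since $\widehat Y_q\ge0$), so it is full. Every final-layer bin receives $F_q\le 3B/4<B$ elements and so holds all of them. Hence free slots exist only in $L_\ell$, and no element is left unplaced.
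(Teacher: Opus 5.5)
Your proposal is correct and follows the paper's proof essentially step for step. It uses the same good event $\mathcal G$ (coupling agreement $\widehat A_q=A_q$, $\widehat Y_q\in[0,B]$, $\widehat F_q\in[B/4,3B/4]$), the same sub-Gaussian tail bounds from the variance and mean invariants, the same union bound over $O(n/B)$ bins, and the same reading-off of the two conclusions, while your Step 3 additionally spells out, via obliviousness of the forwarding rule, the coupling of real $h_2$ choices with the proxy's multinomial splits that the paper only asserts.
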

    
    \begin{proof}
    Let \(\mathcal G\) be the event that all of the following hold simultaneously.
    
    \begin{enumerate}
        \item For every non-final bin \(q\),
        \[
            \widehat A_q=A_q.
        \]
    
        \item For every bin \(q\in L_i\), \(2\le i\le \ell\),
        \[
            \frac14B\le \widehat F_q\le \frac34B.
        \]
    
        \item For every non-final bin \(q\),
        \[
            0\le \widehat Y_q\le B.
        \]
    \end{enumerate}
    Note that, should $\mathcal G$ occur, then the proxy process and the true process behave exactly the same. Indeed, the only differences between the two processes are that (1) each $A_q$ is replaced with $\widehat A_q$, and (2) the number of balls $\widehat V_q$ that overflow out of a bin $q$ is artificially truncated to at most $B$. If $\mathcal G$ occurs, then these differences never matter. 
    
    Now, let us reason about the probability of $\mathcal G$ failing. There are \(O(n/B)\) bins in total. The first condition fails for a fixed bin
    with probability \(e^{-\Omega(B)}\), by the Poisson Sub-Gaussian Coupling Lemma
    (Lemma~\ref{lem:poisson-subg-coupling}).
    
    For the second condition, the variance and mean invariants imply
    \[
        \mathbb E[\widehat F_q]
        =
        \frac12B+o(B),
    \]
    and
    \[
        \widehat F_q-\mathbb E[\widehat F_q]
        \in \mathrm{SG}(O(B)).
    \]
    Thus
    \[
        \Pr\left[
            \widehat F_q\notin\left[\frac14B,\frac34B\right]
        \right]
        \le e^{-\Omega(B)}.
    \]
    
    For the third condition, the variance and mean invariants imply
    \[
        \mathbb E[\widehat Y_q]
        =
        \frac14B+o(B),
    \]
    and
    \[
        \widehat Y_q-\mathbb E[\widehat Y_q]
        \in \mathrm{SG}(O(B)).
    \]
    Thus
    \[
        \Pr[\widehat Y_q<0]+\Pr[\widehat Y_q>B]
        \le e^{-\Omega(B)}.
    \]
    
    By a union bound over all \(O(n/B)\) bins, the event \(\mathcal G\) holds with
    probability at least
    \[
        1-O(n/B)e^{-\Omega(B)}.
    \]
    Choosing \(c\) sufficiently large makes this at least \(1-n^{-D}\).
    
    It remains to check that, on \(\mathcal G\), the proxy process (and therefore the true process) satisfies the theorem's conclusions. Because the two processes are identical on $\mathcal G$, we will use the proxy-process variables and the true-process variables interchangeably. On \(\mathcal G\), every forwarded element is successfully placed,
    because every bin receives at most
    \[
        \frac34B<B
    \]
    forwarded elements. Also, every non-final bin fills during its own phase,
    because
    \[
        0\le \widehat Y_q
    \]
    implies
    \[
        F_q+A_q\ge B.
    \]
    Finally, from the second condition of $\mathcal G$, we have
    \[
        \frac14B\le F_q\le \frac34B
    \]
    for every bin after the first layer, which is the final property required by the theorem.
    \end{proof}

\paragraph{De-Poissonizing. }
It remains to de-Poissonize. Let \(\mathcal E\) be the event that the theorem's
three conclusions hold. In the Poissonized model, conditioned on \(N=n\), the
hash choices have exactly the same distribution as in the original fixed-\(n\)
model. Hence
\[
    \mathbb P_n[\mathcal E^c]
    =
    \mathbb P_{\mathrm{Pois}}[\mathcal E^c\mid N=n].
\]
Therefore
\[
    \mathbb P_n[\mathcal E^c]
    \le
    \frac{\mathbb P_{\mathrm{Pois}}[\mathcal E^c]}{\mathbb P[N=n]}.
\]
By Stirling's formula,
\[
    \mathbb P[N=n]
    =
    e^{-n}\frac{n^n}{n!}
    =
    \Theta(n^{-1/2}).
\]
Thus, if the Poissonized failure probability is at most \(n^{-D-1}\), then the
fixed-\(n\) failure probability is at most \(n^{-D-1/2}\). Since \(D\) was
arbitrary, the theorem follows with failure probability \(1/\poly(n)\).

\subsection{Analyzing \(n' < n\) Keys}\label{sec:underloaded}

We now consider what happens when the table, constructed for capacity parameter
\(n\), is used to store only \(n'\le n\) keys. The hash functions and layer
probabilities are still those defined using the original parameter \(n\). Thus,
compared to the preceding section, the table layout is unchanged, but fewer
keys are inserted. Our goal is to prove Theorem~\ref{thm:static-placement-underloaded},
restated below for convenience.

\staticplacementunderloaded*

\begin{proof}
We prove the theorem for a fixed value of \(n'\le n\). We begin by arguing that the first two conclusions are monotonic consequences of the Static Placement Theorem.

Fix the oblivious tie-breaking rule used inside each bin: among elements with
the same first-choice bin, the rule deciding which elements stay and which are
forwarded is independent of their second hashes. Under any such fixed rule, the
placement process is monotone under deletion of keys. More precisely, if we run
the process on a set \(S\) of keys and on a subset \(S'\subseteq S\), then, by
induction over the phases, every bin receives no more forwarded elements in the
run on \(S'\) than in the run on \(S\). Indeed, in a given bin, deleting keys
can only decrease the number of elements already present and can only decrease
the number of primary arrivals. Therefore the set of primary arrivals forwarded
out of that bin is a subset of the set that would have been forwarded in the
larger run.

Now couple the \(n'\)-key instance with an \(n\)-key instance by adding
\(n-n'\) independent extra keys. Whenever the \(n\)-key instance satisfies the
Static Placement Theorem, the \(n'\)-key instance also has no failed forwarded
placement, and every $F_q$ for every bin $q$ in layers $L_i$, $i > 1$, is at most $\frac{3B}{4}$. Moreover, since every bin in the final layer has at most as many
elements in the \(n'\)-key run as in the \(n\)-key run, and every final-layer
bin has at most \(3B/4\) forwarded elements in the successful \(n\)-key run, all
final-layer bins have free slots in the \(n'\)-key run. Thus the first two
claims hold with probability \(1-1/\poly(n)\).

It remains to prove the third claim. We use the same Poissonization and proxy
process as in the proof of the Static Placement Theorem. Namely, we first
consider a Poissonized model with
\[
    N'\sim \mathrm{Pois}(n')
\]
keys, and then de-Poissonize at the end. The proxy process is defined exactly
as before, except that the primary-arrival variables now have smaller means:
if
\[
    \alpha=\frac{n'}{n}\le 1,
\]
then for \(q\in L_1\),
\[
    \mathbb E[A_q]=\alpha\cdot \frac54B,
\]
and for \(q\in L_i\), \(2\le i<\ell\),
\[
    \mathbb E[A_q]=\alpha\cdot \frac34B\le \frac34B.
\]
The variance argument from Lemma~\ref{lem:proxy-variance-invariant} continues
to apply without change, since it only uses the fact that the primary-arrival
variables have sub-Gaussian variance \(O(B)\), and these proxies only get
smaller when \(\alpha\le 1\).

In particular, there is a constant \(C_0>0\) such that, in the Poissonized proxy
process, for every directed predecessor edge \(q'\to q\),
\[
    \widehat H_{q',q}
    \in
    \mathrm{SG}\left(
        \mathbb E[\widehat H_{q',q}],
        C_0B
    \right),
\]
and for every non-final bin \(q\),
\[
    \widehat A_q
    \in
    \mathrm{SG}\left(
        \mathbb E[\widehat A_q],
        C_0B
    \right).
\]
Moreover, the proxy process agrees with the true process with probability
\(1-1/\poly(n)\), by the same comparison argument as in
Lemma~\ref{lem:poissonized-static-placement}: the probabilities that
\(\widehat A_q\ne A_q\), that an upper truncation is active, or that a forwarded
load exceeds \(B\), are all \(e^{-\Omega(B)}\) per bin, and there are only
\(O(n/B)\) bins.

Let
\[
    \xi=\frac1{100}.
\]
Since \(B=c\log n\), by taking \(c\) sufficiently large and applying a union
bound over all bins and all predecessor edges, we may assume that the following
events hold simultaneously with probability \(1-1/\poly(n)\):

\begin{enumerate}
    \item The proxy process and the true process agree.

    \item For every directed predecessor edge \(q'\to q\),
    \[
        \left|
            H_{q',q}
            -
            \mathbb E[\widehat H_{q',q}]
        \right|
        \le \xi B.
    \]

    \item For every bin \(q\in L_i\), \(2\le i<\ell\),
    \[
        A_q\le \left(\frac34+\xi\right)B.
    \]
\end{enumerate}
For final-layer bins, \(A_q=0\), so the third condition is not needed there.

We now show that these events imply the desired structural property. Fix
\(q\in L_i\) with \(2\le i\le \ell\). If \(q\) has a free slot, there is nothing
to prove. So suppose \(q\) is full.

By the second conclusion already proved, every final-layer bin has a free slot.
Hence \(q\) cannot lie in \(L_\ell\); we must have \(i<\ell\). Since \(q\) is
full,
\[
    A_q+\sum_{q'\in\mathsf{Pred}(q)} H_{q',q}\ge B.
\]
By the third good event,
\[
    A_q\le \left(\frac34+\xi\right)B.
\]
Therefore
\[
    \sum_{q'\in\mathsf{Pred}(q)} H_{q',q}
    \ge
    \left(\frac14-\xi\right)B.
\]

By symmetry, the six random variables
\[
    \widehat H_{q',q},
    \qquad q'\in\mathsf{Pred}(q),
\]
have the same expectation. Let this common expectation be \(m_q\). The second
good event gives
\[
    H_{q',q}\le m_q+\xi B
    \qquad
    \text{for every }q'\in\mathsf{Pred}(q).
\]
Since \(|\mathsf{Pred}(q)|=6\),
\[
    6(m_q+\xi B)
    \ge
    \sum_{q'\in\mathsf{Pred}(q)} H_{q',q}
    \ge
    \left(\frac14-\xi\right)B.
\]
Thus
\[
    m_q
    \ge
    \left(\frac{1}{24}-\frac{7}{6}\xi\right)B.
\]
Using \(\xi=1/100\), this gives
\[
    m_q\ge \frac{3}{100}B.
\]
Applying the second good event again,
\[
    H_{q',q}\ge m_q-\xi B\ge \frac{2}{100}B
    \qquad
    \text{for every }q'\in\mathsf{Pred}(q).
\]
Thus the theorem holds with, for example,
\[
    \eta=\frac1{50}.
\]

This proves the theorem in the Poissonized model with failure probability
\(1/\poly(n)\). To pass to the fixed-\(n'\) model, condition on the event
\(N'=n'\). For \(n'\ge 1\),
\[
    \Pr[N'=n']=\Theta\left((n')^{-1/2}\right)\ge n^{-1/2}.
\]
Thus a polynomially small failure probability in the Poissonized model remains
polynomially small after conditioning. This
completes the proof.
\end{proof}

\section{Conclusion}
The price of cuckoo hashing's $O(\log\epsilon^{-1})$ worst-case queries has always been slow insertions: the best known bound was $O(\epsilon^{-1})$ expected time~\cite{KM25}, which was conjectured to be optimal up to polylogarithmic factors~\cite{bender2024tight}. Braided cuckoo hashing removes this cost, supporting insertions and deletions in worst-case time $O(\log\epsilon^{-1})$ while preserving the worst-case query time. Perhaps the largest remaining impediment to realizing these gains in practice is the constant factor on the number of probes a query makes: $d$-ary cuckoo hashing~\cite{FP12,xorsatthresholds} supports queries using $(1+o(1))\ln\epsilon^{-1}$ probes, which is optimal even among static hash tables by a simple counting argument. This leaves the following important open question: is an insertion/deletion time of $O(\log\epsilon^{-1})$ possible such that all queries use $(1+o(1))\ln\epsilon^{-1}$ probes? 

The following additional questions also warrant investigation. Can braided cuckoo hashing be extended to support (i) dynamic resizing? (ii) an \emph{expected} query time of $O(1)$? (iii) a load factor of exactly $1$? (iv) dynamically adapting insertion, deletion, and/or query times as $\epsilon$ varies? We expect that (i) can be implemented using the same basic approach as in \cite{bender2024tight}, but there are many details one would need to work out, so we have deferred it to future work. The construction of \cite{bender2024tight} also provides a potential approach to (ii): by using their construction in each bin, it would seem the expected query time could be brought to $O(1)$. However, this would increase the expected insertion time by a factor $O(\log\log B) = O(\log\log\log\epsilon^{-1})$ and the worst-case insertion time by a potentially much larger factor. (iii) and (iv) would seem to require new techniques.

\section*{Acknowledgments}

\ifanonymous\else
This work was partially supported by the Jeanne~B.\ and Richard~F.\ Berdik ARCS Pittsburgh Endowed Scholar Award, NSF grants CNS-2504471 and CCF-2542165, a Jane Street Research Grant, the David and Lucile Packard Foundation (Award \#202071730), and gifts from Nikolai Mushegian and Microsoft.

\fi
Several of the technical sections in this paper were written with the help of GPT 5.5 Pro, which was used to turn detailed section outlines into initial drafts. These initial drafts were then heavily edited by the authors of the paper. Claude Opus 5.5 was used to check the paper for grammatical errors and to polish the writing.

\bibliographystyle{alpha}
\bibliography{references}

\appendix

\section{Proof of Theorem~\ref{thm:general-capacities}}
\label{app:general-capacities}

In this section, we present the proof of Theorem \ref{thm:general-capacities}.

\generalcaps*

\begin{proof}

Let $c$ be a sufficiently large positive constant, chosen so that $c \log n$ is a power of two, and define $B = c \log n$.
The data structure will consist of $\Theta(n^{1/5})$ sub-structures, each with capacity $N = \Theta(n^{4/5})$, and each using bin size $B = \Theta(\log n) = \Theta(\log N)$. Each sub-structure will be implemented, as a black box, using Theorem \ref{thm:dynamic-whi} (or, if one wants strong history independence, Theorem \ref{thm:dynamic-shi}). 
We refer to the sub-structures as \defn{blocks}. Each element will hash to a random block and interact only with that block. So long as the blocks are weakly history-independent (resp. strongly history-independent),  the data structure will also be weakly history-independent (resp. strongly history-independent).

The main thing to be careful about is to select the block size $N$ to satisfy the Divisibility Assumption. Set
\[
    \ell_0=\left\lfloor \frac15\log_2 n\right\rfloor,
    \qquad
    K_0=\left\lfloor \frac35\log_2 n-\log_2 B\right\rfloor,
\]
and define
\[
    N=B2^{K_0}\left(2^{\ell_0}-\frac32\right) = \Theta(n^{4/5}).
\]
By design, the Divisibility Assumption is satisfied, since
\[
    \ell_0=\frac14\log_2 N+O(1),
    \qquad
    K_0=\frac34\log_2 N- \log_2 B \pm O(1).
\]
Each block has capacity $N$ and has $M = N + \Theta(N^{3/4}) = N + \Theta(n^{3/5})$ slots. 
Moreover, so long as we can guarantee that the block never has more than 
$N$ live elements at a time, then Theorem \ref{thm:dynamic-whi} (or Theorem \ref{thm:dynamic-shi})
guarantees $O(\log n)$-time operations with high probability in $N$ (and therefore also with high probability in $n$).

Next we determine \emph{how many} blocks to use. We want to have enough blocks that, with high probability, 
each block stays below its capacity $N$ at all times. To do this, set
\[
    \Delta=\sqrt{N}\log n,
    \qquad
    r=\left\lceil \frac{n}{N-\Delta}\right\rceil.
\]
The global table consists of
\(r\) blocks.  The number
of slots used is \(m_0=rM\).  Since \(\Delta=o(N)\),
\[
    rN\le \left(\frac{n}{N-\Delta}+1\right)N
       = n+O\left(\frac{n\Delta}{N}+N\right)
       = n+O\left(\frac{n \cdot (\sqrt{n^{4/5}} \log n)}{n^{4/5}}+n^{4/5}\right) = n+O(n^{4/5}).
\]
The total number of slots $m_0 = rM$ therefore satisfies
$$m_0 \le n + O(n^{4/5}) + r \cdot (M - N) = n + O(n^{4/5}) + O(n^{1/5} \cdot n^{3/5}) = n + O(n^{4/5}).$$

To prove the theorem, it remains to check that, during a sequence of $T = n^{O(1)}$ insertions and deletions,
there is (with high probability in $n$) never a point in time at which some block has more than $N$ live keys
at a time. Since the sequence is only of polynomial length, and since there are $O(n/ B)$ blocks,
 it suffices to consider the state of the hash table at a fixed moment in time, to consider a fixed block $j$,
 and to show that, with high probability in $n$, there are at most $N$ keys that hash to $j$.

Let
$
    \ell:U\to\{1,\ldots,r\}
$
be the hash function used to map elements to blocks. For a fixed block
\(j\), the number $K$ of live keys \(x\) with \(\ell(x)=j\) is a sum of i.i.d.\ Bernoulli
random variables with total mean at most $n/r \le N - \Delta = N - \sqrt{N} \log n$. By a Chernoff bound,
$$\Pr[K \ge N] \le \Pr[K \ge \E[K] + \sqrt{N} \log n] \le e^{-\Omega(\log^2 n)} \le 1 / \poly(n).$$
Therefore, with high probability in $n$, there are at most $N$ keys that hash to the block at a time. This completes the proof.
\end{proof}

\section{Proof of Lemma~\ref{lem:subg-tail-mgf}}
\label{app:subg-tail-mgf}

\subgtailmgf*

\begin{proof}
Let \(X:=Y-\mu\). Then \(\mathbb{E}[X]=0\).

For the first part, assume \(X\in \mathrm{SG}(\sigma^2)\). For every
\(\lambda>0\), Markov's inequality gives
\[
    \mathbb{P}[X\ge t]
    =
    \mathbb{P}[e^{\lambda X}\ge e^{\lambda t}]
    \le
    e^{-\lambda t}\mathbb{E}[e^{\lambda X}]
    \le
    \exp\left(-\lambda t+\frac{\lambda^2\sigma^2}{2}\right).
\]
Optimizing over \(\lambda>0\), with \(\lambda=t/\sigma^2\), gives
\[
    \mathbb{P}[X\ge t]
    \le
    \exp\left(-\frac{t^2}{2\sigma^2}\right).
\]
Applying the same argument to \(-X\), which is also in
\(\mathrm{SG}(\sigma^2)\), gives
\[
    \mathbb{P}[X\le -t]
    \le
    \exp\left(-\frac{t^2}{2\sigma^2}\right).
\]
Therefore, by the union bound,
\[
    \mathbb{P}[|X|\ge t]
    \le
    2\exp\left(-\frac{t^2}{2\sigma^2}\right).
\]

For the converse, assume that
\[
    \mathbb{P}[|X|\ge t]
    \le
    2\exp\left(-\frac{t^2}{v}\right)
    \qquad \text{for all } t\ge 0.
\]
We first derive a moment bound. For every integer \(k\ge 1\),
\[
\begin{aligned}
    \mathbb{E}[|X|^k]
    &=
    \int_0^\infty
        \mathbb{P}[|X|^k\ge s]\,ds \\
    &=
    k\int_0^\infty
        t^{k-1}\mathbb{P}[|X|\ge t]\,dt \\
    &\le
    2k\int_0^\infty
        t^{k-1}\exp\left(-\frac{t^2}{v}\right)\,dt.
\end{aligned}
\]
Changing variables \(u=t^2/v\), we get
\[
    \mathbb{E}[|X|^k]
    \le
    k v^{k/2}\Gamma(k/2).
\]
Using the standard estimate \(\Gamma(k/2)\le C_0^k k^{k/2}\)
for an absolute constant \(C_0>0\), this implies
\[
    \mathbb{E}[|X|^k]
    \le
    (C_1\sqrt{vk})^k
\]
for an absolute constant \(C_1>0\).

Now fix \(\lambda\in\mathbb{R}\). Since \(\mathbb{E}[X]=0\),
\[
\begin{aligned}
    \mathbb{E}[e^{\lambda X}]
    &=
    1+\sum_{k=2}^\infty
        \frac{\lambda^k\mathbb{E}[X^k]}{k!} \\
    &\le
    1+\sum_{k=2}^\infty
        \frac{|\lambda|^k\mathbb{E}[|X|^k]}{k!} \\
    &\le
    1+\sum_{k=2}^\infty
        \frac{(C_1|\lambda|\sqrt{vk})^k}{k!}.
\end{aligned}
\]
Using Stirling's lower bound \(k!\ge (k/e)^k\), we have
\[
    \frac{(C_1|\lambda|\sqrt{vk})^k}{k!}
    \le
    \left(\frac{C_2|\lambda|\sqrt v}{\sqrt k}\right)^k
\]
for an absolute constant \(C_2>0\). A standard summation estimate gives
\[
    \sum_{k=2}^\infty
        \left(\frac{C_2|\lambda|\sqrt v}{\sqrt k}\right)^k
    \le
    \exp(C_3\lambda^2 v)-1
\]
for an absolute constant \(C_3>0\). Therefore
\[
    \mathbb{E}[e^{\lambda X}]
    \le
    \exp(C_3\lambda^2 v).
\]
Equivalently,
\[
    \mathbb{E}[e^{\lambda(Y-\mu)}]
    \le
    \exp\left(\frac{\lambda^2(Cv)}{2}\right)
\]
for \(C=2C_3\). Hence \(Y\in \mathrm{SG}(\mu,Cv)\), as claimed.
\end{proof}

\section{Proof of Lemma~\ref{lem:clipping-subg}}
\label{app:clipping-subg}

\clippingsubg*

\begin{proof}
Let
\[
    \phi(x)=\max\{0,\min\{x,B\}\}.
\]
The map \(\phi\) is nondecreasing and \(1\)-Lipschitz. We use the standard
one-dimensional contraction principle: if \(\phi\) is nondecreasing and
\(1\)-Lipschitz, then
\[
    \phi(X)-\mathbb E[\phi(X)]
\]
is dominated by
\[
    X-\mathbb E[X]
\]
in convex order. Therefore, for every convex function \(\Psi\),
\[
    \mathbb E\left[
        \Psi\left(\phi(X)-\mathbb E[\phi(X)]\right)
    \right]
    \le
    \mathbb E\left[
        \Psi\left(X-\mathbb E[X]\right)
    \right].
\]
Applying this with
\[
    \Psi(z)=e^{\lambda z},
\]
for arbitrary \(\lambda\in\mathbb R\), gives
\[
    \mathbb E\left[
        e^{\lambda(\clip_B(X)-\mathbb E[\clip_B(X)])}
    \right]
    \le
    \mathbb E\left[
        e^{\lambda(X-\mu)}
    \right]
    \le
    \exp\left(\frac{\lambda^2v}{2}\right).
\]
Hence
\[
    \clip_B(X)\in \mathrm{SG}\left(\mathbb E[\clip_B(X)],v\right).
\]

We now prove the two additional bounds. Assume \(v\le CB\) and
\[
    \mu\in\left[\frac15B,\frac45B\right].
\]
Then both clipping thresholds \(0\) and \(B\) are at distance at least \(B/5\)
from \(\mu\). By the sub-Gaussian tail bound,
\[
    \Pr[\clip_B(X)\ne X]
    =
    \Pr[X<0]+\Pr[X>B]
    \le
    4\exp\left(-\frac{(B/5)^2}{2CB}\right)
    =
    4\exp\left(-\frac{B}{50C}\right).
\]

It remains to bound the change in expectation. Since clipping changes \(X\)
only when \(X<0\) or \(X>B\),
\[
    \left|\mathbb E[\clip_B(X)]-\mu\right|
    =
    \left|\mathbb E[\clip_B(X)-X]\right|
    \le
    \mathbb E[(-X)\mathbf 1_{X<0}]
    +
    \mathbb E[(X-B)\mathbf 1_{X>B}].
\]
For the lower tail, using the tail-integral formula,
\[
    \mathbb E[(-X)\mathbf 1_{X<0}]
    =
    \int_0^\infty \Pr[X<-t]\,dt.
\]
Since \(\mu\ge B/5\), for all \(t\ge 0\),
\[
    \Pr[X<-t]
    =
    \Pr[X-\mu<-(t+\mu)]
    \le
    \exp\left(-\frac{(t+B/5)^2}{2CB}\right).
\]
Therefore
\[
    \mathbb E[(-X)\mathbf 1_{X<0}]
    \le
    \int_0^\infty
    \exp\left(-\frac{(t+B/5)^2}{2CB}\right)\,dt.
\]
Using
\[
    (t+B/5)^2
    \ge
    (B/5)^2+\frac{2B}{5}t,
\]
we get
\[
    \mathbb E[(-X)\mathbf 1_{X<0}]
    \le
    \exp\left(-\frac{B}{50C}\right)
    \int_0^\infty \exp\left(-\frac{t}{5C}\right)\,dt
    =
    5C\exp\left(-\frac{B}{50C}\right).
\]

Similarly, since \(B-\mu\ge B/5\), the upper-tail integral yields
\[
    \mathbb E[(X-B)\mathbf 1_{X>B}]
    \le
    5C\exp\left(-\frac{B}{50C}\right)
\]
by the same calculation as above. Consequently,
\[
    \left|\mathbb E[\clip_B(X)]-\mu\right|
    \le
    10C\exp\left(-\frac{B}{50C}\right).
\]

Now choose \(\gamma=\frac{1}{100C}\). For all sufficiently large \(B\),
\[
    4\exp\left(-\frac{B}{50C}\right)
    \le
    e^{-\gamma B}
    \qquad\text{and}\qquad
    10C\exp\left(-\frac{B}{50C}\right)
    \le
    B e^{-\gamma B}.
\]
Thus
\[
    \Pr[\clip_B(X)\ne X]\le e^{-\gamma B},
    \text{ and }
    \left|\mathbb E[\clip_B(X)]-\mu\right|
    \le B e^{-\gamma B}.
\]
\end{proof}

\section{Proof of Lemma~\ref{lem:poisson-subg-coupling}}
\label{app:poisson-subg-coupling}

\poissonsubgcoupling*

\begin{proof}
Fix a sufficiently large absolute constant \(K>1\), for example \(K=4\), and
let
\[
    A=\{X\le K\mu\}.
\]
By the standard Chernoff bound for Poisson random variables,
\[
    \mathbb{P}(A^c)=\mathbb{P}(X>K\mu)\le e^{-c_K\mu}
\]
for some constant \(c_K>0\).

Define
\[
    m
    :=
    \frac{\mathbb{E}\left[X\mathbf{1}_{A^c}\right]}
         {\mathbb{P}(A^c)}
    =
    \mathbb{E}[X\mid A^c],
\]
when \(\mathbb{P}(A^c)>0\). Standard Poisson tail estimates imply
\[
    m \le C_K \mu
\]
for an absolute constant \(C_K\).

First define a possibly non-integer random variable
\[
    G
    =
    \begin{cases}
        X, & X\in A,\\
        m, & X\notin A.
    \end{cases}
\]
Then
\[
    \mathbb{E}[G]
    =
    \mathbb{E}[X\mathbf{1}_{A}]
    +
    m\mathbb{P}(A^c)
    =
    \mathbb{E}[X\mathbf{1}_{A}]
    +
    \mathbb{E}[X\mathbf{1}_{A^c}]
    =
    \mathbb{E}[X]
    =
    \mu.
\]
Also,
\[
    \mathbb{P}(G=X)\ge \mathbb{P}(A)\ge 1-e^{-c_K\mu}.
\]

It remains to note that \(G-\mu\) is sub-Gaussian with variance
\(O(\mu)\). Indeed, \(G\) agrees with \(X\) except on the event \(A^c\), whose
probability is \(e^{-\Omega(\mu)}\), and \(G\) is supported on an interval of
length \(O(\mu)\). Furthermore, for deviations \(0\le t\le O(\mu)\), the tails
of \(G-\mu\) are bounded by the corresponding Poisson Chernoff tails, up to the
additional exceptional mass \(e^{-\Omega(\mu)}\). Therefore
\[
    \mathbb{P}(|G-\mu|\ge t)
    \le
    2\exp\left(-\Omega\left(\frac{t^2}{\mu}\right)\right)
    \qquad \text{for all } t\ge 0.
\]
By Lemma~\ref{lem:subg-tail-mgf}, this implies
\[
    G-\mu \in \mathrm{SG}(C\mu)
\]
for an absolute constant \(C>0\).

Finally, to make \(G\) integer-valued, replace the value \(m\) on \(A^c\) by an
independent two-point random variable \(Z\in\{\lfloor m\rfloor,\lceil m\rceil\}\)
satisfying
\[
    \mathbb{E}[Z]=m.
\]
Using \(Z\) instead of \(m\) preserves the identity \(\mathbb{E}[G]=\mu\), preserves
the coupling probability, and only changes constants in the sub-Gaussian proxy.
\end{proof}

\section{Proof of Lemma~\ref{lem:subg-binomial-thinning}}
\label{app:subg-binomial-thinning}

\subgbinomialthinning*

\begin{proof}
First,
\[
    \mathbb{E}[H]
    =
    \mathbb{E}\left[\mathbb{E}[H\mid G]\right]
    =
    \mathbb{E}\left[\frac{G}{3}\right]
    =
    \frac{\mu}{3}.
\]

Fix \(\lambda\in\mathbb{R}\). We decompose
\[
    H-\frac{\mu}{3}
    =
    \left(H-\frac{G}{3}\right)
    +
    \frac{G-\mu}{3}.
\]
Conditioning on \(G\), and applying Lemma~\ref{lem:binomial-subg}, gives
\[
    \mathbb{E}\left[
        \exp\left(\lambda\left(H-\frac{G}{3}\right)\right)
        \,\middle|\, G
    \right]
    \le
    \exp\left(\frac{\lambda^2 G}{8}\right)
    \le
    \exp\left(\frac{\lambda^2 B}{8}\right).
\]
Therefore,
\[
\begin{aligned}
    \mathbb{E}\left[
        \exp\left(\lambda\left(H-\frac{\mu}{3}\right)\right)
    \right]
    &=
    \mathbb{E}\left[
        \exp\left(\frac{\lambda(G-\mu)}{3}\right)
        \mathbb{E}\left[
            \exp\left(\lambda\left(H-\frac{G}{3}\right)\right)
            \,\middle|\, G
        \right]
    \right] \\
    &\le
    \exp\left(\frac{\lambda^2 B}{8}\right)
    \mathbb{E}\left[
        \exp\left(\frac{\lambda(G-\mu)}{3}\right)
    \right].
\end{aligned}
\]
Since \(G\in \mathrm{SG}(\mu,v)\), Lemma~\ref{lem:subg-scaling} implies
\(\frac{G}{3}\in \mathrm{SG}\left(\frac{\mu}{3},\frac{v}{9}\right)\),
and hence
\[
    \mathbb{E}\left[
        \exp\left(\frac{\lambda(G-\mu)}{3}\right)
    \right]
    \le
    \exp\left(\frac{\lambda^2 v}{18}\right).
\]
Combining the two bounds,
\[
    \mathbb{E}\left[
        \exp\left(\lambda\left(H-\frac{\mu}{3}\right)\right)
    \right]
    \le
    \exp\left(
        \frac{\lambda^2}{2}
        \left(\frac{B}{4}+\frac{v}{9}\right)
    \right).
\]
Thus \(H\in \mathrm{SG}\left(\frac{\mu}{3},\frac{v}{9}+\frac{B}{4}\right)\),
as claimed.
\end{proof}

\end{document}